\documentclass[a4paper,UKenglish,cleveref, autoref, thm-restate]{lipics-v2021}
\pdfoutput=1 %uncomment to ensure pdflatex processing (mandatory e.g. to submit to arXiv)
\hideLIPIcs  %uncomment to remove references to LIPIcs series (logo, DOI, ...), e.g. when preparing a pre-final version to be uploaded to arXiv or another public repository

\title{Persistent Amortised Analysis, Operationally}

\author{Anton Lorenzen}{University of Edinburgh, United Kingdom}{anton.lorenzen@ed.ac.uk}{https://orcid.org/0000-0003-3538-9688}{}%mandatory, please use full name; only 1 author per \author macro; first two parameters are mandatory, other parameters can be empty. Please provide at least the name of the affiliation and the country. The full address is optional. Use additional curly braces to indicate the correct name splitting when the last name consists of multiple name parts.

\authorrunning{A. Lorenzen} %mandatory. First: Use abbreviated first/middle names. Second (only in severe cases): Use first author plus 'et al.'

\Copyright{Anton Lorenzen} %mandatory, please use full first names. LIPIcs license is "CC-BY";  http://creativecommons.org/licenses/by/3.0/

\ccsdesc[500]{Theory of computation~Data structures design and analysis}
\ccsdesc[300]{Software and its engineering~Functional languages}
\ccsdesc[300]{Theory of computation~Invariants}

\keywords{Lazy Data Structures, Amortised Analysis} % mandatory; please add comma-separated list of keywords

\category{Track B: Automata, Logic, Semantics, and Theory of Programming}

\relatedversion{} %optional, e.g. full version hosted on arXiv, HAL, or other repository/website
\acknowledgements{I thank Kengo Hirata, Francois Pottier, Wouter Swierstra, and Kim Worrall for their helpful feedback on an earlier version of this paper. I thank the participants of the AUTOSARD workshop 2025 for lively discussions.}%optional

\nolinenumbers %uncomment to disable line numbering

\EventEditors{Sayan Bhattacharya, Danupon Nanongkai, Michael Benedikt, and Gabriele Puppis}
\EventNoEds{4}
\EventLongTitle{53rd International Colloquium on Automata, Languages, and Programming (ICALP 2026)}
\EventShortTitle{ICALP 2026}
\EventAcronym{ICALP}
\EventYear{2026}
\EventDate{July 7--10, 2026}
\EventLocation{Royal Holloway, University of London, Egham, United Kingdom}
\EventLogo{}
\SeriesVolume{374}
\ArticleNo{162}
\usepackage{marvosym} % for \Lightning

\usepackage{bussproofs}
\usepackage{hyperref}
\usepackage{tikz}
\usepackage{tikz-cd}
\usetikzlibrary{positioning}

\newcommand{\bitem}[1]{\item {\textbf{#1}:}}
\newcommand{\llet}{\mathop{\mathrm{let}}}
\newcommand{\lin}{\mathop{\mathrm{in}}}
\newcommand{\case}{\mathop{\mathrm{case}}}
\newcommand{\inl}{\mathop{\mathrm{inl}}}
\newcommand{\inr}{\mathop{\mathrm{inr}}}
\newcommand{\fold}{\mathop{\mathrm{fold}}}
\newcommand{\unfold}{\mathop{\mathrm{unfold}}}
\newcommand{\force}{\mathop{\mathrm{force}}}
\newcommand{\lazy}{\mathrm{lazy}}
\newcommand{\memo}{\mathrm{memo}}
\newcommand{\strip}[1]{\mathopen{\raisebox{0.4ex}{$\ulcorner$}}\mkern-4mu#1\mkern-4mu\mathclose{\raisebox{0.4ex}{$\urcorner$}}}
\newcommand{\stripdot}{\strip{\mkern+4mu\cdot\mkern+4mu}}
\newcommand{\save}[2]{\mathrm{save}\,#1\,#2}
\newcommand{\spend}[3]{\mathrm{spend}\,#1\,\mathrm{from}\,#2\,\mathrm{on}\,#3}
\newcommand{\pass}[1]{\mathrm{pass}\; #1}

\begin{document}

\maketitle

\begin{abstract}
Amortised analysis is a technique for proving a combined time bound for a batch of operations on a data structure,
even if some of those operations are expensive.
But the traditional method of amortised analysis yields incorrect time bounds when the data structure is used persistently.
Persistence allows operations to be performed on previous versions of the data structure,
which prevents us from amortising expensive restructuring work.
In his seminal book, Chris Okasaki showed how to extend amortised analysis to persistent usage.
His method works by extending the data structure with \textit{thunks}
and performing the analysis with \textit{debits} rather than credits.
His argument, that credits are unsound for analysing persistent usage, has become folklore.

In this paper, we provide a new perspective on the role of debits in Okasaki's work.
First, we set up an operational semantics of call-by-value lambda calculus with thunks,
and show formally that traditional amortised analysis does not work in a persistent setting.
Then we show that, contrary to the folklore, credit-based amortised analysis
can be sound in a persistent setting as long as credits are only stored on thunks.
Finally, we provide a formal semantics for Okasaki's debit-based approach.
Our paper clarifies the formal foundation of Okasaki's work
and makes it accessible to a wider audience.
\end{abstract}

\section{Introduction}
\label{sec:intro}

Amortised analysis is a technique for proving a combined time bound for a batch of operations on a data structure,
even if some of those operations are expensive.
But the traditional method of amortised analysis only yields correct time bounds
if data structures are used \textit{sequentially} and fails if data structures are used \textit{persistently}.
A data structure is said to be used sequentially if each operation is applied only to the most recent version of the data structure,
as returned by the operation preceding it. This happens automatically in mutable data structures,
where operations overwrite the data structure and previous versions are no longer accessible.

In contrast, persistent data structures~\cite{Driscoll:persistent,Sarnak:persistent,Driscoll:catenation}
enable the programmer to access or update any previous version of the data structure at no additional cost.
This creates a more flexible programming model, but breaks amortised analysis.
Consider a version of a persistent data structure that is highly unbalanced.
In a traditional amortised analysis, this version is assigned \textit{credits},
which may be spent by a future operation to perform expensive restructuring work
and rebalance the data structure~\cite{Tarjan:amortisation}.
But credits may be spent on this work only if the previous version is not used again.
If the previous version is still accessible, each future operation on it
will have to perform the same expensive restructuring work again.
Since the stored credits can only pay for one of these operations,
traditional amortised analysis yields incorrect time bounds for data structures that are used persistently.

Okasaki~\cite{Okasaki:purefun} proposed that \textit{thunks} can be used to achieve amortised time bounds for data structures that are used persistently.
A thunk is a memory cell that may only be updated in a special, deterministic fashion:
It starts out in a \textit{lazy} state and may be updated to the \textit{memoised} state.
Its value in the memoised state must be a deterministic function of its value in the lazy state,
and it should be unobservable whether the update has already happened;
a program may ask to read the memoised value, but it will not learn whether an update was performed to obtain it.
This makes it possible to share restructuring work between different versions of a data structure.
If an operation memoises a thunk, it will be memoised in all previous versions of the data structure as well.
This does not change the behaviour of the previous versions, since the update is deterministic and unobservable.

In his seminal book~\cite{Okasaki:purefun}, Okasaki adds thunks to many well-known data structures
and shows that his variations enjoy good amortised time bounds, even when used persistently.
In his analyses, he reasons as if thunks were updated to the memoised state immediately after they are created.
However, he allows the program to access the memoised value only once the update has been paid for.
The cost of the update is recorded as \textit{debits} on the thunk.
A debit is a negative credit that costs one time step to reduce by one,
and the thunk can only be accessed once the debits have been reduced to zero.

This technique allows Okasaki to reason about persistent usage as if the data structure was only used sequentially.
That is, if a data structure observes a particular time bound in a sequential setting according to his reasoning style,
then it also observes the same time bound in a persistent setting.
He claims that this is due to his use of debits rather than credits:
``although savings can only be spent once, it does no harm to pay off debt more than once''~\cite[page~59]{Okasaki:purefun}.
This claim has become folklore and led debits to be the foundation of subsequent work~\cite{Danielsson:amortisation,Pottier:thunks,Jost:lazy,Mevel:timecredits,Pilkiewicz:monotonic}.

Okasaki only provides an informal argument for the correctness of his reasoning style,
which has raised an interest in formalising his approach.
Danielsson~\cite{Danielsson:amortisation} implements a variant of Okasaki's debit passing style in Agda.
He shows that it is \textit{sound}: the time bound proven using debit passing style
is an upper bound on the actual time taken by the program.
Pottier et al.~\cite{Pottier:thunks} formalise this variant of debit passing style in Iris
and additionally provide a formal foundation for reasoning about the future evaluation of thunks.
However, these approaches differ from (and improve upon) Okasaki's approach:
they assume that thunks are updated only when the program first requests their value.

With this improvement, it is clearer why one may reason about persistent usage as if the data structure was only used sequentially.
A thunk is only updated once, which makes it possible to store credits on thunks
and spend them on the update once it is requested~\cite{Danielsson:amortisation,Pilkiewicz:monotonic,Pottier:thunks,Mevel:timecredits}.
Pilkiewicz and Pottier~\cite{Pilkiewicz:monotonic} and M{\'e}vel et al.~\cite{Mevel:timecredits} prove that thunks are \textit{monotonic}: 
When reasoning about a persistent data structure, we may assume that a thunk is in the lazy state and requires further credits before it can be updated,
but if the thunk is already memoised, our reasoning will still be valid and simply over-approximate the cost.
This insight makes it possible to reason about persistent usage using credits rather than debits.
Nevertheless, their high-level reasoning principles are still based on debits.

In this paper, we provide a new perspective on the role of debits in Okasaki's work.

In \cref{sec:persistence}, we define the soundness and persistence of a reasoning principle
in terms of the operational semantics of call-by-value lambda calculus with thunks.
By deriving all of our results directly from an operational semantics,
we can state exactly when a reasoning principle works in a persistent setting.
As an example, we re-derive the fact that the traditional method of amortised analysis is not persistent.

In \cref{sec:credits}, we give operational semantics for credit passing style and credit inheritance.
Credit passing style is a variant of Danielsson's debit passing style, which places \textit{credits on lazy thunks}.
We show that it is sound and persistent in our operational model.
Contrary to the folklore, this shows that persistent amortised analysis can be performed purely in terms of credits.

In \cref{sec:debits}, we provide an operational semantics of Okasaki's persistent Bankers method.
Rather than store credits on lazy thunks, this method stores \textit{debits on memoised thunks}.
This models Okasaki's use of debits as a barrier or ``layaway plan'',
which prevents the program from accessing the thunk before it has been paid off.
We show that this is sound and persistent in our operational semantics.

Our earlier workshop paper~\cite{Lorenzen:creditmonad} contains an implementation of the
credit-based reasoning principles as a Haskell library for automated testing of time complexity bounds.
As part of that work, we have re-analysed the time complexity of all lazy data structures in Okasaki's book~\cite{Okasaki:purefun} using credits.
Our library is available online via the \href{https://hackage.haskell.org/package/creditmonad}{creditmonad} package
and on \href{https://github.com/anfelor/creditmonad}{GitHub}.

\section{Preliminaries}
\label{sec:prelim}

To model operations on persistent data structures, we consider a lambda calculus
with algebraic data types.
This is convenient, since the lambda calculus does not include any operations
that mutate data structures and so all data structures are persistent by default.

Our syntax is split into values, heap values and expressions.
Expressions denote computations, which can return a value and store heap values in the heap.

\begin{tabular}{l r l r l r l r}
$v,w$ & ::=    & $x,y,z$          & (variables)                &\quad $e$ & ::=      & $v$                             & (values)         \\
    & $\mid$ & $a,b,c$          & (pointers)                   &        &  $\mid$  & $hv$                            & (heap values) \\
    & $\mid$ & $()$             & (unit)                       &        &  $\mid$  & $\llet x = e \lin e$            & (let binding)           \\
    & $\mid$ & \multicolumn{2}{l}{$\inl v \mid \inr v\;\qquad$ (sum)} & &  $\mid$  & \multicolumn{2}{l}{$\case v \,\{ \inl x \mapsto e; \inr y \mapsto e \}\;\;\;\;\;\;$ (case split)}       \\
    & $\mid$ & $(v, v)$         & (pair)                       &        &  $\mid$  & $\llet (x, y) = v \lin e $           & (destruct a pair)       \\
    & $\mid$ & $\lambda x. e$   & (lambda)                     &        &  $\mid$  & $v\,w$                            & (lambda application)      \\
$hv$ & ::=   & $\fold v$        & (allocate)    &        &  $\mid$  & $\unfold v$                            & (dereference) \\       
$\mathcal{F}$ & ::= & \multicolumn{2}{l}{$\emptyset \mid \mathcal{F}, F(x) = e$}       &        &  $\mid$  & $F\,v$                                  & (function call)           \\
\end{tabular}

We distinguish between variables $x,y,z$ in computations and pointers $a,b,c$ in the heap.
Loosely speaking, sums allow us to choose between values (similar to enumerations or tagged unions),
while pairs allow us to combine values (similar to structs).
We will not need lambdas (which correspond to anonymous functions) in this article, but include them for completeness.
We define top-level functions in the $\mathcal{F}$ environment and call them using $F(v)$.
We do not include a fixpoint operator or other recursion scheme, and instead assume that top-level functions may be recursive.
We use $\fold$ to allocate an expression into the heap and $\unfold$ to look up a value from the heap.

Our syntax uses fine-grain call-by-value style~\cite{Levy:fgcbv}, where most primitives operate on values.
This makes it easier to specify semantic rules. However, one can easily desugar a more traditional syntax
into this style by introducing let-bindings where necessary.

\subsection{Big-Step Semantics}

The (worst-case) time complexity of our language is given by a big-step operational semantics.
We write $\Gamma : e \Downarrow_k \Delta : w$ to mean ``under heap $\Gamma$,
the expression $e$ evaluates to value $w$ with updated heap $\Delta$ in $k$ steps''.
A heap is defined as a mapping from pointers to heap values:
\begin{equation*}
\Gamma ::= \emptyset \mid \Gamma, a \mapsto hv
\end{equation*}

Our big-step semantics is specified using the inference rules below.
Each rule specifies that if the premises above the line hold,
then the conclusion below the line also holds.
We write $[v/x]$ for the capture-avoiding substitution that replaces all free occurrences of $x$ with $v$.
We may assume that each operation is performed on values of the correct shape:
for example, a case-split is performed on either an $\inl v$ or an $\inr v$ value.
If the shape of the value is incorrect, no rule applies and we say that the evaluation is stuck.
We only write $\Gamma : e \Downarrow_k \Delta : w$ if the evaluation is not stuck.

\begin{center}
  \AxiomC{\phantom{$\Downarrow$}}
  \RightLabel{(value)}
  \UnaryInfC{$\Gamma : v \Downarrow_0 \Gamma : v$}
  \DisplayProof
  \quad
  \AxiomC{$\Gamma : e_i[v/x_i] \Downarrow_k \Delta : w$}
  \RightLabel{(case)}
  \UnaryInfC{$\Gamma : \case (\lin_i v) \,\{ \lin_l x_l \mapsto e_l; \lin_r x_r \mapsto e_r \} \Downarrow_{k} \Delta : w$}
  \DisplayProof
\end{center}

\begin{center} 
  \AxiomC{$a \notin \mathrm{dom}(\Gamma)$}
  \RightLabel{(fold)}
  \UnaryInfC{$\Gamma : \fold v \Downarrow_0 \Gamma, a \mapsto \fold v : a$}
  \DisplayProof
  \quad
  \AxiomC{$\Gamma : e_1 \Downarrow_k \Delta : v$}
  \AxiomC{$\Delta : e_2[v/x] \Downarrow_l \Theta : w$}
  \RightLabel{(let)}
  \BinaryInfC{$\Gamma : \llet x = e_1 \lin e_2 \Downarrow_{k+l} \Theta : w$}
  \DisplayProof
\end{center}

\begin{center}
  \AxiomC{$a \mapsto \fold v \in \Gamma$}
  \RightLabel{(unfold)}
  \UnaryInfC{$\Gamma : \unfold a \Downarrow_0 \Gamma : v$}
  \DisplayProof
  \quad
  \AxiomC{$\Gamma : e[v_1/x, v_2/y] \Downarrow_k \Delta : w$}
  \RightLabel{(split)}
  \UnaryInfC{$\Gamma : \llet (x, y) = (v_1, v_2) \lin e \Downarrow_{k} \Delta : w$}
  \DisplayProof
\end{center}

\begin{center} 
  \AxiomC{$\Gamma : e[v/x] \Downarrow_k \Delta : w$}
  \RightLabel{(app)}
  \UnaryInfC{$\Gamma : (\lambda x. e) v \Downarrow_{k+1} \Delta : w$}
  \DisplayProof
  \qquad
  \AxiomC{$F(x) = e \in \mathcal{F}$}
  \AxiomC{$\Gamma : e[v/x] \Downarrow_k \Delta : w$}
  \RightLabel{(call)}
  \BinaryInfC{$\Gamma : F\,v \Downarrow_{k+1} \Delta : w$}
  \DisplayProof
\end{center}

In rules that allocate values into the heap, we require that the allocated pointer $a$ is fresh.
To simplify matters, it is convenient to assume that this choice is deterministic.
When we compare two evaluations of the same expression, we may assume that they allocate their heap values into the same memory cells.
In addition, we assume that the chosen pointer is globally fresh.
For example, when we have two evaluations of expressions $e_1$ and $e_2$ from the same heap,
we may combine them into an evaluation of $\llet x = e_1 \lin e_2$
by assuming that they never choose the same pointer for allocation.
It is possible (but laborious) to lift both assumptions through the explicit maintenance of substitutions of pointers.

This operational semantics closely approximates the cost of executing a functional program in practice.
Like Danielsson~\cite{Danielsson:amortisation}, we only count the steps that correspond to lambda application and function calls.
It can be shown that, for a fixed expression, call-by-value semantics can be simulated by a random access machine
using only constant overhead~\cite{Blelloch:parallelism}.

\subsection{Thunks}

To model thunks, we extend our syntax and semantics. We add two new heap values: lazy thunks and memoised thunks.
A lazy thunk holds a value and is annotated by the top-level function $F$ that should be applied to update the thunk.
A memoised thunk holds the value that was computed by the update.
We can force a thunk to obtain its memoised value, updating it if necessary.
In practice, the function $F$ is not stored in the thunk itself,
but inferred from its type during the force operation~\cite{Lorenzen:folazy}.

\begin{tabular}{l r l l l r l r}
$hv$ & ::= $\ldots \mid$ & $\memo\; v$   & (memoised value)             &\quad $e$  & ::= $\ldots \mid$ & $\force v$                          & (force thunk)         \\
     & $\mid$ & $\lazy_F\; v$            & (lazy computation)           &           &                                                       & \\
\end{tabular}

To extend our semantics with thunks, we use a variant of Launchbury's natural semantics~\cite{Launchbury:lazysem},
which makes thunks first-order~\cite{Lorenzen:folazy}.
We allocate a thunk in a fresh memory cell using the (lazy) and (memo) rules.
If a thunk is memoised, we retrieve its value using the (recall) rule at a cost of one step.
If a thunk is lazy, we have to force it using the (force) rule,
where we remove it from the heap, run the computation and then store the memoised value back into the heap.
Forcing itself is free, but involves a function call, which costs one step.

\begin{center}
    \AxiomC{$F \in \mathcal{F}$}
    \AxiomC{$a \not\in \mathrm{dom}(\Gamma)$}
    \RightLabel{(lazy)}
    \BinaryInfC{$\Gamma : \lazy_F\; v \Downarrow_0 (\Gamma, a \mapsto \lazy_F\; v) : a$}
    \DisplayProof
\end{center}
\begin{center}
    \AxiomC{$\Gamma : F\, v \Downarrow_k \Delta : w$}
    \RightLabel{(force)}
    \UnaryInfC{$(\Gamma, a \mapsto \lazy_F\; v) : \mathrm{force}\; a \Downarrow_{k} (\Delta, a \mapsto \memo\; w) : w$}
    \DisplayProof
\end{center}
\begin{center}
    \AxiomC{$a \not\in \mathrm{dom}(\Gamma)$}
    \RightLabel{(memo)}
    \UnaryInfC{$\Gamma : \memo\; v \Downarrow_0 (\Gamma, a \mapsto \memo\; v) : a$}
    \DisplayProof
    \qquad
    \AxiomC{$a \mapsto \memo\; v \in \Gamma$}
    \RightLabel{(recall)}
    \UnaryInfC{$\Gamma : \mathrm{force}\; a \Downarrow_0 \Gamma : v$}
    \DisplayProof
\end{center}

\begin{example}
Because the (force) rule changes thunks from lazy computations to memoised values,
repeatedly forcing a thunk is cheap. For example, we can combine several rules to derive:

\begin{center}
  % \RightLabel{(lazy)}
  % \UnaryInfC{$\Gamma : \lazy_F\; v \Downarrow_0 (\Gamma, a \mapsto \lazy_F\; v) : a$}
  \AxiomC{$\Gamma : F\, v \Downarrow_k \Delta : w$}
  \AxiomC{$a \notin \mathrm{dom}(\Gamma)$}
  % \RightLabel{(force)}
  % \UnaryInfC{$(\Gamma, a \mapsto \lazy_F\; v) : \mathrm{force}\; a \Downarrow_{k} (\Delta, a \mapsto \memo\; w) : w$}
  % \AxiomC{$a \mapsto \memo\; w \in (\Delta, a \mapsto \memo\; w)$}
  % \RightLabel{(recall)}
  % \UnaryInfC{$(\Delta, a \mapsto \memo\; w) : \mathrm{force}\; a \Downarrow_0 (\Delta, a \mapsto \memo\; w) : w$}
  % \RightLabel{(let)}
  % \BinaryInfC{$(\Gamma, a \mapsto \lazy_F\; v) : \llet y = \mathrm{force}\; a \lin \mathrm{force}\; a \Downarrow_{k} (\Delta, a \mapsto \memo\; w) : w$}
  \RightLabel{(...)}
  \BinaryInfC{$\Gamma : \llet x = \lazy_F\; v \lin \llet y = \mathrm{force}\; x \lin \mathrm{force}\; x \Downarrow_{k} (\Delta, a \mapsto \memo\; w) : w$}
  \DisplayProof
\end{center}
Despite calling $\mathrm{force}\; x$ twice, the total cost is only $k$ steps instead of $2k$ steps.
\end{example}

\section{Persistent Amortised Analysis}
\label{sec:persistence}

To illustrate the traditional method of amortised analysis and why it fails in a persistent setting,
consider a binary counter implemented as a (little-endian) list of bits.
The increment operation traverses the list until it finds a zero bit
and flips all bits that it encounters along the way.

\vspace{-0.2cm}
\begin{center}
\begin{tikzpicture}[
  node distance=0.7cm and 1cm,
  counter node/.style={rectangle, draw=black, thick, minimum width=1.2cm, minimum height=0.8cm, font=\small},
  label node/.style={font=\small},
  >=stealth,
]

  % Left binary counter (11001) - vertical
  \node[counter node] (c1_1) {One};
  \node[counter node, right=0.3cm of c1_1] (c1_2) {One};
  \node[counter node, right=0.3cm of c1_2] (c1_3) {Zero};
  \node[counter node, right=0.3cm of c1_3] (c1_4) {Zero};
  \node[counter node, right=0.3cm of c1_4] (c1_5) {One};
  
  % Links between left counter nodes
  \draw[->] (c1_1) -- (c1_2);
  \draw[->] (c1_2) -- (c1_3);
  \draw[->] (c1_3) -- (c1_4);
  \draw[->] (c1_4) -- (c1_5);

  % Transformation arrow
  \draw[->, line width=2pt, bend left] (7, 0) to node[midway, right=0cm] {incr} (7, -1.5);

  % Right binary counter (00101) - vertical
  \node[counter node, below=0.7cm of c1_1] (c2_1) {Zero};
  \node[counter node, right=0.3cm of c2_1] (c2_2) {Zero};
  \node[counter node, right=0.3cm of c2_2] (c2_3) {One};
  \node[counter node, right=0.3cm of c2_3] (c2_4) {Zero};
  \node[counter node, right=0.3cm of c2_4] (c2_5) {One};
  
  % Links between right counter nodes
  \draw[->] (c2_1) -- (c2_2);
  \draw[->] (c2_2) -- (c2_3);
  \draw[->] (c2_3) -- (c2_4);
  \draw[->] (c2_4) -- (c2_5);

\end{tikzpicture}
\end{center}

A counter representing the number $n$ has at most $\log n$ bits
and incrementing the counter takes $O(\log n)$ time in the worst case.
However, most increments will be far quicker.
In fact, incrementing a binary counter from zero to $n$ takes only $O(n)$ time in total,
when the worst-case bound would suggest that it takes $O(n \log n)$ time.
This motivates the Bankers method of amortised analysis~\cite{Tarjan:amortisation},
which shows that the \textit{amortised time} per increment is only $O(1)$.

\subsection{Bankers Method}

The Bankers method makes it possible to amortise expensive operations against cheap operations.
Cheap operations may spend additional time to save \textit{credits} on the heap.
Credits may be spent during an expensive operation to reduce its cost.
We model credits by a natural number $n$ that is attached to a heap allocation:
\begin{equation*}
\Gamma ::= \emptyset \mid \Gamma, a \mapsto_n hv
\end{equation*}

We add rules for saving credits on these cells and spending them later:

\begin{center} 
  \AxiomC{}
  \RightLabel{(save)}
  \UnaryInfC{$\Gamma, a \mapsto_n hv : \save{m}{a} \Downarrow_{m} \Gamma, a \mapsto_{n+m} hv : a$}
  \DisplayProof
\end{center}
\begin{center} 
  \AxiomC{$\Gamma, a \mapsto_{n} hv : e \Downarrow_k \Delta : w$}
  \AxiomC{$n,m \geq 0$}
  \RightLabel{(spend)}
  \BinaryInfC{$\Gamma, a \mapsto_{n+m} hv : \spend{m}{a}{e} \Downarrow_{\max(k-m,0)} \Delta : w$}
  \DisplayProof
\end{center}

In the (save) rule, we take $m$ steps to add $m$ credits to a heap allocation.
Conversely, in the (spend) rule, we may spend $m$ credits from the heap to reduce
the number of steps we have to take. We do not allow negative credits:
if we try to spend more credits than we have, this rule does not apply
and evaluation is stuck.

\subsection{Soundness}

To show that the (save) and (spend) rules are correct,
we need to compare them to the rules defined earlier.
The big-step rules defined in \cref{sec:prelim} give the worst-case number of steps
that are needed to evaluate an expression under a given heap.
We will call it the real semantics and denote it by $\Downarrow^{\mathrm{R}}_k$.
The Bankers semantics extends the real semantics with the (save) and (spend) rules,
and we denote it by $\Downarrow^{\mathrm{B}}_k$.

We show that (save) and (spend) rules are correct by comparing
the Bankers semantics to the real semantics.
But this is tricky to do directly, since they use
different syntaxes for heaps.
We relate the heaps through an erasure function
$\stripdot : \mathrm{Heap}^\mathrm{B} \to \mathrm{Heap}^\mathrm{R}$
that embeds the Bankers heaps with credits into the real heaps without credits.

\begin{definition}[Cost Model]
A cost model $(\Downarrow, \Phi, \stripdot)$ consists of:
\begin{itemize}
  \item a class of heaps $\mathrm{Heap}$ which may be embedded into real heaps by $\stripdot : \mathrm{Heap} \to \mathrm{Heap}^\mathrm{R}$
  \item a big-step operational semantics $\Downarrow : (\mathrm{Heap} \times \mathrm{Expr}) \rightharpoonup \mathbb{N} \times \mathrm{Heap} \times \mathrm{Value}$
  \item a potential function $\Phi : \mathrm{Heap} \to \mathbb{N}$.
\end{itemize}
\end{definition}

We call a cost model \textit{sound} if it yields the same result as the real semantics
and allows us to prove an upper bound on the number of steps taken by the real semantics.

\begin{definition}[Soundness~\cite{Danielsson:amortisation}]
A cost model $(\Downarrow, \Phi, \stripdot)$ is \textit{sound}
if for all $\Gamma : e \Downarrow_n \Delta : v$:
\begin{itemize}
  \item $\strip{\Gamma} : e \Downarrow^{\mathrm{R}}_k \strip{\Delta} : v$
  \item $k + \Phi(\Delta) - \Phi(\Gamma) \leq n$.
\end{itemize}
\end{definition}

\begin{corollary}
  The real cost model $(\Downarrow^{\mathrm{R}}, \Phi^{\mathrm{R}}, \stripdot^{\mathrm{R}})$ is sound
  for $\Phi^{\mathrm{R}}(\Gamma) = 0$ and $\strip{\Gamma}^{\mathrm{R}} = \Gamma$.
\end{corollary}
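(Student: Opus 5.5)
The claim is immediate once we instantiate the definition of soundness with the real cost model itself, so the plan is simply to unfold the definition. Fix an arbitrary derivation $\Gamma : e \Downarrow^{\mathrm{R}}_n \Delta : v$. We must exhibit some $k$ such that two conditions hold. First, $\strip{\Gamma}^{\mathrm{R}} : e \Downarrow^{\mathrm{R}}_k \strip{\Delta}^{\mathrm{R}} : v$. Second, $k + \Phi^{\mathrm{R}}(\Delta) - \Phi^{\mathrm{R}}(\Gamma) \leq n$.

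We choose $k = n$. Since $\strip{\cdot}^{\mathrm{R}}$ is the identity, the first condition reduces to $\Gamma : e \Downarrow^{\mathrm{R}}_n \Delta : v$, which is exactly the hypothesis. Since $\Phi^{\mathrm{R}}$ is identically zero, the second condition reduces to $n + 0 - 0 \leq n$, which holds trivially.

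No induction over the derivation is needed, because we never transform the derivation; we reuse it verbatim. There is no real obstacle. The only point worth checking is a matter of reading the definition: the $k$ in the soundness definition is existentially quantified (``there is a real evaluation taking $k$ steps''), not a fixed value determined independently of $n$. This reading is consistent because the real semantics is deterministic up to the paper's deterministic choice of fresh pointers. Under that reading, choosing $k = n$ is legitimate and the corollary follows.
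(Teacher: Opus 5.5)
Your proof is correct and is exactly the intended argument: the paper states this corollary without proof, and choosing $k = n$ and reusing the given derivation verbatim discharges both soundness conditions, just as you do. One small remark: under the existential reading of $k$ you need no determinism at all. Determinism, which the paper builds into the definition of a cost model by typing $\Downarrow$ as a partial function, only matters if $k$ is read as the unique real step count, and in that case it again equals $n$.
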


The Bankers method yields a sound cost model.
The potential function $\Phi^{\mathrm{B}}$ counts the total number of credits in the heap
and the $\stripdot^{\mathrm{B}}$ function removes the credit annotations:
\begin{equation*}
  \Phi^{\mathrm{B}}(\emptyset) = 0 \quad\quad \Phi^{\mathrm{B}}(\Gamma, a \mapsto_n hv) = \Phi^{\mathrm{B}}(\Gamma) + n \quad\quad
  \strip{\emptyset}^{\mathrm{B}} = \emptyset \quad\quad \strip{\Gamma, a \mapsto_n hv}^{\mathrm{B}} = \strip{\Gamma}^{\mathrm{B}}, a \mapsto hv
\end{equation*}

Additionally, we need to interpret the `$\save{m}{a}$' and `$\spend{m}{a}{e}$' expressions in the real semantics.
We treat them as no-ops, where `$\save{m}{a}$' returns $a$ at no cost
and `$\spend{m}{a}{e}$' evaluates $e$ without adjusting its cost.

\begin{lemma}[\cite{Tarjan:amortisation}]
\label{lem:bankers-sound}
The Bankers cost model $(\Downarrow^{\mathrm{B}}, \Phi^{\mathrm{B}}, \stripdot^{\mathrm{B}})$ is sound.
\end{lemma}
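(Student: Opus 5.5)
We prove the statement by induction on the derivation of $\Gamma : e \Downarrow^{\mathrm{B}}_n \Delta : v$. For each rule we show two things. First, erasing credits gives a real derivation $\strip{\Gamma} : e \Downarrow^{\mathrm{R}}_k \strip{\Delta} : v$. Second, the cost inequality $k + \Phi^{\mathrm{B}}(\Delta) - \Phi^{\mathrm{B}}(\Gamma) \leq n$ holds.

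Two facts about erasure are used throughout. Erasure commutes with extending the heap, so $\strip{\Gamma, a \mapsto_n hv} = \strip{\Gamma}, a \mapsto hv$. Erasure also preserves the domain, so freshness side conditions $a \notin \mathrm{dom}(\Gamma)$ transfer directly; under the deterministic global-freshness convention the same pointer is chosen in both semantics. Allocation rules such as (fold), (lazy) and (memo) give the new cell $0$ credits, so $\Phi^{\mathrm{B}}$ is unchanged by them.

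The base and structural cases are routine.
\begin{itemize}
\item For (value), (fold), (unfold), (lazy), (memo) and (recall), the heap changes only by adding a zero-credit cell or not at all. So $\Phi^{\mathrm{B}}(\Delta) = \Phi^{\mathrm{B}}(\Gamma)$, the real costs coincide with the Bankers costs, and the inequality holds with equality.
\item The single-premise rules (case), (split), (app) and (call) follow immediately from the induction hypothesis. They add the same constant ($0$ or $1$) to both $k$ and $n$.
\item For (force), the thunk cell is removed and later re-added as $\memo\, w$. We fix the convention that its credit annotation is preserved, which keeps $\Phi^{\mathrm{B}}$ consistent; then the induction hypothesis on $F\,v$ applies.
\item For (let), suppose the induction hypothesis gives $k_1 + \Phi(\Delta) - \Phi(\Gamma) \leq n_1$ and $k_2 + \Phi(\Theta) - \Phi(\Delta) \leq n_2$. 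Adding them telescopes to $k_1+k_2 + \Phi(\Theta) - \Phi(\Gamma) \leq n_1+n_2$. The real derivations compose by the real (let) rule, since the intermediate erased heap $\strip{\Delta}$ matches.
\end{itemize}

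The two new rules carry the real content.
\begin{itemize}
\item For (save), the real semantics treats $\save{m}{a}$ as a no-op, so $k = 0$ and the erased heaps agree. The potential increases by exactly $m$, giving $0 + m \leq m$.
\item For (spend), let the premise be $\Gamma, a \mapsto_{n'} hv : e \Downarrow_{k'} \Delta : w$, so the conclusion's Bankers cost is $\max(k'-m,0)$. The induction hypothesis gives a real cost $k$ with $k + \Phi(\Delta) - \Phi(\Gamma) - n' \leq k'$. The conclusion's starting heap has potential $\Phi(\Gamma) + n' + m$. Hence $k + \Phi(\Delta) - (\Phi(\Gamma)+n'+m) \leq k' - m \leq \max(k'-m,0)$. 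The real derivation is the same one, since spend is a no-op and the erased initial heaps coincide.
\end{itemize}

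The main subtlety, rather than an obstacle, is bookkeeping in two places. One is making sure the (force) rule handles credits on the removed thunk cell consistently; we fix the convention that the annotation is preserved when the cell is re-added. The other is noting that the potential can temporarily be "overdrawn" inside the (spend) case. This is harmless, because the inequality is stated relative to potential differences and $\max(\cdot,0) \geq \cdot$ only weakens the bound.
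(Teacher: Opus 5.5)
Your proof is correct and follows the paper's argument: induction on the Bankers derivation, where (save) raises $\Phi^{\mathrm{B}}$ by exactly $m$ at real cost $0$, and (spend) applies the induction hypothesis to the premise and concludes with $k'-m \leq \max(k'-m,0)$, exactly as in the appendix. The only difference is that you also spell out the routine cases (value, let, force, etc.), including the implicit convention that a forced thunk keeps its credit annotation, which the paper leaves unstated.
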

\begin{proof}
By induction on the derivation of $\Gamma : e \Downarrow^{\mathrm{B}}_n \Delta : v$.
The (save) and (spend) rules do not change the final result.
(save) adds $m$ credits to the heap but also costs $m$ steps.
(spend) reduces the number of steps by up to $m$ but also removes $m$ credits from the heap.
\end{proof}

\subsection{Amortised Analysis of Binary Counters}

To analyse the binary counter using the Bankers method,
we define the heaps of binary counters inductively.
The $\mathrm{Counter}(\Gamma : v)$ predicate defines a counter
as a list of zeros and ones, where we store a credit for each one-bit.
\begin{center}
$\texttt{End} = \inl ()$ \quad
$\texttt{Cons}(n, a) = \inr (n, a)$ \quad
$\texttt{Zero} = \inl ()$ \quad
$\texttt{One} = \inr ()$
\end{center}
\vspace{-0.2cm}
\begin{align*}
&\mathrm{Counter}(a \mapsto_0 \fold \texttt{End} : a) & \\
\mathrm{Counter}(\Gamma : a) \implies &\mathrm{Counter}(\Gamma, b \mapsto_0 \fold \texttt{Cons}(\texttt{Zero}, a) : b) & (b \notin \mathrm{dom}(\Gamma)) \\
\mathrm{Counter}(\Gamma : a) \implies &\mathrm{Counter}(\Gamma, b \mapsto_1 \fold \texttt{Cons}(\texttt{One}, a) : b) & (b \notin \mathrm{dom}(\Gamma))
\end{align*}

We define the increment operation as follows.
We must save a credit for each one-bit we create,
but may spend a credit for each one-bit that we flip to a zero-bit.
\begin{align*}
&\mathrm{incr}(c) = \case (\unfold c) \{ \\
&\quad \texttt{End} \mapsto \save{1}{(\fold \texttt{Cons}(\texttt{One}, c))} \\
&\quad \texttt{Cons}(n, a) \mapsto \case n \{\\
&\quad\quad \texttt{Zero} \mapsto \save{1}{(\fold \texttt{Cons}(\texttt{One}, a))} \\
&\quad\quad \texttt{One}  \mapsto \spend{1}{c}{ \fold \texttt{Cons}(\texttt{Zero}, \mathrm{incr}(a)) } \,\} \,\}
\end{align*}

\begin{lemma}
\label{lem:counter-sequentially}
If $\mathrm{Counter}(\Gamma : c)$, then $\Gamma : \mathrm{incr}(c) \Downarrow^{\mathrm{B}}_{2} \Delta : c'$
with $\mathrm{Counter}(\Delta : c')$.
\end{lemma}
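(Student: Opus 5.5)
We prove the statement by induction on the derivation of $\mathrm{Counter}(\Gamma : c)$. Equivalently, we can induct on the length of the counter list, generalising over $\Gamma$ and $c$. The claim we carry through the induction is exactly the statement: the Bankers cost of $\mathrm{incr}(c)$ is $2$, and the resulting heap again satisfies $\mathrm{Counter}$. Throughout, $\Delta$ extends $\Gamma$ except that credits on one-cells may be removed. We first account for costs. The only rule that costs a step is (call), which charges $1$ for each invocation of $\mathrm{incr}$, plus (save) with $m$ steps. The rules (case), (unfold), (fold) and (value) are free.

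\emph{Base case} $\mathrm{Counter}(a \mapsto_0 \fold \texttt{End} : a)$. The call costs $1$, the unfold returns $\texttt{End}$, and $\save{1}{(\fold \texttt{Cons}(\texttt{One}, c))}$ allocates a fresh $b \mapsto_0$ and then raises it to $b \mapsto_1$ at cost $1$. The total is $2$. The new heap $a \mapsto_0 \fold\texttt{End}, b \mapsto_1 \fold\texttt{Cons}(\texttt{One},a)$ satisfies $\mathrm{Counter}$ by the third clause. Strictly, (save) is stated on a variable/pointer, so we read $\save{1}{(\fold\ldots)}$ as sugar for $\llet x = \fold \ldots \lin \save{1}{x}$. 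Its cost is $0+1$ by (let).

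\emph{Inductive cases}, with $c = b$ and $\Gamma = \Gamma', b \mapsto_i \fold\texttt{Cons}(n,a)$, where $\mathrm{Counter}(\Gamma' : a)$. If $n = \texttt{Zero}$ (so $i=0$), the argument is identical to the base case. The cost is $1+1=2$. The new cell $b' \mapsto_1 \fold\texttt{Cons}(\texttt{One},a)$ sits on top of $\Gamma'$, so $\mathrm{Counter}(\Gamma', b'\mapsto_1\ldots : b')$ holds. The old cell $b$ remains in the heap as garbage. This forces a small strengthening: we must allow extra unreachable cells. The cleanest fix is to note that $\mathrm{Counter}$ is preserved by adding cells with $0$ credits that are not on the list. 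So we prove the lemma for heaps $\Gamma$ with $\mathrm{Counter}(\Gamma_0 : c)$ for some sub-heap $\Gamma_0 \subseteq \Gamma$, where the remaining cells carry $0$ credits.

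If $n = \texttt{One}$ (so $i = 1$), the call costs $1$. Next we apply (spend) with $m=1$ on $b$, which leaves $b \mapsto_0$ and is allowed because $i=1$. Its body allocates via $\llet$ after evaluating $\mathrm{incr}(a)$. By the induction hypothesis (on $\Gamma'$ extended by the now zero-credit cell $b$, which is fine under the strengthening), $\mathrm{incr}(a)$ costs $2$ and produces a counter $a'$. The fold is free, so the body costs $2$, and (spend) reduces this to $\max(2-1,0)=1$. The total is $1+1=2$. The result $b'' \mapsto_0 \fold\texttt{Cons}(\texttt{Zero},a')$ satisfies $\mathrm{Counter}$ by the second clause.

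The main obstacle is not the arithmetic but bookkeeping. First, we need the above strengthening of $\mathrm{Counter}$ to tolerate leftover zero-credit garbage cells (the old $b$, whose credit has been spent). Second, we need to check that spending on $b$ before the recursive call does not disturb the sub-counter rooted at $a$. This holds because $b$ is disjoint from $\Gamma'$ and the fresh pointers are globally fresh.
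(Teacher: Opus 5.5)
Your proof is correct and follows the paper's argument: induction on the $\mathrm{Counter}$ predicate with End, Zero and One cases. The first two cost $1+1=2$ directly, and the One case costs $3$ and is brought down to $2$ by spending the credit on the one-bit. Your relaxation of the invariant to tolerate zero-credit garbage cells is a real refinement, not a deviation: the paper's proof silently writes the resulting heap without the old cell (e.g.\ $\Gamma, c' \mapsto_1 \ldots$ in the Zero case), so the conclusion $\mathrm{Counter}(\Delta : c')$ as literally stated only holds modulo such unreachable cells, which is exactly what you make precise.
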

\begin{proof}
By induction on the $\mathrm{Counter}$ predicate. In each case, the function call takes one step.
\begin{itemize}
  \bitem{(End)}
    From $\mathrm{Counter}(\Gamma : c)$, we obtain
    $\mathrm{Counter}(\Gamma, c' \mapsto_1 \fold \texttt{Cons}(\texttt{One}, c) : c')$.
    We save one credit on the new one-bit, for a total cost of $2$.
  \bitem{(Zero)} 
    From $\mathrm{Counter}(\Gamma, c \mapsto_0 \fold \texttt{Cons}(\texttt{Zero}, a) : c)$,
    we deduce $\mathrm{Counter}(\Gamma : a)$ and obtain
    $\mathrm{Counter}(\Gamma, c' \mapsto_1 \fold \texttt{Cons}(\texttt{One}, a) : c')$.
    We save one credit on the new one-bit, for a total cost of $2$.
  \bitem{(One)}
    From $\mathrm{Counter}(\Gamma, c \mapsto_1 \fold \texttt{Cons}(\texttt{One}, a) : c)$,
    we deduce $\mathrm{Counter}(\Gamma : a)$.
    Thus, by the induction hypothesis,
    $\Gamma : \mathrm{incr}(a) \Downarrow^{\mathrm{B}}_{2} \Delta : c'$ with $\mathrm{Counter}(\Delta : c')$.
    We obtain $\mathrm{Counter}(\Delta, c'' \mapsto_0 \fold \texttt{Cons}(\texttt{Zero}, c') : c'')$.
    This yields a total cost of $3$, but we can reduce it to $2$ by spending the credit from the one-bit.
\end{itemize}
\end{proof}

\subsection{Persistent Usage}

This proof is standard for imperative languages, where each increment operation
modifies the counter in place. However, this proof does not apply if the counter is used persistently.
In functional languages it is possible to increment the same counter multiple times:

\begin{lstlisting}
val ones = One(One(One(End))) // saves three credits
val c1 = incr(ones)           // spends three credits and saves one
val c2 = incr(ones)           // spends three credits and saves one
\end{lstlisting}

In the program above, we create a counter with three one-bits, which gives us three credits.
Then we perform two increments on \textit{the same counter}.
Each increment flips three one-bits to zero-bits and creates another one-bit.
In total, the program spends six credits but only saves five credits.
Clearly this is wrong: we cannot spend more credits than we have.

The problem is that \cref{lem:counter-sequentially} does not apply to the second increment.
Given $\mathrm{Counter}(\Gamma : \text{ones})$, we can perform the first increment,
yielding $\mathrm{Counter}(\Delta : c_1)$. But there is no guarantee that
$\mathrm{Counter}(\Delta : \text{ones})$ holds, which prevents us from applying the lemma again.

In fact, increments do not take $O(1)$ amortised time in a persistent setting:

\begin{proposition}
\label{prop:counter-not-persistent}
Starting from an empty counter, $n$ persistent increments may take $\Omega(n\log n)$ time.
\end{proposition}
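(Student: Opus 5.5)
We construct an explicit program in the real semantics $\Downarrow^{\mathrm{R}}$. With $k = \lfloor \log_2 n \rfloor$, the first phase builds a counter with many one-bits, and the second phase repeatedly increments that same version. Concretely, starting from the empty counter $\texttt{End}$, we perform $m = 2^{k-1}-1$ sequential increments. This yields a counter $c^\ast$ representing $2^{k-1}-1$, whose little-endian bit list consists of $k-1$ one-bits followed by $\texttt{End}$. We then perform the remaining $n - m \geq n/2$ increments all on $c^\ast$ itself, discarding each result. The total number of increments is $n$, and all of them start from versions reachable from the empty counter, as the statement requires.

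The key step is a lemma about the real cost of one increment. If $\Gamma : c$ is a counter (ignoring credits) whose lowest $j$ bits are one, then $\strip{\Gamma} : \mathrm{incr}(c) \Downarrow^{\mathrm{R}}_{j+1} \Delta : c'$. We prove this by induction on $j$ directly from the definition of $\mathrm{incr}$. Each recursive call costs one step via the (call) rule. The $\texttt{One}$ branch recurses once, and the $\texttt{Zero}$ and $\texttt{End}$ branches terminate. The $\mathrm{save}$ and $\mathrm{spend}$ annotations are no-ops in the real semantics. We also record that an increment never mutates existing cells: by inspecting the rules, evaluation only extends the heap with fresh $\fold$ cells, so $\Delta \supseteq \Gamma$. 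Hence $c^\ast$ and every cell reachable from it are unchanged after each increment. This invariance is exactly what lets us apply the lemma again to $c^\ast$ in the later heap. Note that the $\mathrm{Counter}$ predicate is not needed, since it would fail here anyway.

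Combining the pieces, each of the at least $n/2$ increments on $c^\ast$ costs exactly $k = \Theta(\log n)$ steps. The whole program is a chain of $\llet$ bindings, so by the (let) rule its total cost is the sum of the parts, which is at least $(n/2)(\log_2 n - 1) = \Omega(n \log n)$. The main obstacle is a formal one: stating precisely that the later increments run in a heap containing the earlier garbage, and that this garbage does not affect cost. The heap-extension observation handles this, because the real cost of evaluation depends only on the cells actually reached from $c^\ast$. If desired, the comparison with \cref{lem:bankers-sound} can be noted afterwards: the $\Omega(n\log n)$ real cost exceeds the $2n$ bound that the Bankers method would claim, so no sound credit assignment can support \cref{lem:counter-sequentially} persistently.
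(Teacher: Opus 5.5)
Your proof is correct and is essentially the paper's argument. You build the all-ones counter with at most $n/2$ increments; your $k-1 = \lfloor \log_2 n\rfloor - 1 = \lfloor \log_2 (n/2)\rfloor$ one-bits give exactly the paper's counter for $2^{\lfloor \log(n/2)\rfloor}-1$. You then increment that same version at least $n/2$ times, each at cost $\Theta(\log n)$. Your explicit real-semantics cost lemma and the observation that the heap only grows (there are no thunks to update here) spell out details the paper's sketch leaves implicit.
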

\begin{proof}
Use up to $n/2$ increments to obtain a binary counter corresponding to the number $2^{\lfloor \log (n/2) \rfloor} - 1$.
Then perform $n/2$ increments on this counter, persistently.
Each increment takes $\Omega(\log n)$ time, since it has to flip all $\lfloor \log (n/2) \rfloor$ bits again.
\end{proof}

This is not just of purely theoretical concern: many classic data structures do not
enjoy good amortised time bounds in a persistent setting.
For example, Binomial Heaps~\cite{Vuillemin:binomial} extend binary counters to a
priority queue by associating each one-bit with a tree.
However, their amortised bound does not hold in a persistent setting~\cite{Okasaki:purefun}.

\subsection{Persistence}

To describe when an analysis holds in a persistent setting,
we need to formalise how the heap may change during evaluation.
Intuitively, an operational semantics is persistent if the heap only gets ``better'' during evaluation.
First, we describe how the heap changes during evaluation using a preorder $\sqsubseteq$ on heaps,
which we call the \textit{accessibility} relation~\cite{Ahmed:semantics}.

\begin{definition}[Accessibility]
An accessible cost model is a cost model $(\Downarrow, \Phi, \stripdot)$ equipped with a preorder $\sqsubseteq$ on heaps such that
$\Gamma \sqsubseteq \Delta$ iff $\Gamma : e \Downarrow_n \Delta : v$ for some $e,n,v$.
\end{definition}

Note that $\sqsubseteq$ is fully determined by $\Downarrow$.
However, it is convenient to have a succinct definition of $\sqsubseteq$
that hides the concrete expression $e$.
For this reason, we will define the $\sqsubseteq$ relation explicitly using inference rules,
and prove that our explicit rules yield accessible cost models.
For the real cost model, the heap only changes by adding new allocations and forcing thunks:

\begin{center}
  \AxiomC{\phantom{$\Gamma \sqsubseteq \Delta$}}
  \RightLabel{[refl]}
  \UnaryInfC{$\Gamma \sqsubseteq \Gamma$}
  \DisplayProof
  \qquad
  \AxiomC{$\Gamma \sqsubseteq \Delta$}
  \AxiomC{$\Delta \sqsubseteq \Theta$}
  \RightLabel{[trans]}
  \BinaryInfC{$\Gamma \sqsubseteq \Theta$}
  \DisplayProof
  \qquad
  \AxiomC{$a \notin \mathrm{dom}(\Gamma)$}
  \RightLabel{[alloc]}
  \UnaryInfC{$\Gamma \sqsubseteq \Gamma, a \mapsto hv$}
  \DisplayProof
\end{center}
\begin{center}
  \AxiomC{$\Gamma \sqsubseteq \Delta$}
  \AxiomC{$\Gamma : F\, v \Downarrow_k \Delta : w$}
  \RightLabel{[force]}
  \BinaryInfC{$(\Gamma, a \mapsto \lazy_F\; v) \sqsubseteq (\Delta, a \mapsto \memo\; w)$}
  \DisplayProof
\end{center}

Let us say that an analysis describes a sequence of operations $F_1,F_2,\ldots,F_m$,
and we ensure that each intermediate state $\Gamma_i : v_i$ satisfies some invariant.
This is enough to show soundness in a sequential setting.
In a persistent setting, an operation may change the intermediate state $\Gamma_i : v_i$ to $\Delta_i : v_i$
for $\Gamma_i \sqsubseteq \Delta_i$. Can we still apply the operation $F_i$ to $\Delta_i : v_i$?

\begin{definition}[Persistence]
An accessible cost model $((\Downarrow, \Phi, \stripdot), \sqsubseteq)$ is \textit{persistent} if
for all $\Gamma : e \Downarrow_n \Gamma' : v$ and $\Gamma \sqsubseteq \Delta$,
there exist $\Delta', k$ such that $\Delta : e \Downarrow_k \Delta' : v$, $k \leq n$ and $\Gamma' \sqsubseteq \Delta'$.
It is \textit{uniformly persistent} if $k = n$.
\end{definition}

If the cost model is persistent, we can apply $F_i$ to $\Delta_i : v_i$.
It is guaranteed that this yields the same result $v_{i+1}$, does not take more steps,
and that the new state $\Delta_{i+1} : v_{i+1}$ is still accessible from $\Gamma_{i+1} : v_{i+1}$.
This allows us to inductively shift all subsequent operations to the extended heap:
\begin{center}
\begin{tikzcd}[row sep=large, column sep=7em, every label/.append style={scale=1.3}]
\Gamma_1 \arrow[r, "F_1\, v_1 \Downarrow_{n_1} v_2"] & \Gamma_2 \arrow[r, "F_2\, v_2 \Downarrow_{n_2} v_3"] \arrow[d, "\sqsubseteq"] & \Gamma_3 \arrow[r, "F_3\, v_3 \Downarrow_{n_3} v_4"] \arrow[d, dashed, "\sqsubseteq"] & \cdots \arrow[r, "F_m\, v_m \Downarrow_{n_m} v_{m+1}"] \arrow[d, dashed, "\sqsubseteq"] & \Gamma_{m+1} \arrow[d, dashed, "\sqsubseteq"] \\
                                                     & \Delta_2 \arrow[r, dashed, "F_2\, v_2 \Downarrow_{k_2} v_3"']                 & \Delta_3 \arrow[r, dashed, "F_3\, v_3 \Downarrow_{k_3} v_4"']                         & \cdots \arrow[r, dashed, "F_m\, v_m \Downarrow_{k_m} v_{m+1}"']                         & \Delta_{m+1}
\end{tikzcd}
\end{center}

In a persistent cost model, we can thus reason about persistent usage
using sequential reasoning. We do not have to consider how the data structure
may change in between operations, since our reasoning can be applied to any accessible state automatically.
In particular, our reasoning does not have to consider monotonicity explicitly,
since the evaluation itself is guaranteed to be monotonic.

\begin{lemma}[\cite{Pilkiewicz:monotonic}]
\label{lem:real-persistent}
The real cost model $((\Downarrow^{\mathrm{R}}, \Phi^{\mathrm{R}}, \stripdot^{\mathrm{R}}), \sqsubseteq^{\mathrm{R}})$ is persistent.
\end{lemma}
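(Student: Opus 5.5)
We have to show two things: that $\sqsubseteq^{\mathrm{R}}$ given by the rules [refl], [trans], [alloc], [force] really is the accessibility relation of $\Downarrow^{\mathrm{R}}$, and the persistence property itself.

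\emph{Accessibility.} For one direction, we do induction on the derivation of $\Gamma : e \Downarrow^{\mathrm{R}}_n \Delta : v$ and show $\Gamma \sqsubseteq^{\mathrm{R}} \Delta$. The cases (value), (unfold) and (recall) use [refl]. The rules (fold), (lazy) and (memo) use [alloc]. The rules (case), (split), (app) and (call) follow directly from the induction hypothesis, and (let) uses [trans]. For (force), the induction hypothesis gives $\Gamma \sqsubseteq \Delta$, and together with the premise this is exactly [force]. For the converse, each rule must be realised by some expression. [refl] is realised by evaluating a value. [alloc] is realised by $\fold$, $\lazy_F$ or $\memo$; by the determinism and freshness assumption on allocation we may take the chosen pointer to be $a$. [trans] is realised by $\llet x = e_1 \lin e_2$, using the global-freshness assumption to combine the two evaluations. [force] is realised by $\force a$.

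\emph{Persistence.} I prove the stronger statement: for all $\Gamma : e \Downarrow_n \Gamma' : v$ and $\Gamma \sqsubseteq \Delta$, there exist $k \le n$ and $\Delta'$ with $\Delta : e \Downarrow_k \Delta' : v$ and $\Gamma' \sqsubseteq \Delta'$. The proof is by induction on the evaluation derivation, with $\Delta$ generalised. A key auxiliary fact is that $\Gamma \sqsubseteq \Delta$ implies $\mathrm{dom}(\Gamma) \subseteq \mathrm{dom}(\Delta)$, and that each cell $a$ of $\Gamma$ has one of three forms in $\Delta$:
\begin{itemize}
\item it is unchanged;
\item it was $\lazy_F\,w$ and became $\memo\,u$, where $\Gamma_0 : F\,w \Downarrow \Gamma_1 : u$ for some $\Gamma_0 \sqsubseteq \Delta$;
\item it was already $\memo$, in which case it is unchanged.
\end{itemize}
This is proved by induction on $\sqsubseteq$. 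We also need monotonicity: if $\Gamma \sqsubseteq \Delta$ and $a \notin \mathrm{dom}(\Delta)$, then $(\Gamma, a\mapsto hv) \sqsubseteq (\Delta, a \mapsto hv)$. This holds by induction on $\sqsubseteq$; the [alloc] and [force] cases need $a$ fresh, which global freshness provides.

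Most cases of the main induction are routine. For (value), (unfold) and (recall) we set $\Delta' = \Delta$; in (unfold), fold cells are never changed. For (fold), (lazy) and (memo) we set $\Delta' = \Delta, a \mapsto hv$, choosing $a$ globally fresh and applying monotonicity. The cases (case), (split), (app) and (call) go through by the induction hypothesis, and (let) chains the induction hypothesis twice, with $k_1 + k_2 \le n_1 + n_2$.

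The main obstacle is (force). Here $\Gamma = \Gamma_0, a \mapsto \lazy_F\,w$ with $\Gamma_0 : F\,w \Downarrow_n \Gamma_1 : u$. There are two subcases for $a$ in $\Delta$.
\begin{itemize}
\item If $a$ is still lazy in $\Delta = \Delta_0, a\mapsto\lazy_F\,w$, we need $\Gamma_0 \sqsubseteq \Delta_0$. To get this, I would strengthen the auxiliary lemma: removing a cell that stays unchanged commutes with $\sqsubseteq$. This is plausible because evaluation never reads a cell absent from the heap: such an evaluation would be stuck, since a forced thunk is removed during its own evaluation. The induction hypothesis then gives $\Delta_0 : F\,w \Downarrow_k \Delta_0' : u$, and [force] closes this subcase.
\item If $a \mapsto \memo\,u'$ in $\Delta$, then (recall) costs $0 \le n$. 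It remains to show $u' = u$ and $\Gamma_0, \Gamma_1$-accessibility, i.e.\ $(\Gamma_1, a\mapsto\memo\,u) \sqsubseteq \Delta$. For this I would prove the auxiliary lemma in a stronger, determinacy-flavoured form. Both memoisations come from evaluating $F\,w$, from heaps related by $\sqsubseteq$, and the induction hypothesis applied to the earlier derivation yields the same value $u$ (values agree by the determinacy that the induction simultaneously establishes). For the heap ordering, I expect to need a confluence or ``join'' property of $\sqsubseteq$: two extensions of a common heap by forcing are compatible.
\end{itemize}
Setting up this invariant precisely is where I expect most of the effort to go. I would first try proving persistence and this join property by a single simultaneous induction on derivation height.
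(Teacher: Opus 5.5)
Your preliminaries (accessibility, weakening by a fresh cell, the routine cases, and the subcase of (force) where $a$ is still lazy in $\Delta$) are reasonable. But the proposal stops exactly at the case that carries the content of the lemma: (force) on $a$ when $a$ is already memoised in $\Delta$. Since (recall) leaves the heap unchanged, you are forced to take $\Delta' = \Delta$. You must therefore show $u' = u$ and $(\Gamma_1, a \mapsto \memo\; u) \sqsubseteq \Delta$ for $\Delta$ itself, and a confluence or ``join'' property giving a common upper bound of two extensions does not yield this. Your shape lemma does not supply it either. Its clause ``$\Gamma_0 : F\,w \Downarrow \Gamma_1 : u$ for some $\Gamma_0 \sqsubseteq \Delta$'' records neither that the memoising force started from a heap above $\Gamma \setminus a$ nor that its result lies below $\Delta$. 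Finally, agreement of values and heaps is not something the persistence induction ``simultaneously establishes''. Persistence only produces \emph{some} evaluation of $F\,w$ from a larger heap. To identify it with the evaluation that actually memoised $a$ in $\Delta$, you need determinism of $\Downarrow^{\mathrm{R}}$ as a separate fact, which follows from the deterministic-allocation assumption.

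The paper never describes $\Delta$ globally. Inside the induction on the evaluation, for the heap-reading rules (unfold), (recall) and (force), it does a second induction on the derivation of $\Gamma \sqsubseteq \Delta$ and closes the square one accessibility step at a time:
\begin{itemize}
\item {[refl]} is immediate.
\item {[trans]} stacks two squares vertically; the inner hypothesis is stated for every evaluation of the same expression.
\item {[alloc]} follows from weakening.
\item {[force]} at an address $b \neq a$ follows from the inner hypothesis plus weakening.
\item The only interesting square, (force) against [force] on the same address $a$, closes with (recall) at cost $0 \le n$ and [refl]. By determinism, the [force] step memoised $a$ to exactly $u$ and produced exactly the heap $\Gamma'$.
\end{itemize}
This step-local case analysis, with determinism made explicit, is the missing idea. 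Your ``single simultaneous induction on derivation height'' would have to rebuild precisely this structure before the memoised subcase goes through.
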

\begin{minipage}{0.65\textwidth}
\begin{proof}
We prove this using induction.
The interesting case is if the evaluation forces a thunk,
which is also forced in the larger heap.
Then the evaluation succeeds using the (recall) rule,
since thunk evaluation is deterministic.
The full proof can be found in the appendix.
\end{proof}
\end{minipage}
\hfill
\begin{minipage}{0.33\textwidth}
\begin{center}
\begin{tikzcd}[row sep=large, column sep=5em, every label/.append style={scale=1.3}]
\Gamma \arrow[d, "{\text{[force]}}"] \arrow[r, "{\text{(force)}}"] & \Gamma' \arrow[d, dashed, "{\text{[refl]}}"] \\
\Delta \arrow[r, dashed, "{\text{(recall)}}"] & \Delta'
\end{tikzcd}
\end{center}
\end{minipage}
\vspace{0.1cm}

\begin{remark}
To show that the real cost model is persistent, we need to assume without loss of generality that we never allocate two values under the same name.
For example, we can allocate $\fold\, (\inl v)$ at $a$ if the heap is $\emptyset$,
but this fails if the heap is $\{a \mapsto \fold\, (\inr w)\}$, even though $\emptyset \sqsubseteq \{a \mapsto \fold\, (\inl v)\}$.
As such, the real cost model is only persistent if name clashes do not occur.

We could fix this issue in a more formal way by maintaining an explicit substitution in the definition of persistence.
A cost model would be persistent if for all $\Gamma : e \Downarrow_n \Gamma' : v$ and $\Gamma \sqsubseteq \Delta$,
there exist $\Delta', k, w$ and a substitution of pointers $\sigma$ 
such that $\Delta : e \Downarrow_k \Delta' : w$, $k \leq n$, $\sigma(v) = w$, $\sigma(\Gamma) = \Gamma$ and $\sigma(\Gamma') \sqsubseteq \Delta'$.
\end{remark}

\subsection{The Bankers Method, Revisited}

The Bankers Method is not persistent. To see why, consider how the heaps may change during evaluation.
Because we can both save and spend credits,
our accessibility relation has to account for arbitrary changes in the number of credits:

\begin{center}
  \AxiomC{$0 \leq n \leq m$}
  \RightLabel{[save]}
  \UnaryInfC{$\Gamma, a \mapsto_n hv \sqsubseteq \Gamma, a \mapsto_m hv$}
  \DisplayProof
  \qquad
  \AxiomC{$n \geq m \geq 0$}
  \RightLabel{[spend]}
  \UnaryInfC{$\Gamma, a \mapsto_n hv \sqsubseteq \Gamma, a \mapsto_m hv$}
  \DisplayProof
\end{center}

\begin{lemma}
\label{lem:bankers-accessible}
The Bankers cost model $((\Downarrow^{\mathrm{B}}, \Phi^{\mathrm{B}}, \stripdot^{\mathrm{B}}), \sqsubseteq^{\mathrm{B}})$ is accessible.
\end{lemma}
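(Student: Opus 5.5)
We have to show that $\Gamma \sqsubseteq^{\mathrm{B}} \Delta$ holds iff $\Gamma : e \Downarrow^{\mathrm{B}}_n \Delta : v$ for some $e,n,v$. Here $\sqsubseteq^{\mathrm{B}}$ is generated by [refl], [trans], [alloc] and [force] (now with credit-annotated heaps), together with [save] and [spend]. We prove the two directions separately.

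\textbf{Soundness of the relation} ($\Downarrow^{\mathrm{B}}$ implies $\sqsubseteq^{\mathrm{B}}$). We argue by induction on the derivation of $\Gamma : e \Downarrow^{\mathrm{B}}_n \Delta : v$.
\begin{itemize}
  \item The rules (value), (unfold) and (recall) leave the heap unchanged, so [refl] applies.
  \item The rules (fold), (lazy) and (memo) allocate a fresh cell, so [alloc] applies.
  \item The rules (case), (split), (app) and (call) follow directly from the induction hypothesis, and (let) follows from it together with [trans].
  \item The rule (force) is exactly [force], using the induction hypothesis for the premise $\Gamma \sqsubseteq \Delta$.
  \item (save) is an instance of [save].
  \item (spend) first moves from $a \mapsto_{n+m}$ to $a \mapsto_n$ by [spend], then evaluates $e$. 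We obtain the result by [trans] with the induction hypothesis.
\end{itemize}

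\textbf{Completeness of the relation} ($\sqsubseteq^{\mathrm{B}}$ implies $\Downarrow^{\mathrm{B}}$). We argue by induction on the derivation of $\Gamma \sqsubseteq \Delta$ and build a witness expression in each case.
\begin{itemize}
  \item For [refl], take $e = ()$.
  \item For [alloc], take $\fold v$, $\lazy_F\, v$ or $\memo\, v$ according to the shape of $hv$. Fresh-name determinism lets us choose the pointer $a$. Allocations come with $0$ credits, so a nonzero annotation $a \mapsto_n hv$ would need a following $\save{n}{a}$. Either [alloc] is read with zero credits or we witness it by a let composing the allocation with a save.
  \item For [save] with $n \le m$, take $\save{(m-n)}{a}$.
  \item For [spend] with $n \ge m$, take $\spend{(n-m)}{a}{()}$. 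It evaluates with cost $\max(0-(n-m),0) = 0$ and leaves $a \mapsto_m hv$.
  \item For [trans], we have witnesses $e_1$ from $\Gamma$ to $\Delta$ and $e_2$ from $\Delta$ to $\Theta$. Take $\llet x = e_1 \lin e_2$ with $x$ not free in $e_2$; since $e_1,e_2$ are closed, substitution is trivial. We must assume that all allocations are globally fresh, as the paper does.
  \item For [force], the second premise already provides $\Gamma : F\, v \Downarrow_k \Delta : w$. We extend the heap with $a \mapsto \lazy_F\, v$ and take $e = \force a$. We must check that adding the cell $a$ does not disturb the sub-derivation. This is a routine weakening lemma: evaluation is stable under adding unused cells whose names are not allocated.
\end{itemize}

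\textbf{Main obstacle.} The main difficulty is in this last step and in [trans]: we need a frame/weakening property for $\Downarrow^{\mathrm{B}}$ and must respect the global-freshness convention. In particular, in [force] the sub-evaluation runs with $a$ removed from the heap, so we need $a$ not to be re-allocated there. We would first state a small frame lemma, proved by a straightforward induction over the rules, and then invoke the freshness convention stated in the preliminaries.
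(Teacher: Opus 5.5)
Your proof is correct and is essentially the paper's argument. The paper factors it through an auxiliary relation $\sqsubseteq_{e,k,w}$ that merely restates $\Downarrow^{\mathrm{B}}$. Its case constructions are the same as yours: [spend] is witnessed by $\spend{m}{a}{()}$, and (spend) is handled by [spend] followed by [trans]. The weakening lemma and freshness assumption you flag as the main obstacle are not actually needed. The (force) rule's premise already runs on the heap with $a$ removed, exactly matching the second premise of [force], and (let) composes the two [trans] derivations sequentially.
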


\begin{proposition}[\cite{Okasaki:purefun}]
The Bankers cost model $((\Downarrow^{\mathrm{B}}, \Phi^{\mathrm{B}}, \stripdot^{\mathrm{B}}), \sqsubseteq^{\mathrm{B}})$ is not persistent.
\end{proposition}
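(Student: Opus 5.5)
We refute persistence with a concrete counterexample: an evaluation $\Gamma : e \Downarrow^{\mathrm{B}}_n \Gamma' : v$ and a heap $\Delta$ with $\Gamma \sqsubseteq^{\mathrm{B}} \Delta$ such that $e$ cannot be evaluated from $\Delta$ at all. Evaluation will fail from $\Delta$ because it gets stuck, not because it merely costs more. The key observation is that the accessibility relation contains the [spend] rule, so credits can disappear from a cell between two operations. The (spend) rule, however, requires the credits to actually be present.

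Take the simplest instance. Let $\Gamma = (a \mapsto_1 \fold\,())$ and $e = \spend{1}{a}{a}$. By (value) we have $a \mapsto_0 \fold\,() : a \Downarrow^{\mathrm{B}}_0 a \mapsto_0 \fold\,() : a$. Then (spend) with $n=0$, $m=1$ gives
\[
  a \mapsto_1 \fold\,() : \spend{1}{a}{a} \Downarrow^{\mathrm{B}}_{0} a \mapsto_0 \fold\,() : a,
\]
since $\max(0-1,0)=0$. Now let $\Delta = (a \mapsto_0 \fold\,())$. By [spend] with $1 \geq 0 \geq 0$ we have $\Gamma \sqsubseteq^{\mathrm{B}} \Delta$. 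Operationally, this is witnessed by running the same expression once, as in the persistent double increment of the counter.

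Next we show that there are no $\Delta', k$ with $\Delta : \spend{1}{a}{a} \Downarrow^{\mathrm{B}}_k \Delta' : a$. This is an inversion on the big-step rules. The expression is syntactically a spend, and (spend) is the only rule whose conclusion has that shape. Applying it requires the heap to have the form $\Delta_0, a \mapsto_{n+1} hv$ with $n \geq 0$. But the only cell of $\Delta$ carries $0$ credits, so no rule applies and evaluation is stuck. Hence the persistence condition fails, and the Bankers cost model is not persistent. This is the formal analogue of the counter example, where the second increment of \texttt{ones} cannot spend the credits already consumed by the first.

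The only delicate step is this inversion, which is routine. I would also remark that the failure is already stuck-ness, so it does not depend on the allowed cost bound $k \leq n$. If one prefers a natural example, the same argument applies verbatim to $\mathrm{incr}$ on \texttt{ones} after a first increment has spent its credits. That version needs a short calculation of the heap after the first increment and is only illustrative.
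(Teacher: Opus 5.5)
Your proposal is correct and is essentially the paper's proof. The paper uses the same one-cell counterexample: $\Gamma = \{a \mapsto_5 \fold v\}$ and a spend expression, with [spend] giving $\Gamma \sqsubseteq \Delta$ for the credit-less heap $\Delta$, from which (spend) is stuck. Your version differs only in the constants (spending $1$ credit instead of $5$, and returning $a$ instead of $()$), and your inversion argument is exactly the detail the paper leaves implicit.
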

\begin{minipage}{0.65\textwidth}
\begin{proof}
Define $\Gamma = \{a \mapsto_5 \fold v\}$, $\Gamma' = \{a \mapsto_0 \fold v\}$ and $e = \spend{5}{a}{()}$.
Then $\Gamma : e \Downarrow^{\mathrm{B}}_0 \Gamma' : ()$.
Define $\Delta = \{a \mapsto_{0} \fold v\}$. Then $\Gamma \sqsubseteq \Delta$ by the [spend] rule.
But $\Delta : e$ is stuck, since the (spend) rule only applies if there are enough credits.
\end{proof}
\end{minipage}
\hfill
\begin{minipage}{0.33\textwidth}
\begin{center}
\begin{tikzcd}[row sep=large, column sep=5em, every label/.append style={scale=1.3}]
\Gamma \arrow[d, "{\text{[spend]}}"] \arrow[r, "{\text{(spend)}}"] & \Gamma' \\
\Delta &
\end{tikzcd}
\end{center}
\end{minipage}
\vspace{0.1cm}

This proposition shows the essential problem with the Bankers method in a persistent setting.
In order to perform amortised analysis of a data structure, we need to establish an invariant
on the number of credits it contains. But in a persistent setting, an operation on a
different version of the data structure may spend the credits and invalidate these invariants.

\section{Persistent Amortised Analysis with Credits}
\label{sec:credits}

The traditional Bankers method does not work in a persistent setting,
since it allows us to save and spend credits in an arbitrary way.
Okasaki~\cite{Okasaki:purefun} proposes to fix this issue by augmenting data structures with thunks.
We discuss his reasoning style in the next section.
Beforehand, we want to discuss a simpler reasoning style that is inspired by Danielsson's variant of debit passing style~\cite{Danielsson:amortisation}.
In this style, credits are saved only on lazy thunks, and may only be spent on the update when the thunk is forced~\cite{Pilkiewicz:monotonic}.
In this work, we omit all references to debits and consequently call it credit passing style.

Why does this approach allow for persistent amortised analysis?
One way to understand this is in terms of \textit{linearity}: the restriction that credits may not be duplicated.
The traditional analysis fails because it duplicates credits when spending them on an operation,
and maintains them on the previous version of the data structure at the same time.
In contrast, lazy thunks keep their content linear:
even if there are several references to the thunk, it is only forced once and the credits on it are only spent once.
If credits are only saved on lazy thunks, they stay linear,
even if the thunks themselves are used non-linearly.

\subsection{Credit Passing Style}

While the Bankers Method allows credit annotations on all heap values,
credit passing style only allows them on lazy thunks in the heap:
\begin{align*}
&\Gamma ::= \emptyset \mid \Gamma, a \mapsto hv \mid \Gamma, a \mapsto_n \lazy_F\; v
\end{align*}

The (memo) and (recall) rules remain unchanged, but we need to modify the rules for lazy thunks.
When we force a lazy thunk, we now take the credits stored on the thunk to pay for the computation.
If the thunk does not have enough credits, the evaluation can get stuck.
To ensure that enough credits are available, we add a (save) rule, which saves credits on a lazy thunk.
If we save credits on a memoised thunk, they will be wasted.

\begin{center}
    \AxiomC{$F \in \mathcal{F}$}
    \AxiomC{$a \not\in \mathrm{dom}(\Gamma)$}
    \RightLabel{(lazy)}
    \BinaryInfC{$\Gamma : \lazy_F\; v \Downarrow_0 (\Gamma, a \mapsto_0 \lazy_F\; v) : a$}
    \DisplayProof
    \qquad
    \AxiomC{$a \mapsto \memo\; v \in \Gamma$}
    \RightLabel{(waste)}
    \UnaryInfC{$\Gamma : \save{m}{a} \Downarrow_{m} \Gamma : a$}
    \DisplayProof
\end{center}
\begin{center}
    \AxiomC{\phantom{$x$}}
    \RightLabel{(save)}
    \UnaryInfC{$(\Gamma, a \mapsto_n \lazy_F\; v) : \save{m}{a} \Downarrow_{m} (\Gamma, a \mapsto_{n+m} \lazy_F\; v) : a$}
    \DisplayProof
\end{center}
\begin{center}
    \AxiomC{$\Gamma : F\, v \Downarrow_k \Delta : w$}
    \AxiomC{$k \leq n$}
    \RightLabel{(force)}
    \BinaryInfC{$(\Gamma, a \mapsto_n \lazy_F\; v) : \mathrm{force}\; a \Downarrow_{0} (\Delta, a \mapsto \memo\; w) : w$}
    \DisplayProof
\end{center}

We call this the credit passing semantics and write $\Downarrow^{\mathrm{C}}_k$ to refer to it.
The potential function $\Phi^{\mathrm{C}}$ counts the total number of credits in the heap
and the $\stripdot^{\mathrm{C}}$ function removes the credit annotations:
\begin{equation*}
  \Phi^{\mathrm{C}}(\emptyset) = 0 \quad\quad \Phi^{\mathrm{C}}(\Gamma, a \mapsto_n \lazy_F\; v) = \Phi^{\mathrm{C}}(\Gamma) + n \quad\quad \Phi^{\mathrm{C}}(\Gamma, a \mapsto hv) = \Phi^{\mathrm{C}}(\Gamma)
\end{equation*}
\begin{equation*}
  \strip{\emptyset}^{\mathrm{C}} = \emptyset \quad\quad \strip{\Gamma, a \mapsto_n \lazy_F\; v}^{\mathrm{C}} = \strip{\Gamma}^{\mathrm{C}}, a \mapsto \lazy_F\; v \quad\quad \strip{\Gamma, a \mapsto hv}^{\mathrm{C}} = \strip{\Gamma}^{\mathrm{C}}, a \mapsto hv
\end{equation*}

\begin{lemma}
\label{lem:credit-passing-sound}
  The credit passing cost model $(\Downarrow^{\mathrm{C}}, \Phi^{\mathrm{C}}, \stripdot^{\mathrm{C}})$ is sound.
\end{lemma}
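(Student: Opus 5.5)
We argue by induction on the derivation of $\Gamma : e \Downarrow^{\mathrm{C}}_n \Delta : v$, proving both soundness conditions together. For each rule we build a real derivation $\strip{\Gamma} : e \Downarrow^{\mathrm{R}}_k \strip{\Delta} : v$ and check $k + \Phi^{\mathrm{C}}(\Delta) - \Phi^{\mathrm{C}}(\Gamma) \leq n$. As in \cref{lem:bankers-sound}, $\save{m}{a}$ is a no-op in the real semantics that returns $a$ at cost $0$. Since erasure only removes annotations, $\strip{\cdot}$ commutes with heap extension and lookup. Hence the structural rules (value, case, split, app, call, fold, unfold, memo, recall) map directly to their real counterparts. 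These rules also leave the credits unchanged, so the inequality follows from the induction hypothesis, using $+1$ on both sides for (app) and (call).

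For (let), apply the induction hypothesis to both premises, obtaining $k_1 + \Phi^{\mathrm{C}}(\Delta) - \Phi^{\mathrm{C}}(\Gamma) \leq n_1$ and $k_2 + \Phi^{\mathrm{C}}(\Theta) - \Phi^{\mathrm{C}}(\Delta) \leq n_2$. Adding them telescopes the potentials and gives the bound for $n_1 + n_2$. The remaining base cases are checked directly:
\begin{itemize}
\item (lazy) allocates $a \mapsto_0 \lazy_F\, v$, so it adds no potential, and its erasure is the real (lazy) rule.
\item (save) adds $m$ to $\Phi^{\mathrm{C}}$ at cost $m$ and real cost $0$, giving $0 + m \leq m$.
\item (waste) costs $m$, does not change $\Phi^{\mathrm{C}}$, and erases to the no-op, giving $0 \leq m$.
\end{itemize}

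The main case is (force), where the lazy thunk carries $n$ credits. The induction hypothesis on $\Gamma : F\,v \Downarrow^{\mathrm{C}}_k \Delta : w$ gives a real derivation $\strip{\Gamma} : F\,v \Downarrow^{\mathrm{R}}_{k'} \strip{\Delta} : w$ with $k' + \Phi^{\mathrm{C}}(\Delta) - \Phi^{\mathrm{C}}(\Gamma) \leq k$. The erased initial heap is $\strip{\Gamma}, a \mapsto \lazy_F\, v$ and the erased final heap is $\strip{\Delta}, a \mapsto \memo\, w$, so the real (force) rule yields a real cost of $k'$. The initial potential is $\Phi^{\mathrm{C}}(\Gamma) + n$, while the final potential is only $\Phi^{\mathrm{C}}(\Delta)$, because memoised cells carry no credits. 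We must therefore show
\[
k' + \Phi^{\mathrm{C}}(\Delta) - \Phi^{\mathrm{C}}(\Gamma) - n \leq 0 .
\]
This follows from the induction hypothesis together with the side condition $k \leq n$: the left-hand side is at most $k - n \leq 0$. This is the only step where the credits stored on the thunk pay for real work. Any excess $n - k$ is simply discarded, which is sound because it only decreases the potential.

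The one subtlety, and the step I expect to need most care, is the heap bookkeeping in (force). The thunk cell $a$ is removed from the heap while $F\,v$ runs, so its credits do not appear in $\Phi^{\mathrm{C}}(\Gamma)$ or $\Phi^{\mathrm{C}}(\Delta)$ and cannot be spent twice. We also need the global freshness assumption from \cref{sec:prelim} to guarantee that $a \notin \mathrm{dom}(\Delta)$, so that re-inserting $a \mapsto \memo\, w$ is well defined in both semantics. Once this is fixed, potential additivity over disjoint heap extensions, namely $\Phi^{\mathrm{C}}(\Gamma, a \mapsto_n \lazy_F\, v) = \Phi^{\mathrm{C}}(\Gamma) + n$, follows directly from the definition.
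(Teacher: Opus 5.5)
Your proof is correct and follows the paper's argument: induction on the $\Downarrow^{\mathrm{C}}$ derivation, with the same potential calculations for (lazy), (save) and (waste), and the same key (force) step $k' + \Phi^{\mathrm{C}}(\Delta) - \Phi^{\mathrm{C}}(\Gamma) - n \leq k - n \leq 0$. You also spell out the (let) telescoping and the structural rules, which the paper leaves implicit. Your appeal to global freshness is harmless but not needed: erasure preserves domains, so well-formedness of $\strip{\Delta}, a \mapsto \memo\; w$ is already inherited from the given credit-passing derivation.
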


Let us consider how the heaps evolve in the credit passing semantics.
Our accessibility relation $\sqsubseteq^{\mathrm{C}}$ extends $\sqsubseteq^{\mathrm{R}}$ by the rules:

\begin{center}
  \AxiomC{$hv = \lazy_F\; v$}
  \AxiomC{$n \leq m$}
  \RightLabel{[save]}
  \BinaryInfC{$(\Gamma, a \mapsto_n hv) \sqsubseteq (\Gamma, a \mapsto_m hv)$}
  \DisplayProof
  \quad
  \AxiomC{$\Gamma \sqsubseteq \Delta$}
  \AxiomC{$\Gamma : F\, v \Downarrow_k \Delta : w$}
  \AxiomC{$k \leq n$}
  \RightLabel{[force]}
  \TrinaryInfC{$(\Gamma, a \mapsto_n \lazy_F\; v) \sqsubseteq (\Delta, a \mapsto \memo\; w)$}
  \DisplayProof
\end{center}

Unlike in the traditional Bankers Method,
the number of credits on a \textit{lazy} thunk may only increase in this semantics.
It can decrease when the thunk is forced,
but at that point the thunk is replaced by a memoised value,
which costs nothing to force.
This ensures that the heaps can only become ``better'' over time:

\begin{lemma}
\label{lem:credit-passing-persistent}
The credit passing cost model $((\Downarrow^{\mathrm{C}}, \Phi^{\mathrm{C}}, \stripdot^{\mathrm{C}}), \sqsubseteq^{\mathrm{C}})$ is uniformly persistent.
\end{lemma}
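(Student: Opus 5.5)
We need to show: for every $\Gamma : e \Downarrow^{\mathrm{C}}_n \Gamma' : v$ and every $\Gamma \sqsubseteq^{\mathrm{C}} \Delta$, there is $\Delta'$ with $\Delta : e \Downarrow^{\mathrm{C}}_n \Delta' : v$ and $\Gamma' \sqsubseteq^{\mathrm{C}} \Delta'$.

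We first reduce to single steps of $\sqsubseteq^{\mathrm{C}}$. Persistence is closed under [trans]: apply the hypothesis twice and compose the resulting $\sqsubseteq$ relations. [refl] is trivial. So we only need one-step generators [alloc], [save] and [force], each applied to a heap $\Gamma$. Fix such a generator and proceed by induction on the derivation of $\Gamma : e \Downarrow^{\mathrm{C}}_n \Gamma' : v$, exactly as in the appendix proof of \cref{lem:real-persistent}.

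\textbf{Structural rules.} The rules (value), (case), (split), (app), (call), (fold) and (unfold) are immediate. Either the heap is untouched and we reuse the same generator on the output, or we appeal to the induction hypothesis. For (fold), (lazy) and (memo), global freshness of pointers means the new cell does not clash with the cell affected by the generator. The two changes therefore commute, and the same generator relates the outputs. For (let) we apply the induction hypothesis to $e_1$, which gives $\Gamma_1 \sqsubseteq \Delta_1$ (possibly via a composite relation). We then apply the induction hypothesis to $e_2$, strengthened to arbitrary $\sqsubseteq^{\mathrm{C}}$-related heaps, and add the costs. The costs agree exactly because each subderivation preserves $n$. So the induction must be stated for the full relation $\sqsubseteq^{\mathrm{C}}$, and the generator decomposition is only used in the thunk-touching cases.

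\textbf{Thunk rules.} These are the cases where uniformity matters.
\begin{itemize}
\item (save) on a lazy thunk $a$ with $n$ credits, when the larger heap has $a$ still lazy with $m \geq n$ credits. We save onto $m$, getting $m+k \geq n+k$, related by [save]. The cost is $k$ in both.
\item (save) when $a$ has been memoised in $\Delta$ by [force]. We use (waste), at the same cost $k$. The outputs are related because [force] applied to $a \mapsto_{n+k}\lazy_F\,v$ is still valid: the premise $k' \leq n \le n+k$ still holds.
\item (waste) is unaffected, since memoised cells never change under $\sqsubseteq$.
\item (force) on a lazy $a$ with $n$ credits, when $a$ is still lazy in $\Delta$ with $m \geq n$ credits. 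We apply the induction hypothesis to $F\,v$ and reuse the credit check $k \le n \le m$. Both sides cost $0$, and the resulting memo cells match.
\item (force) on $a$ when $a$ has been memoised in $\Delta$. We use (recall), which costs $0$. The memoised value coincides by determinism: the premise of [force] contains the same evaluation $F\,v$, and evaluation is deterministic given deterministic allocation. The outputs are related by [refl], or by the residual relation on the rest of the heap.
\item (recall) is unchanged.
\end{itemize}

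\textbf{Main obstacle.} The hard part is the (force) case where the thunk is lazy in both heaps but other parts of $\Delta$ differ. Here we need the induction hypothesis on the inner derivation $\Gamma : F\,v \Downarrow_k \Gamma_1 : w$. That yields $\Delta_1$ with the same $k$. We then reassemble $\Gamma_1, a\mapsto \memo\,w \sqsubseteq \Delta_1, a \mapsto \memo\,w$, which requires $\sqsubseteq^{\mathrm{C}}$ to be a congruence for extending both heaps with an identical cell.

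A further subtlety arises when the accessibility derivation $\Gamma\sqsubseteq\Delta$ has itself forced thunks that the inner evaluation will force. This is handled by the nested induction: the [force] premises of $\sqsubseteq$ carry their own evaluations, which determinism makes agree with ours. I would first prove a frame lemma, namely that $\Gamma\sqsubseteq\Delta$ implies $\Gamma,a\mapsto hv \sqsubseteq \Delta, a\mapsto hv$ for fresh $a$. I would also prove a determinism lemma for $\Downarrow^{\mathrm{C}}$ restricted to results and memo values. Uniformity ($k=n$) is then automatic, since every case replays the same rule or one of equal cost: (waste) for (save), and (recall) for a free (force).
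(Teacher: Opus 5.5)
Your plan is the paper's plan. You induct on the evaluation, case-split against the generators of $\sqsubseteq^{\mathrm{C}}$, and use the same commuting squares: (force) against [save] via $k \le n \le m$, (save) against [force] via (waste), and (force) against [force] via (recall) and determinism.

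\textbf{The gap.} It is in the case you call the main obstacle. Suppose $a$ is lazy in both heaps, with $\Gamma = \Gamma_0, a \mapsto_n \lazy_F\, v$ and $\Delta = \Delta_0, a \mapsto_m \lazy_F\, v$. The inner derivation of (force) runs in $\Gamma_0$, where $a$ has been removed. Applying the induction hypothesis to it therefore requires $\Gamma_0 \sqsubseteq^{\mathrm{C}} \Delta_0$. Your frame lemma only adds cells, and deleting a cell does not preserve $\sqsubseteq^{\mathrm{C}}$.

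\textbf{A counterexample.} Take $F(x) = x$ and $G(x) = \save{1}{x}$. Let $\Gamma = \{a \mapsto_1 \lazy_F\,(),\ b \mapsto_2 \lazy_G\, a\}$ and $\Delta = \{a \mapsto_2 \lazy_F\,(),\ b \mapsto \memo\; a\}$.
\begin{itemize}
\item $\Gamma \sqsubseteq^{\mathrm{C}} \Delta$ holds by a single [force] on $b$. Forcing $b$ costs $2 \le 2$ and moves one credit onto $a$.
\item $\Gamma : \force a$ succeeds, since its inner cost is $1 \le 1$.
\item However, $\{b \mapsto_2 \lazy_G\, a\} \not\sqsubseteq^{\mathrm{C}} \{b \mapsto \memo\; a\}$. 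Memoising $b$ in a heap without $a$ gets stuck on $\save{1}{a}$, because neither (save) nor (waste) applies. No rule of $\sqsubseteq^{\mathrm{C}}$ ever removes a cell, so there is no other route.
\end{itemize}
So the step where you apply the induction hypothesis to $F\,v$ fails, even though the lemma's conclusion still holds in this example. Decomposing into generators does not help, because the counterexample is already a single generator step.

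\textbf{What is missing.} You need a commutation argument rather than a deletion argument. In the example, $\Gamma' = \{a \mapsto \memo\,(),\ b \mapsto_2 \lazy_G\, a\}$ reaches $\Delta' = \{a \mapsto \memo\,(),\ b \mapsto \memo\; a\}$ by replaying the force of $b$ \emph{after} $a$ has been memoised. At that point $\save{1}{a}$ becomes a (waste).

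The paper's inner induction over $\Gamma \sqsubseteq \Delta$ never deletes $a$. At a [force] step on some $b \neq a$, it applies the hypothesis to that step's premise, which still contains $a$. It then re-adds $b \mapsto \memo\, w'$ using \cref{lem:real-weakening}.

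Whichever route you take, you still have to treat separately the sub-case where $F\,v$ itself forces or saves on $b$. In the forcing sub-case, $b$'s body cannot touch $a$, because $a$ is blackholed at that point. You must also show that the two output heaps are again related, which is itself a commutation of the force of $b$ past the evaluation. Your frame and determinism lemmas are the right auxiliary tools, but as stated they do not close this case.
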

\begin{proof} If we force a thunk on which more credits have been saved,
  those credits are discarded. If we save credits on a thunk that has been forced,
  we waste these credits.
\begin{center}
\begin{tikzcd}[row sep=large, column sep=7em, every label/.append style={scale=1.3}]
\Gamma \arrow[d, "{\text{[save]}}"] \arrow[r, "{\text{(force)}}"] & \Gamma' \arrow[d, dashed, "{\text{[refl]}}"] \\
\Delta \arrow[r, dashed, "{\text{(force)}}"] & \Delta'
\end{tikzcd}
\qquad
\begin{tikzcd}[row sep=large, column sep=7em, every label/.append style={scale=1.3}]
\Gamma \arrow[d, "{\text{[force]}}"] \arrow[r, "{\text{(save)}}"] & \Gamma' \arrow[d, dashed, "{\text{[force]}}"] \\
\Delta \arrow[r, dashed, "{\text{(waste)}}"] & \Delta'
\end{tikzcd}
\end{center}
\end{proof}

\subsection{Persistent Binary Counter}

To illustrate the credit passing technique, we add thunks to the binary counter.
We consider thunks as internal nodes that correspond to a delayed increment operation.
When we force a thunk, we perform one step of the increment operation:

\vspace{0.2cm}
\begin{center}
\begin{tikzpicture}[
  node distance=0.7cm and 1cm,
  counter node/.style={rectangle, draw=black, thick, minimum width=1.2cm, minimum height=0.8cm, font=\small},
  label node/.style={font=\small},
  >=stealth,
]

  % Left binary counter (11001) - vertical
  \node[counter node] (c2_1) {Incr};
  \node[label node, left=0.5cm of c2_1] (c2_0) {};
  \node[counter node, right=1cm of c2_1] (c2_2) {Zero};
  \node[label node, right=1cm of c2_2] (c2_3) {c};
  \node[counter node, below=0.5cm of c2_1] (c3_1) {Incr};
  \node[label node, left=0.5cm of c3_1] (c3_0) {};
  \node[counter node, right=1cm of c3_1] (c3_2) {One};
  \node[label node, right=1cm of c3_2] (c3_3) {c'};
  \node[counter node, below=0.5cm of c3_1] (c1_1) {Incr};
  \node[label node, left=0.5cm of c1_1] (c1_0) {};
  \node[counter node, right=1cm of c1_1] (c1_2) {End};

  % Links between left counter nodes
  \draw[->, dashed] (c1_0) -- (c1_1);
  \draw[->, dashed] (c2_0) -- (c2_1);
  \draw[->, dashed] (c3_0) -- (c3_1);
  \draw[->] (c1_1) -- (c1_2);
  \draw[->] (c2_1) -- (c2_2);
  \draw[->] (c2_2) -- (c2_3);
  \draw[->] (c3_1) -- (c3_2);
  \draw[->] (c3_2) -- (c3_3);

  % Transformation arrow
  \draw[->] (5, -1.3) -- (6, -1.3) node[midway, above=0cm] {force};

  % Right binary counter (00101) - vertical
  \node[counter node, right=6.5cm of c2_1] (c5_1) {One};
  \node[label node, left=0.5cm of c5_1] (c5_0) {};
  \node[label node, right=1cm of c5_1] (c5_2) {force c};
  \node[counter node, below=0.5cm of c5_1] (c6_1) {Zero};
  \node[label node, left=0.5cm of c6_1] (c6_0) {};
  \node[counter node, right=1cm of c6_1] (c6_2) {Incr};
  \node[label node, right=1cm of c6_2] (c6_3) {c'};
  \node[counter node, below=0.5cm of c6_1] (c4_1) {One};
  \node[label node, left=0.5cm of c4_1] (c4_0) {};
  \node[counter node, right=1cm of c4_1] (c4_2) {End};
  
  % Links between right counter nodes
  \draw[->, dashed] (c5_0) -- (c5_1);
  \draw[->, dashed] (c6_0) -- (c6_1);
  \draw[->, dashed] (c4_0) -- (c4_1);
  \draw[->] (c4_1) -- (c4_2);
  \draw[->] (c5_1) -- (c5_2);
  \draw[->] (c6_1) -- (c6_2);
  \draw[->] (c6_2) -- (c6_3);

\end{tikzpicture}
\end{center}

Again, we define the heaps of binary counters inductively.
This time, each zero-bit is followed by a lazy increment thunk.
We assign one credit to increment thunks, but not to other heap allocations.
\begin{align*}
&\mathrm{Counter}(a \mapsto \fold \texttt{End} : a) & \\
\mathrm{Counter}(\Gamma : a) \implies &\mathrm{Counter}(\Gamma, b \mapsto \fold \texttt{Cons}(\texttt{One}, a) : b) & (b \notin \mathrm{dom}(\Gamma)) \\
\mathrm{Counter}(\Gamma : a) \implies &\mathrm{Counter}(\Gamma, c \mapsto \fold \texttt{Cons}(\texttt{Zero}, b), b \mapsto_1 \lazy_{\texttt{Incr}}\; a : c) & (b,c \notin \mathrm{dom}(\Gamma))
\end{align*}

We change the increment operation as follows:
\begin{align*}
&\mathrm{incr}(c) = \force (\save{2}{(\lazy_{\texttt{Incr}}\;c)}) \\
&\texttt{Incr}(c) = \case (\unfold c) \{ \\
&\quad \texttt{End} \mapsto \fold \texttt{Cons}(\texttt{One}, c) \\
&\quad \texttt{Cons}(n, a) \mapsto \case n \{\\
&\quad\quad \texttt{Zero} \mapsto \fold \texttt{Cons}(\texttt{One}, \force (\save{1}{a})) \\
&\quad\quad \texttt{One}  \mapsto \fold \texttt{Cons}(\texttt{Zero}, \save{1}{(\lazy_{\texttt{Incr}}\;a)}) \,\} \,\}
\end{align*}

To increment a counter $c$, we just add an $\texttt{Incr}(c)$ thunk to the front,
save two credits on it and force it. To show that this works, we have to prove that the
thunk can be forced in two steps and the resulting counter is still well-formed.

\begin{lemma}
\label{lem:counter-credits}
If $\mathrm{Counter}(\Gamma : c)$, then $\Gamma : \texttt{Incr}(c) \Downarrow^{\mathrm{C}}_k \Delta : c'$ with $\mathrm{Counter}(\Delta : c')$ and $k \leq 2$.
\end{lemma}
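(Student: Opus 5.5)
The proof is by case analysis on the last rule of the $\mathrm{Counter}(\Gamma : c)$ predicate. Note that $\texttt{Incr}$ is called directly here, not through a thunk, so its body runs under the credit passing semantics $\Downarrow^{\mathrm{C}}$ and the statement does not count the call step itself. Everything is free except $\save{m}{\cdot}$ (cost $m$) and function calls (cost $1$); forcing a lazy thunk costs $0$ but requires the inner evaluation to cost at most the stored credits. I expect induction to be needed only in the Zero case, where the nested force calls $\texttt{Incr}$ on the tail.

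\textbf{(End).} Here $c \mapsto \fold\texttt{End}$. We allocate $\fold\texttt{Cons}(\texttt{One}, c)$ at cost $0$, and $\mathrm{Counter}$ holds by the One-rule applied to $\mathrm{Counter}(\Gamma : c)$.

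\textbf{(One).} Here $c \mapsto \fold\texttt{Cons}(\texttt{One}, a)$ with $\mathrm{Counter}(\Gamma_0 : a)$, where $\Gamma_0$ is $\Gamma$ minus the cell $c$.
\begin{itemize}
\item We allocate a fresh thunk $b \mapsto_0 \lazy_{\texttt{Incr}}\; a$, save one credit on it (cost $1$), and allocate $\fold \texttt{Cons}(\texttt{Zero}, b)$ at a fresh $c'$.
\item The resulting heap is $\Gamma, b \mapsto_1 \lazy_{\texttt{Incr}}\; a, c' \mapsto \ldots$.
\item The Zero-rule gives $\mathrm{Counter}$ for $\Gamma_0$ extended with $b$ and $c'$. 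The stale cell $c$ is an extra cell; I will argue that $\mathrm{Counter}$ is stable under adding unreachable cells, proved by a small weakening induction, or by reading $\mathrm{Counter}$ as a predicate on a sub-heap.
\item The total cost is $1 \le 2$.
\end{itemize}

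\textbf{(Zero).} Here $c \mapsto \fold\texttt{Cons}(\texttt{Zero}, b)$, $b \mapsto_1 \lazy_{\texttt{Incr}}\; a$, and $\mathrm{Counter}(\Gamma_0 : a)$.
\begin{itemize}
\item We save one credit on $b$ (cost $1$), so $b$ now holds $2$ credits.
\item By the induction hypothesis, $\Gamma_0 : \texttt{Incr}(a) \Downarrow^{\mathrm{C}}_{k'} \Delta_0 : a'$ with $k' \le 2$ and $\mathrm{Counter}(\Delta_0 : a')$.
\item Adding the call step, $\texttt{Incr}\,a$ costs $k'+1 \le 3$. Since $b$ holds only $2$ credits, the (force) rule needs $k'+1 \le 2$, i.e.\ $k' \le 1$, which the induction hypothesis does not give.
\end{itemize}

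This is the main obstacle: the credit accounting, specifically whether the call step of the nested $\texttt{Incr}$ is charged against the thunk's credits. I would first check it against the definitions. If the call step is charged to the thunk, I strengthen the claim to $k \le 1$ with the call step excluded, i.e.\ $\Gamma : \texttt{Incr}\,c \Downarrow^{\mathrm{C}}_{k} \Delta : c'$ with $k \le 2$ including the call. Each body then costs at most $1$:
\begin{itemize}
\item End costs $0$;
\item One costs $1$ (the save);
\item Zero costs $1$ (the save), because the force costs $0$ once the inner call fits in $2$ credits.
\end{itemize}
So the forced thunk's evaluation, call included, costs at most $2$, matching the $2$ credits on $b$. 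The induction hypothesis should therefore be stated with the call step included, and I will state it that way.

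Finishing the Zero case: the force succeeds, $b$ becomes $\memo\; a'$, and we allocate $\fold\texttt{Cons}(\texttt{One}, a')$, whose $\mathrm{Counter}$ follows from $\mathrm{Counter}(\Delta_0 : a')$ by the One-rule plus the weakening for the leftover cells $c, b$. Throughout, freshness uses the paper's global-freshness convention.
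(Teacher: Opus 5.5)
Your final argument is the paper's proof: induction on $\mathrm{Counter}$ with total costs $1$, $2$, $2$ for End, One, Zero (call step included), where the inner $\texttt{Incr}$ call fits exactly into the $2$ credits on the saved thunk. Your weakening for the stale cells, which the paper silently glosses over, is a welcome addition. The ``obstacle'' you hit came only from a misreading. By the (call) rule, the judgement $\Gamma : \texttt{Incr}(c) \Downarrow^{\mathrm{C}}_k$ in the statement already counts the call step, and it is literally the same judgement as the premise of (force). So your ``strengthened'' claim is just the lemma as stated, and the induction hypothesis applies directly.
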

\begin{proof}
By induction on the $\mathrm{Counter}$ predicate. In each case, the function call takes one step.
\begin{itemize}
  \bitem{(End)}
    From $\mathrm{Counter}(\Gamma : c)$, we obtain
    $\mathrm{Counter}(\Gamma, c' \mapsto \fold \texttt{Cons}(\texttt{One}, c) : c')$,
    for a total cost of $1$.
  \bitem{(Zero)} 
    From $\mathrm{Counter}(\Gamma, c \mapsto \fold \texttt{Cons}(\texttt{Zero}, a), a \mapsto_1 \lazy_{\texttt{Incr}}\; b : c)$,
    we deduce $\mathrm{Counter}(\Gamma : b)$.
    We save a credit on $a$ for a total of two credits. This allows us to force the delayed
    $\texttt{Incr(b)}$ computation, which yields $\mathrm{Counter}(\Delta : b')$ by the induction hypothesis.
    We obtain $\mathrm{Counter}(\Delta, c' \mapsto \fold \texttt{Cons}(\texttt{One}, b') : c')$.
    Because we save one credit, the total cost is $2$.
  \bitem{(One)}
    From $\mathrm{Counter}(\Gamma, c \mapsto \fold \texttt{Cons}(\texttt{One}, a) : c)$,
    we deduce $\mathrm{Counter}(\Gamma : a)$. We allocate a new thunk and save a credit on it,
    thus obtaining $\mathrm{Counter}(\Gamma, c' \mapsto \fold \texttt{Cons}(\texttt{Zero}, b), b \mapsto_1 \lazy_{\texttt{Incr}}\; a : c')$.
    Because we save one credit, the total cost is $2$.
\end{itemize}
\end{proof}

In the persistent binary counter, every lazy thunk contains an $\texttt{Incr}$ operation
with a certain number of credits on it.
In more advanced data structures, there are usually thunks for different operations
with different numbers of credits on them.
For example, a persistent double-ended queue would contain thunks
for delayed `cons', `tail', `snoc', and `init' operations.
This can be supported in our framework by annotating thunks $\lazy_{F_i}\; v$
with different operations $F_1, \ldots, F_n$.
Of course, if different operations $F_i$ have different costs, then the reasoning has
to accommodate that --- this is a frequent source of complexity in Okasaki's book.

\subsection{Credit Inheritance}

Lorenzen~\cite{Lorenzen:creditmonad} proposes credit inheritance to analyse Okasaki's Bankers Queue.
Credit passing style cannot be used to analyse that data structure,
as it wastes credits that are saved on a memoised thunk.
To solve this issue, credit inheritance allows each thunk to designate another thunk as its \textit{heir}.
If a thunk has an excess of credits when forced, it may pass excess credits on to its heir instead of wasting them.
Similarly, once a thunk is memoised, all credits saved on it are passed on to the heir as well.

We extend the syntax of heaps to add an heir $h$ on memoised thunks:
\begin{align*}
&\Gamma ::= \emptyset \mid \Gamma, a \mapsto hv \mid \Gamma, a \mapsto_n \lazy_F\; v \mid \Gamma, a \mapsto_h \memo\; v
\end{align*}

Similarly, we extend all existing rules to propagate the heir.
Rules that do not have the judgement as a premise (like (value) or (unfold))
use the empty heir $\emptyset$. The other rules propagate the heir from their premise.
Since the (let) rule has two premises, we add a side-condition that the premises do not choose different heirs:

\begin{center}
  \AxiomC{$\Gamma : e_1 \Downarrow^{h_1}_k \Delta : v$}
  \AxiomC{$\Delta : e_2[v/x] \Downarrow^{h_2}_l \Theta : w$}
  \AxiomC{$|h_1 \cup h_2| \leq 1$}
  \RightLabel{(let)}
  \TrinaryInfC{$\Gamma : \llet x = e_1 \lin e_2 \Downarrow^{h_1 \cup h_2}_{k+l} \Theta : w$}
  \DisplayProof
\end{center}

The key new rule is the (pass) rule, which allows us to designate another thunk as the heir $h$
and pass on credits from our computation to the heir.
We record the heir on the reduction relation $\Downarrow^{\{h\}}_k$ of the current computation.
This rule does not specify how many credits $m$ are passed on to the heir;
this will be determined later by the (force) rule.

\begin{center}
  \AxiomC{$\Gamma : \save{m}{h} \Downarrow^\emptyset_m \Delta : v$}
  \RightLabel{(pass)}
  \UnaryInfC{$\Gamma : \pass{h} \Downarrow^{\{h\}}_m \Delta : v$}
  \DisplayProof
\end{center}

The (force) rule installs the heir annotation on the memoised thunk
and removes it from the judgement. This time we ask that the computation
in the thunk spends \textit{all} the credits available on the thunk.
This ensures that the number $m$ chosen in (pass) is as large as possible
and no excess credits are wasted.

\begin{center}
    \AxiomC{$\Gamma : F\, v \Downarrow^{\{h\}}_k \Delta : w$}
    \AxiomC{$k = n$}
    \RightLabel{(force)}
    \BinaryInfC{$(\Gamma, a \mapsto_n \lazy_F\; v) : \mathrm{force}\; a \Downarrow^\emptyset_{0} (\Delta, a \mapsto_h \memo\; w) : w$}
    \DisplayProof
\end{center}

The (pass) rule allows us to choose the number of credits $m$ to pass on seemingly without constraint.
However, the requirement that the (force) rule spends all credits implies that $m$ has to be the difference
between the number of credits and the cost of the computation.
It is unfortunate that the rules enforce this invariant only indirectly,
but this seems unavoidable in a big-step semantics. In a small-step semantics,
the (pass) rule would have direct access to the number of credits that are left on the thunk
and could choose $m$ deterministically.
This is done in the testing framework of Lorenzen~\cite{Lorenzen:creditmonad}.

Finally, the heir annotation allows us to provide the (inherit) rule.
Unlike the (waste) rule, which discards credits placed on memoised thunks,
the (inherit) rule allows us to retain these credits by passing them on to the heir.

\begin{center}
  \AxiomC{$\Gamma : \save{m}{h} \Downarrow^\emptyset_m \Delta : v$}
  \RightLabel{(inherit)}
  \UnaryInfC{$(\Gamma, a \mapsto_h \memo\; w) : \save{m}{a} \Downarrow^\emptyset_m (\Delta, a \mapsto_h \memo\; w) : a$}
  \DisplayProof
\end{center}

We call this the credit-inheritance semantics and write $\Downarrow^{\mathrm{CI}}_k$ for $\Downarrow^\emptyset_k$.
The potential function $\Phi^{\mathrm{CI}} = \Phi^{\mathrm{C}}$ counts the total number of credits in the heap
and the $\stripdot^{\mathrm{CI}}$ function also removes the heir annotations ($\strip{\Gamma, a \mapsto_h \memo\; v}^{\mathrm{CI}} = \strip{\Gamma}^{\mathrm{CI}}, a \mapsto \memo\; v$).
In the real semantics, we interpret `$\pass{h}$' as a no-op that returns $h$ without changing the heap.

\begin{lemma}
\label{lem:credit-inheritance-sound}
  The credit-inheritance cost model $(\Downarrow^{\mathrm{CI}}, \Phi^{\mathrm{CI}}, \stripdot^{\mathrm{CI}})$ is sound.
\end{lemma}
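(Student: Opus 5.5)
We prove soundness by induction on the derivation of $\Gamma : e \Downarrow^{h}_n \Delta : v$. Because the judgement with an arbitrary heir $h$ appears in premises, we generalise the statement to all heirs $h$ (not just $\emptyset$). The generalised claim is: whenever $\Gamma : e \Downarrow^{h}_n \Delta : v$, we have $\strip{\Gamma} : e \Downarrow^{\mathrm{R}}_k \strip{\Delta} : v$ with $k + \Phi^{\mathrm{CI}}(\Delta) - \Phi^{\mathrm{CI}}(\Gamma) \leq n$. The heir only records where credits went and does not affect the potential bookkeeping, so this is the right strengthening. The lemma is then the instance $h = \emptyset$.

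The structural rules (value), (case), (split), (app), (call), (fold), (unfold), (lazy), (memo) and (recall) are handled as in \cref{lem:bankers-sound} and \cref{lem:credit-passing-sound}. Erasure commutes with each of them, and the potential is unchanged except through the premises. For (app) and (call), the extra step appears on both sides. For (let), we add the two inequalities and telescope the potentials. The heir side condition $|h_1 \cup h_2| \leq 1$ plays no role in cost.

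The credit-specific rules are handled one by one.
\begin{itemize}
\item (save) on a lazy thunk costs $m$ and raises $\Phi$ by $m$, while in the real semantics it is a free no-op.
\item (waste) is unchanged from the credit passing case.
\item (inherit) follows from the induction hypothesis applied to its premise $\save{m}{h}$: the real evaluation is again a no-op with $k=0$, and $\Phi$ rises by at most $m$. The memoised cell carries no credits and erases to $a \mapsto \memo\; w$ on both sides.
\item (pass) follows directly from the induction hypothesis on its $\save{m}{h}$ premise, since the real semantics interprets $\pass{h}$ as a no-op returning $h$. We must check that this agrees with the value returned by the save premise, namely $h$.
\end{itemize}

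The main obstacle is the (force) rule. From the premise $\Gamma : F\,v \Downarrow^{\{h\}}_k \Delta : w$, the induction hypothesis gives $\strip{\Gamma} : F\,v \Downarrow^{\mathrm{R}}_{k'} \strip{\Delta} : w$ with $k' + \Phi(\Delta) - \Phi(\Gamma) \leq k = n$. The real (force) rule then yields $\strip{\Gamma}, a \mapsto \lazy_F\; v : \force a \Downarrow^{\mathrm{R}}_{k'} \strip{\Delta}, a \mapsto \memo\; w : w$. The potential of the starting heap is $\Phi(\Gamma) + n$ and that of the final heap is $\Phi(\Delta)$, because memoised cells with heirs carry no credits. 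Hence $k' + \Phi(\Delta) - (\Phi(\Gamma)+n) \leq 0$, which is the required bound.

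The subtle point is that credits passed to the heir during the premise are already accounted for in $\Phi(\Delta)$, because the (pass) premise is an ordinary save. Nothing is double counted, since the thunk's $n$ credits vanish exactly when the premise's cost $k = n$ is charged. We must also check that the real (force) rule removes $a$ from the heap before evaluating $F\,v$, exactly as the (force) rule of this semantics does, so that the erased heaps match. We will assume fresh-pointer conventions as in the paper.
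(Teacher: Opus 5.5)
Your proposal is correct and follows the paper's proof: induction on the derivation, where the only non-trivial cases are (pass), (inherit) and (force), and (force) is closed by the same computation $k' + \Phi(\Delta) - \Phi(\Gamma) - n \leq k - n = 0$. The differences are only organisational. The paper proves the save/inherit chain in a separate auxiliary lemma (\cref{lem:credit-inheritance-save-sound}) by its own induction, whereas you get the same facts (zero real cost, unchanged erased heap, potential rising by at most $m$) directly from the main induction hypothesis; you also state explicitly the generalisation to arbitrary heirs, which the paper uses implicitly when it applies the hypothesis to the $\Downarrow^{\{h\}}$ premise of (force).
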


The $\sqsubseteq^{\mathrm{CI}}$ relation differs from $\sqsubseteq^{\mathrm{C}}$ only in the new heir annotation in the [force] rule.

\begin{lemma}
\label{lem:credit-inheritance-persistent}
The credit-inheritance cost model $((\Downarrow^{\mathrm{CI}}, \Phi^{\mathrm{CI}}, \stripdot^{\mathrm{CI}}), \sqsubseteq^{\mathrm{CI}})$ is uniformly persistent.
\end{lemma}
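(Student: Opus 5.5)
We prove uniform persistence by induction on the derivation of $\Gamma : e \Downarrow^{h}_n \Gamma' : v$, generalising the statement to arbitrary heir annotations: for all $\Gamma \sqsubseteq^{\mathrm{CI}} \Delta$ there is $\Delta'$ with $\Delta : e \Downarrow^{h}_n \Delta' : v$ and $\Gamma' \sqsubseteq^{\mathrm{CI}} \Delta'$, with the \emph{same} heir set $h$ and the same cost $n$. Keeping $h$ fixed is what lets the (let) side condition $|h_1 \cup h_2| \leq 1$ and the (pass) rule go through unchanged. Before starting, we replace $\sqsubseteq^{\mathrm{CI}}$ by an equivalent pointwise description. $\Gamma \sqsubseteq \Delta$ holds iff $\mathrm{dom}(\Gamma) \subseteq \mathrm{dom}(\Delta)$ and each cell of $\Gamma$ is either unchanged, or is a lazy thunk whose credit count has grown (by [save]), or is a lazy thunk $\lazy_F\,v$ with $n$ credits that has become $\memo\,w$ with some heir $h'$, witnessed by an evaluation of $F\,v$ from a smaller heap. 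This characterisation follows by a routine induction over [refl], [trans], [alloc], [save] and [force], and it reduces the argument to a per-cell case analysis. Throughout, we rely on the global-freshness assumption for allocation from the remark after \cref{lem:real-persistent}.

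Most rules carry over directly. The rules (value), (case), (split), (app), (call), (fold), (unfold), (lazy) and (memo) are unaffected, since the cells they read are either unchanged in $\Delta$ or are the same memoised thunks. The (let) rule is handled by chaining the induction hypothesis through the intermediate heap, keeping $h_1$ and $h_2$ fixed. For (pass) we apply the induction hypothesis to the premise $\save{m}{h}$. For (save) and (waste), the target cell in $\Delta$ is either a lazy thunk with more credits, in which case (save) still applies and $\Gamma' \sqsubseteq \Delta'$ holds by [save], or it is already memoised. Two sub-cases arise in the memoised situation:
\begin{itemize}
\item Under (waste), $\Delta' = \Delta$ and $\Gamma' \sqsubseteq \Delta$ by [force].
\item Under (inherit), we need a derivation of $\save{m}{h'}$ from $\Delta$, where $h'$ is the heir recorded in $\Delta$. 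This is problematic, and is the issue discussed in the final paragraph.
\end{itemize}

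The key cases are (force) and (inherit), and these are the main obstacle. Suppose (force) fires on $a \mapsto_n \lazy_F\,v$ in $\Gamma$. If $a$ is still lazy in $\Delta$ with $n' \geq n$ credits, the rule demands exact spending ($k = n'$). Uniform persistence then fails unless the surplus $n'-n$ can be routed to the heir. To recover $k = n'$, we use the induction hypothesis on $F\,v$, which yields the same heir $h$ and cost $k$, and re-choose the $m$ in the (pass) step so that it forwards $m + (n'-n)$ credits. This requires a small auxiliary lemma: if a derivation with heir $\{h\}$ costs $k$, then for every $j$ there is an otherwise identical derivation costing $k+j$. The lemma is proved by induction, bumping the unique (pass) occurrence, which exists because $h \neq \emptyset$; the (save)/(inherit) chain at $h$ accepts any $m$. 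If instead $a$ is already memoised in $\Delta$, we use (recall) at cost $0$. The value agrees because the [force] witness evaluates $F\,v$ deterministically, as in the proof of \cref{lem:real-persistent}, and $\Gamma' \sqsubseteq \Delta$ by [refl] together with the characterisation.

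For (inherit), and for (save) hitting a cell that $\Delta$ has memoised, we need the $\save{m}{h'}$ evaluation to exist in $\Delta$. We show this by a secondary induction on the length of the heir chain, which is well-founded because heirs are chosen among thunks existing at force time. Each save either lands on a lazy thunk or recurses to that thunk's heir. The resulting $\Gamma' \sqsubseteq \Delta'$ is obtained by [save] on the lazy cell at the end of the chain. I expect the bookkeeping of exact cost $k = n$ in (force) under extra credits to be the delicate step; the fallback if the bumping lemma fails is to prove only $k \leq n$ (plain persistence) and revisit.
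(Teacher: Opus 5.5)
\paragraph{Where the proposal falls short.}
Your outer structure agrees with the paper's. You induct on the evaluation with the heir set and cost held fixed, handle (pass) via its premise, and absorb surplus credits on a still-lazy thunk by enlarging the $m$ of a (pass). The gap is in how you close the diamonds.

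An induction over [refl], [trans], [alloc], [save], [force] only gives you the direction from $\sqsubseteq$ to your pointwise description. Yet every conclusion $\Gamma' \sqsubseteq \Delta'$ in your memoised-in-$\Delta$ cases uses the converse, for instance ``$\Gamma' \sqsubseteq \Delta$ by [refl] together with the characterisation''. That converse is not routine. A [force] step carries a witness evaluation whose allocations, forces and forwarded credits must all be reflected in the target heap, so proving it amounts to the lemma itself.

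The clearest failure is (save) on a thunk $a$ that $\Delta$ has already memoised. In $\Gamma'$, $a$ is still $\lazy_F\,v$ with $n+m$ credits; in $\Delta'$ it is memoised. So no [save] step at the end of the heir chain can relate the two heaps. What you need is:
\begin{itemize}
\item a [force] on $a$ witnessed by an evaluation of $F\,v$ costing exactly $n+m$, i.e.\ the bumped one;
\item the fact that this evaluation's output is the original witness output followed by $\save{m}{h}$, which is exactly what (inherit) produced in $\Delta'$.
\end{itemize}
That fact is \cref{lem:credit-inheritance-pass}. The paper can apply it because its inner induction is on the derivation of $\Gamma \sqsubseteq \Delta$. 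There [force] is a single step whose witness output is literally part of $\Delta$, and later steps are glued on with [trans].

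Your (waste) sub-case is vacuous in this model, since a thunk memoised by [force] always carries an heir. Even if it applied, ``$\Gamma' \sqsubseteq \Delta$ by [force]'' would fail, because the witness spent $n$ credits, not $n+m$.

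\paragraph{The bumping lemma.}
Your bumping lemma is also false as stated. Raising the $m$ of a (pass) to $m+j$ puts $j$ extra credits on the lazy thunk $d$ at the end of $h$'s heir chain. If the same computation later forces $d$, the exact-spend (force) rule no longer applies unless $d$'s own computation is bumped as well, which pushes the surplus further along. So the derivation cannot stay ``otherwise identical''.

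The correct statement tracks where the surplus ends up: the bumped output is related to the unbumped one by $\sqsubseteq_{\save{j}{h},j,h}$. Proving this requires handling input heaps that already carry surplus credits on some chain. That is the (force)/[save] case of persistence itself, so the two must be proved together. The paper simply asserts that the bumped derivation exists and isolates the output relation in \cref{lem:credit-inheritance-pass}.

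That same output relation is what you need, and do not state, to conclude $\Gamma' \sqsubseteq \Delta'$ in your (force) case where $\Delta$ has $n' > n$ credits on $a$. A minor point: (pass) need not be unique, since (let) permits $h_1 = h_2 = \{h\}$.
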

\begin{proof}
  If we force a thunk on which more credits have been saved,
  those extra credits will be passed on to the heir instead of being discarded.
  If we save credits on a thunk that has already been forced,
  those credits will be inherited by the heir instead of being wasted.
\begin{center}
\begin{tikzcd}[row sep=large, column sep=7em, every label/.append style={scale=1.3}]
\Gamma \arrow[d, "{\text{[save]}}"] \arrow[r, "{\text{(force)}}"] & \Gamma' \arrow[d, dashed, "{\text{[save]}}"] \\
\Delta \arrow[r, dashed, "{\text{(force)}}"] & \Delta'
\end{tikzcd}
\qquad\qquad
\begin{tikzcd}[row sep=large, column sep=7em, every label/.append style={scale=1.3}]
\Gamma \arrow[d, "{\text{[force]}}"] \arrow[r, "{\text{(save)}}"] & \Gamma' \arrow[d, dashed, "{\text{[force]}}"] \\
\Delta \arrow[r, dashed, "{\text{(inherit)}}"] & \Delta'
\end{tikzcd}
\end{center}
\end{proof}

Credit inheritance is loosely inspired by Okasaki's debit inheritance~\cite{Okasaki:purefun}.
However, it usually leads to a different style of reasoning.
For example, Lorenzen~\cite[Section~6.6]{Lorenzen:creditmonad} uses it to analyse the Bankers Queue.
In that data structure, the `append' operation creates a long list of thunks, which all need a credit to run.
This is an ideal setting for credit inheritance,
since each thunk can name its successor in the list as its heir.
This makes it possible to store credits on the head of the list,
from which the (inherit) rule propagates them to the first lazy thunk.
To reason about the queue, we thus do not need to know at
which index of the list the first lazy thunk can be found;
as long as the amount of credits saved on the head equals the length of the list,
the computation succeeds.

Credit inheritance also shows the power of using credits rather than debits.
We would not be able to perform the same analysis if we simply replaced credits by debits.
Debits require us to know the cost of updating a thunk when creating it.
In contrast, credit inheritance allows us to let the cost of updating a thunk grow
dynamically to match the number of credits we have:
all excess credits are simply passed on to the heir.
It is possible to emulate this in a debit-based analysis,
but we would have to wait until all debits are paid off,
and then change the thunk retroactively to include
further save-statements that increase its debits~\cite{Pottier:thunks}.

\section{Persistent Amortised Analysis with Debits}
\label{sec:debits}

Okasaki does not reason in terms of credits, but uses \textit{debits}:
a number that describes how many credits we have to save on a thunk before we can force it.
One may view a debit as just a negative credit:
if the evaluation of a thunk takes $k$ steps and the thunk has $n$ credits,
we can equivalently say that the thunk has $k - n$ debits.
But as we have seen in the last section, one can perform
persistent amortised analysis purely in terms of credits.
If debits are not necessary,
then why does Okasaki's work use debits at all?

We believe that Okasaki's use of debits should be seen as part of his evaluation model of thunks.
In his analysis, Okasaki acts as if thunks were evaluated upon creation.
But he does not pay for the evaluation and instead places the cost as a debit on the result of the thunk.
Debits thus act as a barrier that prevents accesses to the result until the evaluation has been paid for.
Okasaki motivates debits by analogy to a ``layaway plan'',
where you select what to buy and the store holds it for you
until you have fully paid for it~\cite[page~60]{Okasaki:purefun}.

Debits are a natural choice when we assume that thunks are evaluated upon creation.
In that setting, we know the cost of updating the thunk.
We could still use credits and count up until we have enough credits to cover the cost,
but it is much more elegant to use debits and count down until we reach zero remaining debits instead.

\subsection{The Persistent Bankers Method}

Okasaki defines the amortised cost of an operation informally
as the \textit{unshared} cost of actually performing work
plus the number of debits discharged~\cite[page~60]{Okasaki:purefun}.
To formalise this, we track two numbers $\Downarrow_{k,k'}$,
where $k$ is the total number of time steps taken by the evaluation
and $k'$ is the number of debits discharged during the evaluation.
All previous rules can be extended to this setting by either propagating $k'$
from their single premise to their conclusions, returning $k' = 0$ if no premise exists,
or, in the (let) rule,  adding the $k'$ numbers of the two premises.

In our heaps, we now allow debits to be placed on memoised thunks. Unlike credits, debits may be negative.
We write $\memo\; w$ for a memoised thunk that has been accessed by the program
and thus holds no debits.
For a memoised thunk that has not yet been accessed,
we write $\memo^{\Delta}_{F\,v}\; w$ to record the computation $F\,v$ that computes $w$,
as well as the new allocations $\Delta$ created by the computation.
These additional records are not necessary for reasoning, but enable us to connect this semantics to the real semantics later.
Our syntax for heaps is now:
\begin{align*}
&\Gamma ::= \emptyset \mid \Gamma, a \mapsto hv \mid \Gamma, a \mapsto_n \memo^{\Delta}_{F\,v}\; w
\end{align*}

In the (lazy) rule, we now run the computation of the thunk immediately, but do not pay for it.
Instead, we place the unshared cost as a debit on the resulting thunk.
However, if the thunk itself pays off debits, this cost is propagated to the outside.
The (force) rule allows us to access the value of a memoised thunk
once the debt has been paid off. Once we access it, we remove the annotations from the thunk.
Finally, the (save) rule allows us to pay off debits on a memoised thunk
and if the thunk has no debits left, saved credits end up being wasted.

\begin{center}
    \AxiomC{$\Gamma : F\, v \Downarrow_{k,k'} \Delta : w$}
    \AxiomC{$a \not\in \mathrm{dom}(\Delta)$}
    \RightLabel{(lazy)}
    \BinaryInfC{$\Gamma : \lazy_F\; v \Downarrow_{0,k'} (\Delta, a \mapsto_{k} \memo^{\Delta\setminus\Gamma}_{F\,v}\; w) : a$}
    \DisplayProof
\end{center}
\begin{center}
    \AxiomC{$n \leq 0$}
    \RightLabel{(force)}
    \UnaryInfC{$\Gamma, a \mapsto_n \memo^{\Delta}_{F\,v}\; w : \mathrm{force}\; a \Downarrow_{0,0} \Gamma, a \mapsto \memo\; w : w$}
    \DisplayProof
\end{center}
\begin{center}
    \AxiomC{$hv = \memo^{\Delta}_{F\,v}\; w$}
    \RightLabel{(save)}
    \UnaryInfC{$(\Gamma, a \mapsto_n hv) : \save{m}{a} \Downarrow_{0,m} (\Gamma, a \mapsto_{n-m} hv) : a$}
    \DisplayProof
\end{center}
\begin{center}
    \AxiomC{$a \mapsto \memo\; w \in \Gamma$}
    \RightLabel{(waste)}
    \UnaryInfC{$\Gamma : \save{m}{a} \Downarrow_{0,m} \Gamma : a$}
    \DisplayProof
\end{center}

We call this the debit semantics
and write $\Downarrow^{\mathrm{D}}_{k+k'}$ for $\Downarrow_{k,k'}$.
To show that this semantics is sound, we need to undo the evaluation performed in the (lazy) rule.
If a thunk still has an annotation, we restore it to the lazy state
and remove all new allocations that were added during the computation.
In the potential function, we count the number of debits that were paid off,
where $k^{\mathrm{R}}_{F v}$ is the real cost of evaluating the thunk and $n$ is its number of debits:
\begin{align*}
  &\Phi^{\mathrm{D}}(\emptyset) = 0 \quad\quad \Phi^{\mathrm{D}}(\Gamma, a \mapsto_n \memo^{\Delta}_{F\,v}\; w) = \Phi^{\mathrm{D}}(\Gamma) + k^{\mathrm{R}}_{F v} - n \quad\quad \Phi^{\mathrm{D}}(\Gamma, a \mapsto hv) = \Phi^{\mathrm{D}}(\Gamma) \\
  &\strip{\emptyset}^{\mathrm{D}} = \emptyset \quad \strip{\Gamma, a \mapsto_n \memo^{\Delta}_{F\,v}\; w}^{\mathrm{D}} = \strip{\Gamma\setminus\Delta}^{\mathrm{D}}, a \mapsto \lazy_F\; v \quad \strip{\Gamma, a \mapsto hv}^{\mathrm{D}} = \strip{\Gamma}^{\mathrm{D}}, a \mapsto hv
\end{align*}

\begin{lemma}
\label{lem:debit-sound}
The debit cost model $(\Downarrow^{\mathrm{D}}, \Phi^{\mathrm{D}}, \stripdot^{\mathrm{D}})$ is sound.
\end{lemma}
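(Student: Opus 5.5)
We prove the statement by induction on the derivation of $\Gamma : e \Downarrow_{k,k'} \Delta : v$ in the debit semantics. We show simultaneously that $\strip{\Gamma}^{\mathrm{D}} : e \Downarrow^{\mathrm{R}}_{r} \strip{\Delta}^{\mathrm{D}} : v$ and that $r + \Phi^{\mathrm{D}}(\Delta) - \Phi^{\mathrm{D}}(\Gamma) \leq k + k'$.

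Before the induction we need an invariant on heaps, and we will carry it through the proof. For every annotated cell $a \mapsto_n \memo^{\Theta}_{F\,v}\; w$, the following must hold:
\begin{itemize}
\item $\Theta$ is exactly the set of cells allocated by the suspended computation, and it is disjoint from everything allocated later.
\item Restoring the lazy cell and removing $\Theta$ gives a real heap in which $F\,v$ really evaluates, in $k^{\mathrm{R}}_{F v}$ steps, to $w$ and rebuilds $\strip{\Theta}$.
\item $k^{\mathrm{R}}_{F v} - n \geq 0$.
\end{itemize}
The last condition holds because $n$ starts at the debit-semantics cost, which by induction dominates the real cost after potential accounting, and $n$ only decreases.

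Most rules are immediate, because the real semantics treats $\save{m}{a}$ as a no-op and the potential bookkeeping is additive. For (let), the potential differences telescope and the costs add. For (save), the potential grows by $m$ (or by $0$ for (waste)) while $k'$ grows by $m$, which is consistent. The (recall) and (memo) rules on unannotated cells are unaffected by stripping.

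The (force) rule is where the real semantics does the work the debit semantics skipped. Stripping turns the annotated cell back into $\lazy_F\; v$ and hides $\Theta$. By the invariant, the real (force) rule runs $F\,v$ in $k^{\mathrm{R}}_{F v}$ steps and recreates $\Theta$, ending in exactly $\strip{\Gamma, a \mapsto \memo\; w}^{\mathrm{D}}$. The potential drops by $k^{\mathrm{R}}_{F v} - n \geq k^{\mathrm{R}}_{F v}$ since $n \leq 0$, so $r + \Delta\Phi \leq 0 = k+k'$.

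The (lazy) rule runs in the opposite direction. By the induction hypothesis on the premise, the real evaluation of $F\,v$ from $\strip{\Gamma}$ takes some $r_0$ steps with $r_0 + \Phi^{\mathrm{D}}(\Delta) - \Phi^{\mathrm{D}}(\Gamma) \leq k + k'$. In the real semantics we only allocate a lazy cell, at cost $0$, and stripping the new heap indeed removes $\Delta \setminus \Gamma$ and restores the lazy cell. We must also account for the annotated cells inside $\Gamma$ that the premise forced. Their erasures change, so we have to check that $\strip{\cdot}$ of the final heap equals $\strip{\Gamma}$ plus the lazy cell. This seems to require that forcing inside a suspended computation is itself undone by stripping.

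That is the main obstacle. $\Phi$ and $\stripdot$ are defined cell by cell, but suspended computations can force, save on, and allocate other annotated thunks, so nested $\Theta$ sets interact. I would first try to strengthen the induction statement to a relation $\strip{\Gamma} \sqsubseteq^{\mathrm{R}}$-style ``rollback''. Concretely, every state of the debit heap erases to a real heap from which all recorded computations can be replayed, using Lemma~\ref{lem:real-persistent} to argue that replaying them in a heap where some inner thunks are already memoised only makes real costs smaller. The new potential term $k^{\mathrm{R}}_{F v} - k$ must then be bounded using $r_0 \leq k + k' - \Delta\Phi$ from the premise. The $k'$ debits discharged inside the thunk are propagated outward precisely so that this inequality closes.
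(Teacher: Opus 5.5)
Your (let), (save), (waste) and (force) cases, and the cost bookkeeping you sketch for (lazy), match the paper's proof, but the proposal has two genuine gaps.

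\textbf{The (lazy) case is not proved.} You correctly observe that the body of a suspended computation may pay off and force an annotated thunk already in $\Gamma$. In that case the erasure of the result heap need not be $\strip{\Gamma}^{\mathrm{D}}, a \mapsto \lazy_F\; v$. You then only outline a strengthening without carrying it out. The paper's proof handles this case by the bare computation $\strip{\Delta\setminus(\Delta\setminus\Gamma)}^{\mathrm{D}} = \strip{\Gamma}^{\mathrm{D}}$ plus the cost calculation, which tacitly assumes the body only allocates new cells.

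Your concern is real. Take $G(x) = x$, let $\Gamma = \{b \mapsto_1 \memo^{\emptyset}_{G\,()}\; ()\}$ (the result of evaluating $\lazy_G\; ()$ from $\emptyset$), and let $F(x) = \llet y = \save{1}{x} \lin \force x$. Then
$\Gamma : \lazy_F\; b \Downarrow_{0,1} \{b \mapsto \memo\; (),\ a \mapsto_1 \memo^{\emptyset}_{F\,b}\; ()\} : a$,
and the erasure of this heap does not map $b$ to $\lazy_G\; ()$. The real run from $\strip{\Gamma}^{\mathrm{D}} = \{b \mapsto \lazy_G\; ()\}$, however, leaves $b$ lazy. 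So the identity used in that step fails here, and with it the first soundness condition as literally stated.

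A complete argument must therefore do one of two things:
\begin{itemize}
\item exclude this pattern, e.g.\ by letting suspended computations force only thunks they allocate themselves, in the spirit of Okasaki's restriction discussed in the related work; or
\item replace equality of erased heaps by a weaker correspondence. In it, the real heap may lag behind (cells memoised on the debit side but still lazy in reality), and the potential keeps a term for the real work still owed on them.
\end{itemize}
Your proposed appeal to \cref{lem:real-persistent} cannot supply this. Monotonicity bounds the cost of running from a \emph{more} evaluated heap, whereas here the real heap is the \emph{less} evaluated one.

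\textbf{Your justification of $k^{\mathrm{R}}_{F v} - n \geq 0$ runs the wrong way.} The induction hypothesis $r_0 + \Phi^{\mathrm{D}}(\Delta) - \Phi^{\mathrm{D}}(\Gamma) \leq k + k'$ is an \emph{upper} bound on the real cost. Nonnegativity needs the \emph{lower} bound $k \leq k^{\mathrm{R}}_{F v}$ on the debit initially placed on the thunk. The paper obtains this from the separate \cref{lem:debit-cost}. There, (lazy), (force), (save) and (waste) contribute $0$ to the unshared component, so $k$ counts only (call) and (app) steps, which the real run performs as well. Without this bound you do not know that $\Phi^{\mathrm{D}}$ maps into $\mathbb{N}$, as a cost model requires.
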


Let us consider how the heaps evolve in the debit semantics.
Our accessibility relation $\sqsubseteq^{\mathrm{D}}$ on heaps is defined as follows:

\begin{center}
  \AxiomC{$\Gamma \sqsubseteq \Delta$}
  \AxiomC{$\Gamma : F\, v \Downarrow_{k,k'} \Delta : w$}
  \AxiomC{$a \notin \mathrm{dom}(\Delta)$}
  \RightLabel{[lazy]}
  \TrinaryInfC{$\Gamma \sqsubseteq (\Delta, a \mapsto_k \memo^{\Delta\setminus\Gamma}_{F\,v}\; w)$}
  \DisplayProof
\end{center}
\begin{center}
  \AxiomC{$hv = \memo^{\Delta}_{F\,v}\; w$}
  \AxiomC{$n \geq m$}
  \RightLabel{[save]}
  \BinaryInfC{$(\Gamma, a \mapsto_n hv) \sqsubseteq (\Gamma, a \mapsto_m hv)$}
  \DisplayProof
  \quad
  \AxiomC{$n \leq 0$}
  \RightLabel{[force]}
  \UnaryInfC{$(\Gamma, a \mapsto_n \memo^{\Delta}_{F\,v}\; w) \sqsubseteq (\Gamma, a \mapsto \memo\; w)$}
  \DisplayProof
\end{center}

Again, the number of debits on memoised thunks may only decrease in this semantics.
This ensures that the heaps can only become ``better'' over time:

\begin{lemma}
\label{lem:debit-persistent}
The debit cost model $((\Downarrow^{\mathrm{D}}, \Phi^{\mathrm{D}}, \stripdot^{\mathrm{D}}), \sqsubseteq^{\mathrm{D}})$ is uniformly persistent.
\end{lemma}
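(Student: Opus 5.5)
We prove uniform persistence by showing that the debit semantics simulates itself on any accessible heap with \emph{identical} cost pair $(k,k')$. Because $\sqsubseteq^{\mathrm{D}}$ is generated by [refl], [trans], [lazy], [save] and [force], it suffices to handle a single generating step $\Gamma \sqsubseteq \Delta$. The [trans] case then follows by composing the simulations, and [refl] is trivial. For a single step, we proceed by induction on the derivation of $\Gamma : e \Downarrow_{k,k'} \Gamma' : v$. For each rule we construct $\Delta : e \Downarrow_{k,k'} \Delta' : v$ with $\Gamma' \sqsubseteq \Delta'$. The structural rules ((let), (case), (split), (app), (call)) go through directly by the induction hypothesis, threading the accessibility relation through each premise. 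As in \cref{lem:real-persistent}, we rely on the global freshness assumption on pointers, so that allocations in $\Gamma$'s evaluation never clash with cells added in $\Delta$. Rules that do not touch the modified cell are handled by re-applying the same rule in $\Delta$ and closing the square with the same generating step.

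The interesting cases are those where the rule inspects the cell $a$ affected by the accessibility step.

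\textbf{[lazy].} Here $\Delta$ only adds a fresh cell. No evaluation from $\Gamma$ mentions it, and by freshness the same cell is added to $\Gamma'$ as well.

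\textbf{[save], $n \geq m$.} We must check (force), (save) and (waste) on $a$. If $\Gamma$ forces $a$ with $n \le 0$, then $m \le n \le 0$, so (force) applies in $\Delta$ with the same costs $(0,0)$ and the same resulting heap, closed by [refl]. If $\Gamma$ saves $j$ on $a$, producing $n-j$, then $\Delta$ saves $j$ producing $m-j \le n-j$, again with costs $(0,j)$, and we close the square with [save]. The cost $k'$ counts debits \emph{discharged}, not the remaining debt, so the cost is unchanged even though $\Delta$ has fewer debits. This is what makes the result uniform.

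\textbf{[force], $n\le 0$, cell becomes $\memo\; w$.} If $\Gamma$ forces $a$, then $\Delta$ uses (force) on $\memo\; w$. This requires noting that forcing a plain $\memo\; w$ returns $w$ at cost $(0,0)$, i.e.\ the (recall) rule, which remains available. If $\Gamma$ saves $j$ on $a$, $\Delta$ uses (waste) at the same cost $(0,j)$, and the resulting heaps are related by [force], since $n-j \le 0$.

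I expect the main obstacle to be the (lazy) evaluation case. There, $\Gamma$ runs $F\,v$ eagerly as a sub-derivation, so the induction hypothesis must be applied to that nested evaluation in $\Delta$, producing an evaluation with the same $(k,k')$ and hence the same debit annotation $k$. We then need to argue that the recorded allocation set $\Delta' \setminus \Delta$ agrees with $\Gamma' \setminus \Gamma$, again by freshness. This lets us close the square with [lazy] built from the accessibility evidence returned by the induction hypothesis. Care is also needed because the [lazy] accessibility rule itself embeds an evaluation, so the generating-step induction and the evaluation induction are mutually entangled. I would handle this by a single induction on the size of the evaluation derivation, treating nested evaluations in accessibility steps as smaller.
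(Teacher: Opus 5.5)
\textbf{Verdict: genuine gap, in the [lazy] accessibility step; the fix is the one the paper uses.}

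Your [lazy] case is wrong as stated. The conclusion of [lazy] is not $\Gamma \sqsubseteq \Gamma, a \mapsto \ldots$. It is $\Gamma \sqsubseteq (\Delta_0, a \mapsto_k \memo^{\Delta_0\setminus\Gamma}_{F\,v}\; w)$, where $\Delta_0$ is the heap produced by running $F\,v$ from $\Gamma$, with premise $\Gamma \sqsubseteq \Delta_0$.

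That embedded computation can do more than allocate. It may run (save) on a thunk already in $\Gamma$, which is exactly the side effect in \cref{ex:tracking-both}, or it may force one. For example, take the thunk $\lazy_F\; b$ with $F(x) = \force x$, and suppose $b \mapsto_0 \memo^{\Theta}_{G\,u}\; u'$ is in $\Gamma$. Then $b$ is already stripped to $\memo\; u'$ in the target heap. An evaluation of $\force b$ from $\Gamma$, which uses (force), must therefore be simulated by (recall). Your argument that ``no evaluation from $\Gamma$ mentions it'' does not cover this. What you wrote is really the argument for a plain fresh allocation ([alloc]), which is also missing from your list of generators.

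So [lazy] is not an atomic step for your single-step reduction. You need three moves:
\begin{itemize}
\item transport the evaluation across the premise $\Gamma \sqsubseteq \Delta_0$ by the induction hypothesis (this premise is a sub-derivation);
\item weaken the resulting $\Delta_0 : e \Downarrow \Delta_0'$ by the fresh cell $a$ using \cref{lem:real-weakening};
\item only then close the square with [lazy].
\end{itemize}
This is exactly the paper's ``\_, [lazy]'' case.

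There are two smaller problems.

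First, your remedy for the entanglement is not a well-founded measure as stated. You propose an induction on the size of the evaluation derivation that treats the evaluation nested inside a [lazy] step as smaller. That nested evaluation belongs to the accessibility derivation and has no size relation to the evaluation of $e$ you are simulating. The paper nests the inductions differently: an outer induction on the evaluation via $\sqsubseteq_{e,n,w}$, and an inner induction on the derivation of $\Gamma \sqsubseteq \Delta$, used only for the heap-inspecting leaf rules. In that structure the hypothesis is applied to the premise $\Gamma \sqsubseteq \Delta_0$ of [lazy], never to its nested evaluation.

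Second, in your (lazy) evaluation case the square is not closed by a fresh use of [lazy] introducing $a$, because both final heaps already contain $a$. What closes it is the relation $\Gamma_0' \sqsubseteq \Delta_0'$ returned by the hypothesis, lifted through the common cell $a$. The paper does this via [trans] and [save]; with your equal-cost observation, plain weakening suffices.

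The remaining cases agree with the paper:
\begin{itemize}
\item [save] against (force) and (save);
\item [force] against (force) and (save), via (recall) and (waste);
\item the observation that $k'$ counts discharged debits, so the cost pair is preserved exactly.
\end{itemize}
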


Perhaps the most surprising aspect of the debit semantics is that it has to track both $k$ and $k'$,
the unshared cost of evaluation and the number of debits discharged.
This is necessary to split these costs in the (lazy) rule, where the cost $k'$
may not be placed as a debit on the resulting thunk, but has to be propagated to the outside instead.
Indeed, for a top-level computation that lives outside of any thunks, this distinction does not matter
and we can specify the total cost as $\Downarrow^{\mathrm{D}}_{k+k'}$ for $\Downarrow_{k,k'}$.

\begin{example}
\label{ex:tracking-both}
The debit semantics would be unsound if the (lazy) rule placed $k'$ as a debit on the resulting thunk.
Assume we are given a heap $\Gamma = \{a \mapsto_n \memo_{{F_1}\; v}\; w\}$
and the expression $e = \lazy_{F_2} ()$ for $F_2(()) = \save{n}{a}$.
Our correct (lazy) rule allows us to derive:
\[
\Gamma : e \Downarrow_{0,n} \Delta : b$ with $\Delta = \{a \mapsto_0 \memo_{F_1\; v}\; w, b \mapsto_1 \memo^\emptyset_{F_2\; ()}\; a\}
\]
That is, the $\save{n}{a}$ operation is performed, which removes the debits from $a$, and we propagate the cost $n$ to the outside.
Instead, the unsound (lazy) rule places the discharged debits on the resulting thunk and allows us to derive:
\[
\Gamma : e \Downarrow_{0,0} \Delta : b$ with $\Delta = \{a \mapsto_0 \memo_{F_1\; v}\; w, b \mapsto_{n+1} \memo^\emptyset_{F_2\; ()}\; a\}
\]
This derivation does not propagate the cost $n$ to the outside,
but simply moves $n$ debits from $a$ to $b$.
Since $a$ now has no remaining debits, we may force it, without forcing $b$ at all.
This allows us to access $w$ at no cost and thus breaks soundness.
\end{example}

\subsection{Persistent Binary Counter}

To illustrate Okasaki's reasoning, we analyse the binary counter using debits. 
Again, we define the heaps of counters inductively.
We assume that all thunks are memoised, but place one debit on memoised thunks.
For simplicity, we omit the annotations $F\, v, \Delta$ on unaccessed memoised thunks.
\begin{align*}
&\mathrm{Counter}(a \mapsto \fold \texttt{End} : a) & \\
\mathrm{Counter}(\Gamma : a) \implies &\mathrm{Counter}(\Gamma, b \mapsto \fold \texttt{Cons}(\texttt{One}, a) : b) & (b \notin \mathrm{dom}(\Gamma)) \\
\mathrm{Counter}(\Gamma : a) \implies &\mathrm{Counter}(\Gamma, c \mapsto \fold \texttt{Cons}(\texttt{Zero}, b), b \mapsto_1 \memo\; a : c) & (b,c \notin \mathrm{dom}(\Gamma))
\end{align*}

The implementation of the persistent binary counter stays largely the same as in \cref{sec:credits}.
The only difference is that we need to save one less credit on the increment thunks.
That is, we modify the $\mathrm{incr}$ operation
by replacing $\save{2}{(\lazy_{\texttt{Incr}}\;c)}$ with $\save{1}{(\lazy_{\texttt{Incr}}\;c)}$ and
the lazy $\texttt{Incr}$ function
by replacing $\save{1}{(\lazy_{\texttt{Incr}}\;a)}$ with $\lazy_{\texttt{Incr}}\;a$.
In our proof, we now assume that all thunks are evaluated upon creation
and place the cost of that evaluation as a debit on the resulting thunk.
As in the debit semantics, we account separately for the cost of paying off debits.

\begin{lemma}
\label{lem:counter-debits}
If $\mathrm{Counter}(\Gamma : c)$, then $\Gamma : \mathrm{Incr}(c) \Downarrow^{\mathrm{D}}_{1,k'} \Delta : c'$
with $\mathrm{Counter}(\Delta : c')$ and $k' \leq 1$.
\end{lemma}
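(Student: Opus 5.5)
We proceed by induction on the $\mathrm{Counter}$ predicate, mirroring the proofs of \cref{lem:counter-sequentially} and \cref{lem:counter-credits}. In each case the call to $\texttt{Incr}$ contributes the single unshared step $k = 1$. Since the modified $\texttt{Incr}$ contains no other function calls outside of thunks, it then suffices to show that $k' \leq 1$ debits are discharged and that the resulting heap is again a counter. The modified $\texttt{Incr}$ is as in \cref{sec:credits}, except that the $\texttt{One}$ branch allocates $\lazy_{\texttt{Incr}}\;a$ without saving credits on it.

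\textbf{(End)}: we allocate $\fold \texttt{Cons}(\texttt{One}, c)$. This gives $\mathrm{Counter}(\Gamma, c' \mapsto \fold \texttt{Cons}(\texttt{One}, c) : c')$ with $k = 1$, $k' = 0$.

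\textbf{(Zero)}: the heap has the form $\Gamma, c \mapsto \fold \texttt{Cons}(\texttt{Zero}, a), a \mapsto_1 \memo\; b$ with $\mathrm{Counter}(\Gamma : b)$. Since $a$ is memoised, it has already been evaluated, so its value is some counter on an extended heap.
\begin{itemize}
\item Here we must strengthen the invariant. Each memoised thunk $b \mapsto_1 \memo\;a'$ in the $\mathrm{Counter}$ predicate should hold the already-computed result, i.e.\ the counter obtained by incrementing the tail, together with the allocations it made. Then $\save{1}{a}$ (a (save) step costing $k'=1$) reduces the debits to $0$, and (force) applies at cost $(0,0)$ and returns the memoised counter $b'$.
\item We then need $\mathrm{Counter}(\Delta : b')$ for the result. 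This follows from the induction hypothesis applied when that thunk was created by (lazy). In the statement's simplified notation, we read $\mathrm{Counter}$ so that the memoised value of each thunk is itself a counter in the heap.
\item Allocating $\fold \texttt{Cons}(\texttt{One}, b')$ yields a counter, with total $k = 1$, $k' = 1$.
\end{itemize}

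\textbf{(One)}: from $\mathrm{Counter}(\Gamma : a)$ we evaluate $\lazy_{\texttt{Incr}}\;a$ using the debit (lazy) rule. This runs $\texttt{Incr}(a)$ immediately, and by the induction hypothesis it costs $(1, k'')$ with $k'' \leq 1$ and produces $\mathrm{Counter}(\Delta : a')$. The rule places $k = 1$ as a debit on the fresh thunk $b \mapsto_1 \memo\; a'$, and propagates the $k''$ discharged debits outward. We then allocate $\fold \texttt{Cons}(\texttt{Zero}, b)$, which matches the third $\mathrm{Counter}$ clause with exactly one debit.

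The main obstacle is the (One) case. Naively, the propagated $k''$ from the eager inner evaluation would count toward the outer $k'$, and the recursion would then accumulate debits along a chain of ones. I would try to resolve this by noting that the eager evaluation inside (lazy) recurses only into the tail. Its discharged cost $k''$ arises only when the tail is headed by a zero, and then the outer call itself discharges nothing. So in every case at most one $\save{1}{}$ is executed along the whole eager chain, which gives $k' \leq 1$. Formally, I would strengthen the induction to state that the total debits discharged, including those propagated from nested (lazy) evaluations, is at most $1$. The proof of this would follow from the observation that the chain of ones ends at the first zero or at $\texttt{End}$, and only a zero triggers a save.
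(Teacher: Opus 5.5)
Your proof is correct and follows the paper's argument. It proceeds by induction on $\mathrm{Counter}$: (End) discharges nothing, (Zero) pays off the single debit with $\save{1}{\cdot}$ and then forces at cost $(0,0)$, and (One) uses the debit (lazy) rule to place the inner unshared cost $1$ as the debit on the fresh thunk while propagating the inner $k'' \leq 1$ outward.

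The two ``strengthenings'' you propose are not needed. The debit version of $\mathrm{Counter}$ already requires the memoised value of each thunk to be a counter. The $k'$ in the judgement already counts debits discharged inside nested (lazy) evaluations. Since the \texttt{One} branch itself saves nothing, the plain induction hypothesis gives an outer $k' = k'' \leq 1$, and debits never accumulate along a chain of ones.
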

\begin{proof}
By induction on the $\mathrm{Counter}$ predicate. In each case, the function call takes one step.
\begin{itemize}
  \bitem{(End)}
    From $\mathrm{Counter}(\Gamma : c)$, we obtain
    $\mathrm{Counter}(\Gamma, c' \mapsto \fold \texttt{Cons}(\texttt{One}, c) : c')$.
  \bitem{(Zero)} 
    From $\mathrm{Counter}(\Gamma, c \mapsto \fold \texttt{Cons}(\texttt{Zero}, b), b \mapsto_1 \memo\; a : c)$,
    we deduce $\mathrm{Counter}(\Gamma : a)$. We pay off the debit on $b$ and
    obtain $\mathrm{Counter}(\Gamma, c' \mapsto \fold \texttt{Cons}(\texttt{One}, a) : c')$.
  \bitem{(One)}
    From $\mathrm{Counter}(\Gamma, c \mapsto \fold \texttt{Cons}(\texttt{One}, a) : c)$,
    we deduce $\mathrm{Counter}(\Gamma : a)$.
    By the induction hypothesis, $\Gamma : \mathrm{Incr}(a) \Downarrow^{\mathrm{D}}_{1,1} \Delta : c'$
    with $\mathrm{Counter}(\Delta : c')$.
    We obtain $\mathrm{Counter}(\Delta, c'' \mapsto \fold \texttt{Cons}(\texttt{Zero}, b), b \mapsto_1 \memo\; c' : c'')$,
    where we place the cost of evaluation as a debit on $b$ and propagate the cost of paying off the debit.
\end{itemize}
\end{proof}

This proof is quite similar to the sequential one (\cref{lem:counter-sequentially}).
By pretending that $\texttt{Incr}$ thunks are evaluated immediately,
we can reason about them just as we reasoned about the \texttt{incr} function.
The main difference is that this proof places debits on zero-bits,
while the sequential proof places credits on one-bits.
In the recursive case, where one-bits are turned into zero-bits,
the sequential proof consumes credits while the persistent proof produces debits.

This proof is typical for Okasaki's debit method.
Perhaps surprisingly, this proof reasons about thunks that have not yet been created.
On a counter consisting only of one-bits, the program creates
an increment thunk and forces it to flip the first one-bit to a zero-bit.
However, in the proof, we act as if we continued evaluating the thunk
and flipped all one-bits to zero-bits immediately$\dots$
even though the program has not yet created the thunks for those flips.
This is possible because thunk evaluation is deterministic:
One can argue that ``in the future, thunk $a$ will create a new thunk $b$''
and then reason about $b$ before it actually exists.

\subsection{Debit Inheritance}

In the last section, we saw that it is generally unsound to move debits between thunks.
This is sound, however, if we can guarantee an evaluation order between the thunks.
For example, we might know that $b$ will always be forced before $a$.
Okasaki then allows $b$ to \textit{inherit} the debit from $a$~\cite[page~67]{Okasaki:purefun}.

An important special case, where it is guaranteed that $b$ will always be forced before $a$
is if $a$ is created during the evaluation of $b$.
Even though the (lazy) rule evaluates $b$ immediately,
the result of that computation (and thus $a$) only becomes accessible
once $b$ has been forced.
This ensures that $a$ is only accessible from outside the thunk
once $b$'s debit has been paid off.

To encode this in our semantics, we modify the (lazy) rule.
For a freshly allocated thunk, we allow propagating some of the unshared work to the outside.

\begin{center}
    \AxiomC{$\Gamma : F\, v \Downarrow_{(k_1+k_2),k'} \Delta : w$}
    \AxiomC{$a \not\in \mathrm{dom}(\Gamma)$}
    \RightLabel{(lazy)}
    \BinaryInfC{$\Gamma : \lazy_F\; v \Downarrow_{k_1,k'} (\Delta, a \mapsto_{k_2} \memo^{\Delta\setminus\Gamma}_{F\,v}\; w) : a$}
    \DisplayProof
\end{center}

Our modification may seem counter-intuitive:
rather than changing how thunks discharge debits,
we change how freshly allocated thunks distribute their unshared cost.
However, if $b$ is a thunk that creates $a$ during its evaluation,
then this (lazy) rule allows us to place the unshared cost of $a$ on $b$ instead.
Note that this cannot be emulated by the $(save)$ rule,
since $a$'s debits become part of the unshared cost $k$ of $b$,
instead of the discharge cost $k'$ of $b$.

This (lazy) rule captures Okasaki's reasoning style for \textit{monolithic} computations,
which create new thunks that have to be forced in the computation itself.
When reasoning about them, ``all debits are usually assigned to the root''~\cite[page~61]{Okasaki:purefun}.

We call this the debit-inheritance semantics
and write $\Gamma : e \Downarrow^{\mathrm{DI}}_{k+k'} \Delta : w$ for
$\Gamma : e \Downarrow_{k,k'} \Delta : w$.
We reuse the $\Phi^{\mathrm{D}}$ and $\stripdot^{\mathrm{D}}$ defined for the debit semantics.

\begin{lemma}
\label{lem:debit-inheritance-sound}
The debit-inheritance cost model $(\Downarrow^{\mathrm{DI}}, \Phi^{\mathrm{D}}, \stripdot^{\mathrm{D}})$ is sound.
\end{lemma}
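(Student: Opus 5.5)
We argue by induction on the derivation of $\Gamma : e \Downarrow_{k,k'} \Delta : w$ in the debit-inheritance semantics, in the same way as for \cref{lem:debit-sound}. Every rule except (lazy) coincides with the corresponding rule of the debit semantics, so those cases go through exactly as before. It is convenient to strengthen the induction hypothesis slightly. We show that $\strip{\Gamma}^{\mathrm{D}} : e \Downarrow^{\mathrm{R}}_r \strip{\Delta}^{\mathrm{D}} : w$ and $r + \Phi^{\mathrm{D}}(\Delta) - \Phi^{\mathrm{D}}(\Gamma) \leq k + k'$. We also record, as an auxiliary invariant, that for every annotated thunk $a \mapsto_n \memo^{\Theta}_{F\,v}\; w$ the real cost $k^{\mathrm{R}}_{F v}$ is bounded by the unshared cost that was distributed when it was created. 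This bound is the sum of the debit $k_2$ placed on $a$, the part $k_1$ propagated outward, and whatever debits were discharged inside the computation.

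The key case is the new (lazy) rule. By the induction hypothesis on the premise $\Gamma : F\, v \Downarrow_{(k_1+k_2),k'} \Delta : w$, we have $\strip{\Gamma} : F\,v \Downarrow^{\mathrm{R}}_{k^{\mathrm{R}}_{Fv}} \strip{\Delta} : w$ with $k^{\mathrm{R}}_{Fv} + \Phi^{\mathrm{D}}(\Delta) - \Phi^{\mathrm{D}}(\Gamma) \leq k_1+k_2+k'$. In the real semantics the conclusion is just an allocation of $\lazy_F\; v$ at cost $0$. The erasure $\strip{\Delta, a \mapsto_{k_2} \memo^{\Delta\setminus\Gamma}_{F\,v}\; w}^{\mathrm{D}}$ removes $\Delta \setminus \Gamma$ and restores $a \mapsto \lazy_F\; v$, which gives exactly $\strip{\Gamma}, a \mapsto \lazy_F\; v$ as required. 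Two points need care here. First, thunks of $\Gamma$ that were modified inside the computation (debits paid, or forced) must be handled by the erasure, so we use the ordering of the heap: modifications to pre-existing cells only decrease debits. Second, the new thunk contributes $k^{\mathrm{R}}_{Fv} - k_2$ to the potential. So the potential changes by $\Phi^{\mathrm{D}}(\Delta) - \Phi^{\mathrm{D}}(\Gamma) + k^{\mathrm{R}}_{Fv} - k_2$, which is at most $k_1 + k'$ by the inequality above. This is exactly the amortised cost $k_1 + k'$ claimed by the conclusion, since the real cost is $0$.

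There is a subtlety. The inner computation may itself have allocated thunks (those created during the evaluation of $F\,v$) whose debits contribute to $\Phi^{\mathrm{D}}(\Delta)$. These thunks disappear under erasure because they lie in $\Delta\setminus\Gamma$, yet they remain in the heap of the debit semantics. We must show that their potential contributions are not double counted and that they can be paid off only after $a$ is forced. I expect this to be the main obstacle. The unshared work of an inner thunk $b$ has been moved into $a$'s debit $k_2$, so $b$ may carry fewer debits (possibly negative, hence positive potential) than its real cost. Soundness then relies on $b$ being reachable only through $w$, which itself becomes accessible only via (force) on $a$, that is, once $n \leq 0$. My first attempt would be to refine $\Phi^{\mathrm{D}}$ so that an annotated thunk's contribution includes those of the thunks in its record $\Theta$, treating the block $a$ together with $\Theta$ as one unit of potential $k^{\mathrm{R}}_{Fv} - n$. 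I would then show that the (force) rule unpacks this block without increasing the total. With this refinement the (force), (save) and (waste) cases follow as in \cref{lem:debit-sound}. The (let) case adds the two inequalities and combines the two real derivations, using the globally fresh allocation convention.
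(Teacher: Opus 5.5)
Your (lazy) case is exactly the paper's proof. You apply the induction hypothesis to the premise, note that the erasure restores $a \mapsto \lazy_F\; v$, and observe that the potential changes by $\Phi^{\mathrm{D}}(\Delta) - \Phi^{\mathrm{D}}(\Gamma) + k^{\mathrm{R}}_{F v} - k_2 \leq k_1 + k'$. That calculation already closes the case for the $\Phi^{\mathrm{D}}$ fixed in the statement, so the ``main obstacle'' you identify is not one. $\Phi^{\mathrm{D}}$ is a sum over heap cells, so every inner thunk is counted exactly once. Its contribution already sits inside $\Phi^{\mathrm{D}}(\Delta)$ on the left of the induction hypothesis, paid for by the cost that its own (lazy) step pushed into the premise. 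Refining $\Phi^{\mathrm{D}}$ is therefore unnecessary, and it would prove soundness of a different cost model from the one in the statement. Under the natural reading of your refinement, the block is worth $k^{\mathrm{R}}_{F v} - n$ and the thunks in $\Theta$ are not counted separately. Then the (force) case fails. Take $P(x) = \lazy_C\; x$ with $C\,()$ costing $10$. The (lazy) rule may give the inner thunk $b$ debit $0$ and push $10$ outward, so $a$ gets debit $11$ while $k^{\mathrm{R}}_{P\,()} = 1$. Once $a$ is paid off, your block is worth $1$. Forcing $a$ then has real cost $1$ and amortised cost $0$, yet it releases $b$'s potential of $10$. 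The block also starts out at $-10$. If instead the block adds the contributions of $\Theta$, it is just $\Phi^{\mathrm{D}}$ regrouped.

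What genuinely needs care beyond the (lazy) calculation is that $\Phi^{\mathrm{D}}$ must be $\mathbb{N}$-valued. With inheritance, $k_2$ can exceed $k^{\mathrm{R}}_{F v}$; in the example, $a$ contributes $1 - 11 = -10$. The paper handles this by relaxing the invariant $k \leq k^{\mathrm{R}}$, so that a thunk's debit is bounded by its own real cost plus the real costs of its child thunks (\cref{lem:debit-inheritance-cost}). A parent's negative contribution is then offset by its children's positive ones, which, as you rightly note, cannot be consumed before the parent is forced. Your block intuition belongs here, as an invariant about the unchanged $\Phi^{\mathrm{D}}$ rather than as a redefinition. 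Your auxiliary invariant $k^{\mathrm{R}}_{F v} \leq k_1 + k_2 + k'$ does not help, since non-negativity needs an upper bound on the debit, not on $k^{\mathrm{R}}_{F v}$. It is also false. If $F\,v$ forces a paid-off thunk of $\Gamma$, the debit semantics charges $0$, while $k^{\mathrm{R}}_{F v}$ includes that thunk's full real cost. Finally, ``debits only decrease'' does not cover a cell of $\Gamma$ forced inside $F\,v$, whose erasure changes from $\lazy$ to $\memo$. The paper's identification of $\Delta\setminus(\Delta\setminus\Gamma)$ with $\Gamma$ glosses over the same point.
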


The $\sqsubseteq^{\mathrm{DI}}$ relation differs from $\sqsubseteq^{\mathrm{D}}$
only in that the new debit is $k_1$ instead of $k$ in the [lazy] rule.

\begin{lemma}
\label{lem:debit-inheritance-persistent}
The debit-inheritance cost model $((\Downarrow^{\mathrm{DI}}, \Phi^{\mathrm{D}}, \stripdot^{\mathrm{D}}), \sqsubseteq^{\mathrm{DI}})$ is uniformly persistent.
\end{lemma}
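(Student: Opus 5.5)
We prove the statement by induction on the derivation of $\Gamma : e \Downarrow_{k,k'} \Gamma' : v$ in the debit-inheritance semantics. Inside the induction we do a case analysis on the last step of a derivation of $\Gamma \sqsubseteq^{\mathrm{DI}} \Delta$. Both relations are generated by [refl] and [trans], so it suffices to handle a single generating step, [lazy], [save] or [force]. [trans] then follows by composing the resulting squares, since the costs $(k,k')$ are preserved exactly at each step. The invariant we carry is stronger than the statement: the simulating derivation from $\Delta$ has the \emph{same} pair $(k,k')$, not merely the same sum. This is needed because the (lazy) rule of $\Downarrow^{\mathrm{DI}}$ splits the first component into $k_1+k_2$ and uses $k'$ separately, so the induction hypothesis must apply to $F\,v$ inside a (lazy) premise with both components fixed. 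Uniform persistence of the combined cost $k+k'$ then follows. We reuse the proof of \cref{lem:debit-persistent}, noting that only the (lazy) rule and the [lazy] accessibility rule differ.

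The structural rules (value), (fold), (unfold), (case), (split), (app), (call), (memo) and (let) are routine. They follow from the induction hypothesis and freshness of pointers, using the global-freshness assumption of the remark after \cref{lem:real-persistent}. For (let) we chain two applications of the hypothesis, threading the accessible heap through. The thunk-specific rules proceed as follows.

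For (save) on $a \mapsto_n \memo^{\Theta}_{F\,v}\;w$: if $\Delta$ has only lowered the debits to some $n'' \le n$ via [save], we save again to obtain $n''-m \le n-m$, which gives [save] between the results. If $\Delta$ has already forced $a$, we use (waste), again at cost $(0,m)$, and relate the results by [force] after noting $n-m$ is irrelevant. The last point needs care: [force] requires the debits on the $\Gamma'$ side to be $\le 0$, which need not hold after saving. The fix is to observe that [force] in $\Delta$ already required $n''\le 0$, and $n-m$ versus $\memo\;w$ is only related when $n - m \le 0$. Otherwise we relate $\Gamma'$ to $\Delta'$ by first applying [save] down to $n''$ and then [force]. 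This is legal because we may choose to reach $\Delta'$ via $\Gamma' \sqsubseteq (\ldots\mapsto_{\min(n-m,n'')}\ldots) \sqsubseteq \Delta'$ using [save] then [force].

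For (force) with $n \le 0$: if $\Delta$ has lowered the debits, (force) still applies. If $\Delta$ already has $\memo\;w$, then (force) in the debit semantics is replaced by reading $\memo\;w$. We need a rule for forcing an accessed $\memo\;w$; we take (recall), which persists from the real semantics, at cost $0$, and both sides then end in $\memo\;w$, related by [refl]. For (lazy) allocating $a$: the allocation is fresh, so a [lazy] step in $\Delta$ cannot concern $a$. We apply the induction hypothesis to the premise $\Gamma : F\,v \Downarrow_{k_1+k_2,k'}$ to obtain the same pair from $\Delta$, and we choose the same split $k_1,k_2$. The results are related by the same accessibility steps, with the new thunk $a \mapsto_{k_2}$ added identically on both sides.

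The main obstacle is this (lazy) case, specifically the recorded allocation set $\Delta\setminus\Gamma$ in $\memo^{\Delta\setminus\Gamma}_{F\,v}$. Starting from a larger heap, the inner evaluation may take a different path: it may hit (recall) instead of (lazy), allocating fewer or different cells. The annotation then differs from that on the $\Gamma'$ side, and the [lazy] rule of $\sqsubseteq^{\mathrm{DI}}$ must still relate the two. We would handle this by strengthening the invariant so that the simulating evaluation allocates exactly the same pointers and annotations. This should hold because in the debit semantics every thunk is already evaluated eagerly at creation, so the inner derivation is syntactically determined up to debit counts. Failing that, we would treat $\Delta\setminus\Gamma$ as ghost data that $\sqsubseteq$ may ignore, since only $\stripdot^{\mathrm{D}}$ and soundness (\cref{lem:debit-inheritance-sound}) inspect it.
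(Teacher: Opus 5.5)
Your proposal is correct and follows the paper's proof. It reuses the argument for \cref{lem:debit-persistent}, handles the new (lazy) case by applying the induction hypothesis to the premise with the same pair $(k_1+k_2,k')$ and the same split, and handles a [lazy] accessibility step by weakening with the fresh cell. The annotation issue you single out is one the paper passes over silently, and your first remedy is the right one. Your worry about divergent paths cannot arise, though: (recall) is the alternative to (force), not to (lazy), and in the debit semantics none of (force), (recall), (save), (waste) allocates. So the simulating inner evaluation allocates exactly the same cells with the same contents, and the recorded annotations coincide. Drop the ghost-data fallback, which is unavailable anyway since $\sqsubseteq$ is fixed by $\Downarrow$.
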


Debit inheritance is in some sense dual to credit inheritance.
While credit inheritance passes on excess credits from a thunk to one of its children,
debit inheritance passes on debits from a child to its parent.
Aside from this similarity, the two techniques are quite different though.
For example, debit inheritance allows a parent to inherit the debits from multiple children,
while credit inheritance only allows passing on credits to a single child.

\section{Related Work}

We compare to the main related works by considering several
aspects of reasoning about persistent amortised complexity.

\paragraph*{Credits and Debits}

Okasaki~\cite{Okasaki:purefun} claims that debits are necessary for
reasoning about persistent usage, and that using credits is unsound.
His first argument centers on the lack of linearity in a persistent setting:
the persistent usage of a data structure could cause credits to be spent more than once, which would break soundness. 
But debits do not suffer from this issue: ``although savings can only be spent once, it does no harm to pay off debt more than once''~\cite[page~59]{Okasaki:purefun}.

However, this argument ignores an important aspect of thunks: they are only evaluated once.
This guarantees that the resources they hold are used linearly, even if there are many
references to the thunk itself.
Pilkiewicz and Pottier~\cite{Pilkiewicz:monotonic} and M{\'e}vel et al.~\cite{Mevel:timecredits}
exploit this property to implement a model of thunks that holds credits in the lazy state.
They show that this model does not duplicate credits and is monotonic.
Yet, these works continue to use debits in their interface of thunks.

Reasoning using credits has several advantages in practice.
For example, in a semantics where evaluating a thunk can become cheaper over time,
we can force the thunk as soon as we have enough credits for the cost at that time.
In contrast, a debit-based interface only allows forcing after assigning as many credits
as were necessary at the time of the thunk's creation.

More importantly though, the cost of evaluating a thunk may change over time.
This is most evident in the Bankers Queue, which requires the topmost
thunk of the stream to pass all its credits to the next thunk.
Pottier et al.~\cite{Pottier:thunks} model this using their Thunk-Consequence rule,
which allows them to add `save` calls to an existing thunk.
As this increases the execution cost of the thunk,
they need to increase its amount of debit retroactively.

In our work, the same can be achieved more easily by using credit inheritance,
which dynamically passes on all remaining credits after covering the execution cost.
This design is only possible because we track the
execution cost and the amount of credits separately.

\paragraph*{Reasoning about the Future}

In the absence of side-effects, it does not matter when a thunk is evaluated
(shown formally by Hackett and Hutton~\cite{Hackett:clairvoyant}).
As such, Okasaki's reasoning style may assume that thunks are evaluated immediately upon creation.
However, to obtain the right time complexity of the program,
we may not charge it the cost of evaluating the thunk upon creation.
Instead, Okasaki uses debits as a ``layaway plan'',
where the program may access the result of a thunk
only once all debits on the thunk have been paid off.

Perhaps less appreciated is that this restriction applies not just to the result of the thunk,
but also to all side-effects that are performed during the evaluation of the thunk.
In particular, if the thunk performs a side-effect, the program may not register
that the side-effect occurred until all debits have been paid off.

Crucially, saving credits on another thunk is a side-effect!
Thus, Okasaki can not generally create debits for paying off thunks,
or he would run into the soundness issue described in \cref{ex:tracking-both}.
While he does create debits for paying off thunks in his debit passing style,
he restricts it to those thunks that are guaranteed to be forced after the enclosing thunk~\cite[page~174]{Okasaki:purefun}.
As he notes, this restriction is similar to that of the debit-inheritance semantics.
The restriction is an important difference from Danielsson's version of debit passing style~\cite{Danielsson:amortisation},
which allows saving credits on any thunk. Danielsson's version is sound because
the side-effect is only executed when the thunk is forced; in contrast, Okasaki's version executes
the side-effect immediately and thus has to be more restrictive.

For some data structures, like the Bankers Queue, it is necessary to pay off a thunk that will be created
when evaluating another thunk. Okasaki's method makes it possible to save credits on the inner thunk directly.
In contrast, Danielsson~\cite{Danielsson:amortisation} assumes that thunks
are evaluated on forcing and cannot do so. Instead, he proposes deep payment as an alternative.
Deep payment injects a `save' operation into the computation of the enclosing thunk to pay
off the debits of the inner thunk once the enclosing thunk is forced.
Pottier et al.~\cite{Pottier:thunks} generalise this principle to their `Thunk-Consequence' rule,
which allows arbitrary implications to be injected.

\paragraph*{Mutable References in Thunks}

The language we consider in this paper is purely functional and does not include mutable references.
Unfortunately, almost all the results in this paper break if mutable references are included.
For example, persistence requires that the evaluation of a thunk is deterministic,
so that we do not have to revert to the lazy state when returning to a previous version.
But thunk evaluation is not guaranteed to be deterministic if the thunks can read from mutable references.
Similarly, the cost of evaluating a thunk (and thus its number of debits) may depend on the state of mutable references.
If we determine the cost of forcing a thunk given the state of the heap at the time of its creation,
then this cost may change if mutable references are written to.
Thus it would be unsound to assign a fixed number of debits to a thunk at creation time
and force it once enough debits have been assigned; after all, the cost of forcing the thunk may have gone up.
This restriction does not impact our ability to model Okasaki's data structures,
which can all be implemented without mutable references.

Pottier et al.~\cite{Pottier:thunks} avoid the requirement for deterministic execution
by specifying the result of a thunk only up to an invariant.
This makes it possible to run a non-deterministic computation in a thunk,
as long as the result satisfies the invariant.
As their work is embedded in Iris, it can interface with code that includes references and even concurrency.

\paragraph*{Persistence and Monotonicity}

Our characterisation of persistence is an instance of heap monotonicity~\cite{Fahndrich:monotonic}.
It is inspired by the work of Pilkiewicz and Pottier~\cite{Pilkiewicz:monotonic},
who show that thunks are monotonic.
In Danielsson's work~\cite{Danielsson:amortisation}, the persistence of thunks is not explicitly stated,
but follows from his type preservation result. Type preservation is explicitly related to heap
monotonicity in proofs that use logical relations~\cite{Ahmed:semantics}.
M{\'e}vel et al.~\cite{Mevel:timecredits} and Pottier et al.~\cite{Pottier:thunks}
show persistence using the persistent modality of Iris. 

There are several variants of persistence.
Confluent persistence~\cite{Driscoll:catenation} allows changes to different versions of a data structure at the same time.
For example, this makes it possible to concatenate a confluently persistent queue to a previous version of itself.
Our notion of persistence does not cover this use case,
but we relate it in the full version of this paper.
Semi-persistence allows changes to previous versions only if the current version is abandoned.
This is particularly useful for backtracking algorithms; see Allain et al.~\cite{Allain:snapshottable} for an overview.

\paragraph*{Linearity}

There is a long history of using type systems for amortised analysis~\cite{Hofmann:amortised,Hoffmann:multivariate,Simoes:lazy,Grodin:amortized};
see Hoffmann and Jost~\cite{Hoffmann:retrospective} for an overview.
Common to these approaches is the use of a linear type system to control the usage of resources.
In particular, amortised analysis is only correct if credits are not duplicated, which is guaranteed by linearity.
This makes these approaches unsuitable for amortised analysis of persistent usage.
However, they can still be used for amortised analysis of sequential usage
and worst-case analysis of persistent usage.

Jost et al.~\cite{Jost:lazy} extend amortised resource analysis to thunks,
following Danielsson's approach~\cite{Danielsson:amortisation}.
Madhavan et al.~\cite{Madhavan:contract} design a tool for estimating
the cost of functional programs with thunks. They apply their tool
to Okasaki's real-time queue and deque,
but do not propose a general reasoning principle.

The credit-based approach to persistent amortised analysis is connected to linearity
by storing credits on thunks. Thunks are only evaluated once, even if they are shared,
which guarantees that their content in the lazy state is used linearly~\cite{Pilkiewicz:monotonic,Mevel:timecredits,Lorenzen:folazy}.

\section{Conclusion}

We have presented an operational account of amortised analysis in a persistent setting.
Our characterisation of sound and persistent cost models allows us to precisely distinguish
between approaches that are persistent and those that are not.
We show that amortised analysis can be performed using credits alone, without the need for debits.
We also provide an operational semantics for Okasaki's use of debits and show that it is sound and persistent.

\paragraph*{Which Reasoning Principle is Best?}

While our work describes how lazy data structures can be analysed,
it leaves open how they should be analysed.
The credit-based analysis is quite close to the actual
implementation on a machine and allows us to provide an invariant that
describes the state of the data structure at any point in time.
However, the credit-based analysis is somewhat more involved than the debit-based
analysis, which can side-step many operational details to arrive at a shorter invariant.
Personally, we find reasoning using credits more intuitive,
but we have come to appreciate the brevity of debit-based reasoning as well.
Furthermore, while our work concerns a strict language where thunks are only used sparsely,
a lazy language where thunks are ubiquitous may instead benefit from a reasoning style
based on demand~\cite{Hackett:clairvoyant,Foner:strictcheck,Li:garden,Xia:story}.

\paragraph*{Beyond Thunks}

There could be a more general interface for persistent mutation in functional programming languages.
As we have seen, thunks are sound because updates do not change their content semantically,
and thunks are persistent because updates can only reduce the costs of future operations.
The first property is shared by quotient types, where an update that exchanges equal representatives
of the quotient does not break referential transparency~\cite{Selsam:sealing}.
In fact, thunks can be seen as an instance of this principle~\cite{Lorenzen:folazy}.
However, we additionally need to ensure monotonicity of updates.
For example, it would not be persistent to revert a thunk back to its lazy state after it has been memoised.
We are interested in exploring a more general interface for monotonic updates,
like LVars~\cite{Kuper:lvars} or stable references~\cite{Ponsonnet:catenable,Kaplan:catenable}.

\bibliography{camera-ready}

\appendix

\newpage

\section{Proofs}

\subsection{Soundness}

\paragraph*{\cref{lem:bankers-sound}}
\begin{proof}
Induction over the derivation of $\Gamma : e \Downarrow^{\mathrm{B}}_n \Delta : v$.
We only have to consider the (save) and (spend) rules, which are new to this cost model.
In the real cost model, $\save{m}{a}$ is a no-op and $\spend{m}{a}{e}$ just evaluates $e$.
For (save), we have:
\begin{align*}
  \Phi^{\mathrm{B}}(\Gamma, a \mapsto_{n+m} hv) - \Phi^{\mathrm{B}}(\Gamma, a \mapsto_n hv)
  &= (\Phi^{\mathrm{B}}(\Gamma) + n + m) - (\Phi^{\mathrm{B}}(\Gamma) + n) \\
  &= m
\end{align*}
For (spend), we apply the lemma inductively to the premise.
Assume the real semantics takes $k$ steps and the Bankers semantics $k'$ steps on the premise.
Then we have $k + \Phi^{\mathrm{B}}(\Delta) - \Phi^{\mathrm{B}}(\Gamma, a \mapsto_{n} hv) \leq k'$.
We have:
\begin{align*}
  k + \Phi^{\mathrm{B}}(\Delta) - \Phi^{\mathrm{B}}(\Gamma, a \mapsto_{n+m} hv)
  &= k + \Phi^{\mathrm{B}}(\Delta) - (\Phi^{\mathrm{B}}(\Gamma) + n + m) \\
  &= k + \Phi^{\mathrm{B}}(\Delta) - \Phi^{\mathrm{B}}(\Gamma, a \mapsto_{n} hv) - m \\
  &\leq k' - m \\
  &\leq \max(k' - m, 0) 
\end{align*}
\end{proof}

\paragraph*{\cref{lem:credit-passing-sound}}
\begin{proof}
Induction over the derivation of $\Gamma : e \Downarrow^{\mathrm{C}}_n \Delta : v$.
We only consider the new rules. Again, $\save{m}{a}$ is a no-op in the real cost model.

\begin{itemize}
\bitem{(lazy)} We have
$\strip{\Gamma, a \mapsto_0 \lazy_F\; v}^{\mathrm{C}} = \strip{\Gamma}^{\mathrm{C}}, a \mapsto \lazy_F\; v$
and
\begin{align*}
  \Phi^{\mathrm{C}}(\Gamma, a \mapsto_0 \lazy_F\; v) - \Phi^{\mathrm{C}}(\Gamma)
  &= (\Phi^{\mathrm{C}}(\Gamma) + 0) - \Phi^{\mathrm{C}}(\Gamma) \\
  &= 0
\end{align*}
\bitem{(waste)} Trivial, since $\save{m}{a}$ is a no-op in this rule.
\bitem{(save)}
  We have $\strip{\Gamma, a \mapsto_{n} \lazy_F\; v}^{\mathrm{C}} = \strip{\Gamma}^{\mathrm{C}}, a \mapsto \lazy_F\; v = \strip{\Gamma, a \mapsto_{n+m} \lazy_F\; v}^{\mathrm{C}}$
  and
  \begin{align*}
  \Phi^{\mathrm{C}}(\Gamma, a \mapsto_{n + m} \lazy_F\;v) - \Phi^{\mathrm{C}}(\Gamma, a \mapsto_n \lazy_F\;v)
  &= \Phi^{\mathrm{C}}(\Gamma) + n + m - \Phi^{\mathrm{C}}(\Gamma) - n \\
  &= m
  \end{align*}
\bitem{(force)} By the induction hypothesis, we have
  $\strip{\Gamma}^{\mathrm{C}} : F\, v \Downarrow^{\mathrm{R}}_{k'} \strip{\Delta}^{\mathrm{C}} : w$ and $k' + \Phi^{\mathrm{C}}(\Delta) - \Phi^{\mathrm{C}}(\Gamma) \leq k$.
  By the (force) rule, we have
  $\strip{\Gamma, a \mapsto_n \lazy_F\; v}^{\mathrm{C}} : \force a \Downarrow^{\mathrm{R}}_{k'} \strip{\Delta, a \mapsto \memo\; w}^{\mathrm{C}} : w$.
  The second condition holds as:
  \begin{align*}
  k' + \Phi^{\mathrm{C}}(\Delta, a \mapsto \memo\; w) - \Phi^{\mathrm{C}}(\Gamma, a \mapsto_n \lazy_F\; v)
  &= k' + \Phi^{\mathrm{C}}(\Delta) - \Phi^{\mathrm{C}}(\Gamma) - n \\
  &\leq k - n \\
  &\leq n - n \\
  &= 0
  \end{align*}
\end{itemize}
\end{proof}

\begin{lemma}
\label{lem:credit-inheritance-save-sound}
If $\Gamma : \save{m}{a} \Downarrow^{\mathrm{CI}}_m \Delta : v$,
then $\strip{\Gamma}^{\mathrm{CI}} = \strip{\Delta}^{\mathrm{CI}}$,
$\strip{\Gamma}^{\mathrm{CI}} : \save{m}{a} \Downarrow^{\mathrm{R}}_0 \strip{\Delta}^{\mathrm{CI}} : a$
and $\Phi^{\mathrm{CI}}(\Delta) - \Phi^{\mathrm{CI}}(\Gamma) \leq m$.
\end{lemma}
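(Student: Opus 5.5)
Since the inheritance chain $a \to h \to h' \to \cdots$ can pass through several memoised thunks before it reaches a lazy thunk or a wasting rule, a plain case split on the last rule will not close. I will instead argue by induction on the derivation of $\Gamma : \save{m}{a} \Downarrow^{\mathrm{CI}}_m \Delta : v$, equivalently on the length of that chain. Only three rules of the credit-inheritance semantics have a $\save{m}{a}$ expression in the conclusion at the empty heir: (save) on a lazy thunk, (waste) on a memoised thunk without an heir, and (inherit) on a memoised thunk with heir $h$. The premise of (inherit) is again a derivation of the same shape for $\save{m}{h}$, so the induction hypothesis applies to it directly.

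\textbf{Case (save).} Here $\Gamma = \Gamma_0, a \mapsto_n \lazy_F\; v$ and $\Delta = \Gamma_0, a \mapsto_{n+m} \lazy_F\; v$. Since $\stripdot^{\mathrm{CI}}$ forgets credit annotations on lazy thunks, $\strip{\Gamma}^{\mathrm{CI}} = \strip{\Delta}^{\mathrm{CI}}$. The real semantics treats $\save{m}{a}$ as a no-op returning $a$, which gives the required real evaluation in $0$ steps. For the potential, the computation is the one from the (save) case of \cref{lem:credit-passing-sound}: the difference $\Phi^{\mathrm{CI}}(\Delta) - \Phi^{\mathrm{CI}}(\Gamma)$ is exactly $m$.

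\textbf{Case (waste).} The heap is unchanged, so $\Delta = \Gamma$. All three claims are immediate, and the potential difference is $0 \leq m$.

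\textbf{Case (inherit).} Here $\Gamma = \Gamma_0, a \mapsto_h \memo\; w$ and $\Delta = \Delta_0, a \mapsto_h \memo\; w$, where the premise is $\Gamma_0 : \save{m}{h} \Downarrow^{\emptyset}_m \Delta_0 : v'$. The induction hypothesis gives $\strip{\Gamma_0}^{\mathrm{CI}} = \strip{\Delta_0}^{\mathrm{CI}}$ and $\Phi^{\mathrm{CI}}(\Delta_0) - \Phi^{\mathrm{CI}}(\Gamma_0) \leq m$.
\begin{itemize}
\item Appending the same entry $a \mapsto \memo\; w$ to both sides preserves the equality of erased heaps, because $\stripdot^{\mathrm{CI}}$ removes the heir annotation.
\item A memoised thunk contributes nothing to $\Phi^{\mathrm{CI}} = \Phi^{\mathrm{C}}$, so the potential difference is unchanged from the premise and hence still at most $m$.
\item The real-semantics claim again holds because $\save{m}{a}$ is a no-op returning $a$.
\end{itemize}

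The main obstacle is bookkeeping rather than mathematics. First, the statement writes the conclusion's value as $v$ while the real run returns $a$. I will read this as the rules always returning $a$, so that $v = a$; I will check this for the rules as stated and flag it if it fails. Second, in the (inherit) case the heir $h$ lies inside $\Gamma_0$, not at the head of the heap. I will therefore rely on the convention that heap entries may be reordered, applying the induction hypothesis to $\Gamma_0$ with $h$ in whatever position it occupies.
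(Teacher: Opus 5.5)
Your proof is correct and matches the paper's: induction on the derivation with the three cases (save), (waste) and (inherit). In the (inherit) case you apply the induction hypothesis to the $\save{m}{h}$ premise and use that $a \mapsto_h \memo\; w$ erases the same way on both sides and contributes no potential. Your two bookkeeping remarks also hold: all three rules return $a$, so $v = a$, and the paper likewise treats heaps as unordered finite maps, so the position of $h$ inside $\Gamma_0$ does not matter.
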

\begin{proof}
By induction over the derivation of $\Gamma : \save{m}{a} \Downarrow^{\mathrm{CI}}_m \Delta : v$.
\begin{itemize}
\bitem{(waste)} Trivial, since $\Phi^{\mathrm{CI}}(\Gamma) - \Phi^{\mathrm{CI}}(\Gamma) = 0 \leq m$.
\bitem{(save)}
  We have $\strip{\Gamma, a \mapsto_{n} \lazy_F\; v}^{\mathrm{CI}} = \strip{\Gamma}^{\mathrm{CI}}, a \mapsto \lazy_F\; v = \strip{\Gamma, a \mapsto_{n+m} \lazy_F\; v}^{\mathrm{CI}}$
  and
  \begin{align*}
  \Phi^{\mathrm{CI}}(\Gamma, a \mapsto_{n + m} \lazy_F\;v) - \Phi^{\mathrm{CI}}(\Gamma, a \mapsto_n \lazy_F\;v)
  &= \Phi^{\mathrm{CI}}(\Gamma) + n + m - \Phi^{\mathrm{CI}}(\Gamma) - n \\
  &= m
  \end{align*}
\bitem{(inherit)}
  By the induction hypothesis, we have $\strip{\Gamma}^{\mathrm{CI}} = \strip{\Delta}^{\mathrm{CI}}$,
  $\strip{\Gamma}^{\mathrm{CI}} : \save{m}{h} \Downarrow^{\mathrm{R}}_0 \strip{\Delta}^{\mathrm{CI}} : h$
  and $\Phi^{\mathrm{CI}}(\Delta) - \Phi^{\mathrm{CI}}(\Gamma) \leq m$.
  We have
\begin{align*}
  \strip{\Gamma, a \mapsto_h \memo\; w}^{\mathrm{CI}}
  &= \strip{\Gamma}^{\mathrm{CI}}, a \mapsto \memo\; w \\
  &= \strip{\Delta}^{\mathrm{CI}}, a \mapsto \memo\; w \\
  &= \strip{\Delta, a \mapsto_h \memo\; w}^{\mathrm{CI}}
\end{align*}
and thus $\strip{\Gamma, a \mapsto_h \memo\; w}^{\mathrm{CI}} : \save{m}{a} \Downarrow^{\mathrm{R}}_0 \strip{\Delta, a \mapsto_h \memo\; w}^{\mathrm{CI}} : a$
and
\begin{align*}
  \Phi^{\mathrm{CI}}(\Delta, a \mapsto_h \memo\; w) - \Phi^{\mathrm{CI}}(\Gamma, a \mapsto_h \memo\; w)
  &= \Phi^{\mathrm{CI}}(\Delta) - \Phi^{\mathrm{CI}}(\Gamma) \\
  &\leq m
\end{align*}
\end{itemize}
\end{proof}

\paragraph*{\cref{lem:credit-inheritance-sound}}
\begin{proof}
By induction over the derivation of $\Gamma : e \Downarrow^{\mathrm{CI}}_n \Delta : v$.
We only consider the new rules. Again, $\save{m}{a}$ and $\pass{h}$ are no-ops in the real cost model.
\begin{itemize}
\bitem{(pass)} By \cref{lem:credit-inheritance-save-sound}, we have
  $\strip{\Gamma}^{\mathrm{CI}} = \strip{\Delta}^{\mathrm{CI}}$,
  $\strip{\Gamma}^{\mathrm{CI}} : \save{m}{h} \Downarrow^{\mathrm{R}}_0 \strip{\Delta}^{\mathrm{CI}} : h$
  and $\Phi^{\mathrm{CI}}(\Delta) - \Phi^{\mathrm{CI}}(\Gamma) \leq m$.
  Since $\pass{h}$ is a no-op in the real cost model, this concludes the proof.
\bitem{(force)} By the induction hypothesis, we have
  $\strip{\Gamma}^{\mathrm{C}} : F\, v \Downarrow^{\mathrm{R}}_{k'} \strip{\Delta}^{\mathrm{C}} : w$ and $k' + \Phi^{\mathrm{C}}(\Delta) - \Phi^{\mathrm{C}}(\Gamma) \leq k$.
  By the (force) rule, we have
  $\strip{\Gamma, a \mapsto_n \lazy_F\; v}^{\mathrm{C}} : \force a \Downarrow^{\mathrm{R}}_{k'} \strip{\Delta, a \mapsto_h \memo\; w}^{\mathrm{C}} : w$.
  The second condition holds as:
  \begin{align*}
  k' + \Phi^{\mathrm{C}}(\Delta, a \mapsto_h \memo\; w) - \Phi^{\mathrm{C}}(\Gamma, a \mapsto_n \lazy_F\; v)
  &= k' + \Phi^{\mathrm{C}}(\Delta) - \Phi^{\mathrm{C}}(\Gamma) - n \\
  &\leq k - n \\
  &= n - n \\
  &= 0
  \end{align*}
\bitem{(inherit)} By \cref{lem:credit-inheritance-save-sound}.
\end{itemize}
\end{proof}

\begin{remark}
For the debit cost model, we ensure the following invariant on heaps:
For every heap $\Gamma, a \mapsto_k \memo^{\Delta}_{F\,v}\; w$, we require that
$\strip{\Gamma\setminus\Delta}^{\mathrm{D}} : F\, v \Downarrow^{\mathrm{R}}_{k^\mathrm{R}} \strip{\Gamma}^{\mathrm{D}} : w$
and that $k \leq k^{\mathrm{R}}$.
\end{remark}

\begin{lemma}
\label{lem:debit-cost}
If $\Gamma : e \Downarrow^{\mathrm{D}}_{k,k'} \Delta : v$,
and $\strip{\Gamma}^{\mathrm{D}} : e \Downarrow^{\mathrm{R}}_{k^{\mathrm{R}}} \strip{\Delta}^{\mathrm{D}} : v$,
then $k \leq k^{\mathrm{R}}$.
\end{lemma}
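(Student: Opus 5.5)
We argue by induction on the derivation of $\Gamma : e \Downarrow^{\mathrm{D}}_{k,k'} \Delta : v$, showing $k \leq k^{\mathrm{R}}$ for the unshared cost $k$. Since the real semantics is deterministic (given our convention on fresh pointers), the real derivation in the hypothesis is the unique real evaluation of $e$ from $\strip{\Gamma}^{\mathrm{D}}$. Its shape therefore matches the debit derivation wherever the two semantics share a rule, and we may invert it alongside the debit derivation.

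\textbf{Structural rules.} For (value), (fold), (unfold), (case), (split), and (memo), both derivations use the same rule. The unshared step count is $0$ or is passed through from the single premise, so the claim is immediate or follows from the induction hypothesis. For (app) and (call), both sides add one step. For (let), the real derivation splits as $k^{\mathrm{R}} = k_1^{\mathrm{R}} + k_2^{\mathrm{R}}$, and the intermediate real heap is $\strip{\cdot}^{\mathrm{D}}$ of the debit intermediate heap. Establishing this needs a small auxiliary fact: every debit step relates to a real evaluation between the stripped heaps. I would either prove this simultaneously, strengthening the induction to assert the real evaluation exists (the soundness statement), or assume the soundness lemma's first component. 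Adding the two inductive bounds then gives the result.

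\textbf{Thunk rules.} (save) and (waste) have $k = 0$, since they only contribute to $k'$, so they are trivial. (force) in the debit semantics has $k = 0$, so it too is trivial. The crucial case is (lazy). Here the debit rule gives unshared cost $0$, and the real semantics merely allocates $\lazy_F\; v$ at cost $0$. This holds because $\strip{\cdot}^{\mathrm{D}}$ turns the new annotated thunk back into $\lazy_F\; v$ and removes the allocations $\Delta\setminus\Gamma$. Again $0 \leq k^{\mathrm{R}}$.

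\textbf{Main obstacle.} The subtle point is that $\strip{\cdot}^{\mathrm{D}}$ of a debit heap may contain $\lazy_F\; v$ where the debit heap has an annotated memo thunk. When the debit semantics forces such a thunk at unshared cost $0$, the real semantics runs $F\,v$ at positive cost. This direction is harmless, since it only makes $k^{\mathrm{R}}$ larger. The only risk would be the converse, where the real side is cheaper. That happens only if a thunk that is lazy in the debit view is memoised in the real view, and $\strip{\cdot}^{\mathrm{D}}$ never produces this situation. I would also check that the real evaluation of the stripped heap performs the same forcing steps and ends in exactly $\strip{\Delta}^{\mathrm{D}}$. The remark's heap invariant, namely that $\strip{\Gamma\setminus\Delta}^{\mathrm{D}} : F\,v \Downarrow^{\mathrm{R}}_{k^{\mathrm{R}}} \strip{\Gamma}^{\mathrm{D}} : w$, is exactly what makes the real (force) step reconstruct the recorded allocations, so the heaps line up.
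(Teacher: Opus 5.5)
Your proposal is correct and takes the same approach as the paper: induction over the debit derivation, checking the bound rule by rule. In both, the unshared cost is $0$ for (lazy), (force), (save) and (waste), and elsewhere it is propagated or incremented exactly as the real cost is; your remark that the (let) case needs the value/heap correspondence from the first half of the soundness argument (proved simultaneously, to avoid circularity) to align the intermediate real heap is a detail the paper leaves implicit, and you handle it correctly.
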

\begin{proof}
By induction over the derivation of $\Gamma : e \Downarrow^{\mathrm{D}}_{k,k'} \Delta : v$.
The property hold for all rules.
\end{proof}

\paragraph*{\cref{lem:debit-sound}}
\begin{proof}
We first show that both semantics produce the same final heap and value.
By induction over the derivation of $\Gamma : e \Downarrow^{\mathrm{D}}_{k,k'} \Delta : v$.
We only consider the new rules.
\begin{itemize}
\bitem{(lazy)} We have
\begin{align*}
  \strip{\Delta, a \mapsto_k \memo^{\Delta\setminus\Gamma}_{F\,v}\; w}^{\mathrm{D}}
  &= \strip{\Delta \setminus (\Delta \setminus \Gamma)}^{\mathrm{D}}, a \mapsto \lazy_F\; v \\
  &= \strip{\Gamma}^{\mathrm{D}}, a \mapsto \lazy_F\; v
\end{align*}
By the induction hypothesis, we have
$\strip{\Gamma}^{\mathrm{D}} : e \Downarrow^{\mathrm{R}}_{k} \strip{\Delta}^{\mathrm{D}} : v$,
and thus our invariant holds.
\bitem{(force)} We have
\begin{align*}
  \strip{\Gamma, a \mapsto_n \memo^{\Delta}_{F\,v}\; w}^{\mathrm{D}}
  &= \strip{\Gamma \setminus \Delta}^{\mathrm{D}}, a \mapsto \lazy_F\; v
\end{align*}
and
\begin{align*}
  \strip{\Gamma, a \mapsto \memo\; w}^{\mathrm{D}}
  &= \strip{\Gamma}^{\mathrm{D}}, a \mapsto \memo\; w
\end{align*}
By our invariant, we have
$\strip{\Gamma\setminus\Delta}^{\mathrm{D}} : F\, v \Downarrow^{\mathrm{R}}_k \strip{\Gamma}^{\mathrm{D}} : w$
We can thus apply the (force) rule in the real cost model to obtain the claim.
\bitem{(save), (waste)} Trivial, since $\strip{\Gamma, a \mapsto_n \memo^{\Delta}_{F\,v}\; w}^{\mathrm{D}} = \strip{\Gamma}^{\mathrm{D}}, a \mapsto \memo\; w = \strip{\Gamma, a \mapsto_{n - m} \memo^{\Delta}_{F\,v}\; w}^{\mathrm{D}}$.
\end{itemize}

Now we show the cost condition.
By induction over the derivation of $\Gamma : e \Downarrow^{\mathrm{D}}_{k,k'} \Delta : v$.
We only consider the new rules.
\begin{itemize}
\bitem{(lazy)} Let $k^{\mathrm{R}}$ be the time taken for the thunk in the real semantics.
By the induction hypothesis, we have
$k^{\mathrm{R}} + \Phi^{\mathrm{D}}(\Delta) - \Phi^{\mathrm{D}}(\Gamma) \leq k + k'$.
We have
\begin{align*}
  \Phi^{\mathrm{D}}(\Delta, a \mapsto_k \memo^{\Delta}_{F\,v}\; w) - \Phi^{\mathrm{D}}(\Gamma)
  &= (\Phi^{\mathrm{D}}(\Delta) + k^{\mathrm{R}} - k) - \Phi^{\mathrm{D}}(\Gamma) \\
  &\leq k^{\mathrm{R}} - k + k + k' - k^{\mathrm{R}} \\
  &\leq k'
\end{align*}
By \cref{lem:debit-cost}, we have $k \leq k^{\mathrm{R}}$, which ensures that $\Phi^{\mathrm{D}}$ remains non-negative.
\bitem{(force)} Let $k^{\mathrm{R}}$ be the time taken for the thunk in the real semantics.
\begin{align*}
  &k^{\mathrm{R}} + \Phi^{\mathrm{D}}(\Gamma, a \mapsto \memo\; w) - \Phi^{\mathrm{D}}(\Gamma, a \mapsto_n \memo^{\Delta}_{F\,v}\; w) \\
  &= k^{\mathrm{R}} + \Phi^{\mathrm{D}}(\Gamma) - \Phi^{\mathrm{D}}(\Gamma) - k^{\mathrm{R}} + n \\
  &\leq n \\
  &\leq 0
\end{align*}
\bitem{(save)} We have
\begin{align*}
  &\Phi^{\mathrm{D}}(\Gamma, a \mapsto_{n - m} \memo^{\Delta}_{F\,v}\; w) - \Phi^{\mathrm{D}}(\Gamma, a \mapsto_n \memo^{\Delta}_{F\,v}\; w) \\
  &= (\Phi^{\mathrm{D}}(\Gamma) + k^{\mathrm{R}} - (n - m)) - (\Phi^{\mathrm{D}}(\Gamma) + k^{\mathrm{R}} - n) \\
  &= m - n + n = m
\end{align*}
\bitem{(waste)} Trivial, since it doesn't change the heap and $0 \leq m$.
\end{itemize}
\end{proof}

\begin{remark}
For the debit inheritance cost model, we relax the invariant that $k \leq k^{\mathrm{R}}$
for every memoised thunk and include on the right hand side the costs of the child thunks as well.
\end{remark}

\begin{lemma}
\label{lem:debit-inheritance-cost}
If $\Gamma : e \Downarrow^{\mathrm{D}}_{k,k'} \Delta : v$,
$\strip{\Gamma}^{\mathrm{D}} : e \Downarrow^{\mathrm{R}}_{k^{\mathrm{R}}} \strip{\Delta}^{\mathrm{D}} : v$,
and $k_1^{\mathrm{R}}$, \ldots, $k_n^{\mathrm{R}}$ are the real costs of the child thunks,
then $k \leq k^{\mathrm{R}} + k_1^{\mathrm{R}} + \ldots + k_n^{\mathrm{R}}$.
\end{lemma}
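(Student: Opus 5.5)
We argue by induction over the derivation of $\Gamma : e \Downarrow^{\mathrm{D}}_{k,k'} \Delta : v$ in the debit-inheritance semantics, mirroring the proof of \cref{lem:debit-cost}. By ``child thunks'' we mean those thunks allocated by the (lazy) rule during the evaluation of $e$, i.e.\ the entries $b \mapsto_{k_2} \memo^{\Theta}_{G\,u}\; x$ of $\Delta \setminus \Gamma$, with $k_i^{\mathrm{R}}$ the real cost recorded for them by the (relaxed) heap invariant. The claim is that the unshared cost $k$ never exceeds the real cost of $e$ itself plus the real costs of the thunks it created but, in the real semantics, left in the lazy state.

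For every rule other than (lazy), $k$ behaves exactly as in the real semantics. The value, fold, unfold, split, case, memo and (force) rules contribute $0$ on both sides. (force) contributes $0$ because the recorded computation was already paid for, and its real counterpart is a (force) whose cost is at least $0$. (save) and (waste) contribute only to $k'$. App and call add $1$ on both sides. For (let) we apply the induction hypothesis to both premises and add the bounds. This works because the multiset of child thunks of the whole expression is the disjoint union of those of the two subderivations, using the global freshness assumption on pointers. In each case the right-hand side is monotone in the set of children, so all these cases are routine.

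The key case is (lazy). Here $\Gamma : F\,v \Downarrow_{(k_1+k_2),k'} \Delta : w$ and the conclusion has unshared cost $k_1$, producing a new child $a$ with debit $k_2$. In the real semantics, $\lazy_F\,v$ costs $0$. The erased heap forgets the allocations $\Delta \setminus \Gamma$ and adds a single lazy thunk $a$ whose real cost is some $k_a^{\mathrm{R}}$. By the induction hypothesis on the premise, $k_1+k_2 \le k_a^{\mathrm{R}} + \sum_j k_j^{\mathrm{R}}$, where the $k_j^{\mathrm{R}}$ are the children created inside $F\,v$. Those inner children now lie inside the annotation $\Delta\setminus\Gamma$ of $a$. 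We therefore need to count them as children reachable through $a$. This is exactly where the relaxed invariant of the preceding remark enters: a memoised thunk's debit is bounded by its own real cost plus that of its descendants. The bound $k_1 \le k_1 + k_2 \le k_a^{\mathrm{R}} + \sum_j k_j^{\mathrm{R}}$ then follows, and the invariant for $a$ (debit $k_2$) holds by the same inequality.

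The main obstacle is making ``child thunks'' precise so that the (lazy) case closes. The children of the premise must be re-interpreted as (transitive) children of the new thunk $a$ without double counting across nested (lazy) applications. I would first try to define the children of a derivation as all annotated memoised thunks in $\Delta\setminus\Gamma$, including those nested in annotations, taken transitively. I would then check that the (let) case is still a disjoint union and that the (lazy) case merely regroups the same multiset under $a$.
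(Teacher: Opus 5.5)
Your proposal is correct and follows the paper's own proof: induction on the debit derivation, where every rule other than the inheriting (lazy) rule satisfies $k \le k^{\mathrm{R}}$ locally and (let) adds the two bounds, while in (lazy) the inequality $k_1 \le k_1 + k_2$ is closed by charging the new thunk's real cost, together with those of the thunks created inside it, on the right-hand side. One caution about the definition you flag as the main obstacle: count as children all thunks allocated by (lazy) anywhere in the derivation, including ones later forced, not only the still-annotated entries of $\Delta\setminus\Gamma$ (the thunks ``left in the lazy state''); under that narrower reading a thunk created in $e_1$ and forced in $e_2$ loses its annotation, so the (let) case is no longer a disjoint union and adding the two induction hypotheses does not give the bound.
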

\begin{proof}
By induction over the derivation of $\Gamma : e \Downarrow^{\mathrm{D}}_{k,k'} \Delta : v$.
We have $k \leq k^{\mathrm{R}}$ in all rules except for the new (lazy) rule with inheritance.
In that case, we have to charge for the time taken to evaluate the thunk as well.
\end{proof}

\paragraph*{\cref{lem:debit-inheritance-sound}}
\begin{proof}
By induction over the derivation of $\Gamma : e \Downarrow^{\mathrm{DI}}_{k,k'} \Delta : v$.
We only have to consider the new credit annotation on the (lazy) rule.
\begin{itemize}
\bitem{(lazy)} Let $k^{\mathrm{R}}$ be the time taken for the thunk in the real semantics.
By the induction hypothesis, we have
$k^{\mathrm{R}} + \Phi^{\mathrm{D}}(\Delta) - \Phi^{\mathrm{D}}(\Gamma) \leq k_1 + k_2 + k'$.
We have
\begin{align*}
  \Phi^{\mathrm{D}}(\Delta, a \mapsto_{k_2} \memo^{\Delta}_{F\,v}\; w) - \Phi^{\mathrm{D}}(\Gamma)
  &= (\Phi^{\mathrm{D}}(\Delta) + k^{\mathrm{R}} - k_2) - \Phi^{\mathrm{D}}(\Gamma) \\
  &\leq k^{\mathrm{R}} - k_2 + k_1 + k_2 + k' - k^{\mathrm{R}} \\
  &\leq k_1 + k'
\end{align*}
\end{itemize}
\end{proof}

\subsection{Accessibility}

To show results about persistence, it is convenient to consider only that subset
of the big-step rules that interact with the heap. Rules like (case), (split), (app) or (call)
are independent of the heap and do not affect persistence.
We thus define a version of the relation $\sqsubseteq$ that characterises
expressions that interact with the heap.
We write $\Gamma \sqsubseteq_{e,k,w} \Delta$ if $\Gamma : e \Downarrow_k \Delta : w$.
We write $\sqsubseteq_{\_,k,w}$ if different $e$ yield the same relation.

For the real cost model, the rules are as follows:

\begin{center}
  \AxiomC{\phantom{$\Gamma \sqsubseteq \Delta$}}
  \RightLabel{(refl)}
  \UnaryInfC{$\Gamma \sqsubseteq_{v,0,v} \Gamma$}
  \DisplayProof
  \quad
  \AxiomC{$\Gamma \sqsubseteq_{e_1,k,v} \Delta$}
  \AxiomC{$\Delta \sqsubseteq_{e_2[v/x],l,w} \Theta$}
  \RightLabel{(trans)}
  \BinaryInfC{$\Gamma \sqsubseteq_{\llet x = e_1 \lin e_2, k+l,w} \Theta$}
  \DisplayProof
  \quad
  \AxiomC{$\Gamma \sqsubseteq_{e, k, w} \Delta$}
  \RightLabel{(step)}
  \UnaryInfC{$\Gamma \sqsubseteq_{\_,k, w} \Delta$}
  \DisplayProof
\end{center}
\begin{center}
  \AxiomC{$a \notin \mathrm{dom}(\Gamma)$}
  \RightLabel{(alloc)}
  \UnaryInfC{$\Gamma \sqsubseteq_{\_,0,a} \Gamma, a \mapsto hv$}
  \DisplayProof
  \quad
  \AxiomC{$a \mapsto \fold v \in \Gamma$}
  \RightLabel{(unfold)}
  \UnaryInfC{$\Gamma \sqsubseteq_{\unfold a,0,v} \Gamma$}
  \DisplayProof
\end{center}
\begin{center}
  \AxiomC{$\Gamma \sqsubseteq_{F\, v, k, w} \Delta$}
  \RightLabel{(force)}
  \UnaryInfC{$(\Gamma, a \mapsto \lazy_F\; v) \sqsubseteq_{\force a, k, w} (\Delta, a \mapsto \memo\; w)$}
  \DisplayProof
\end{center}
\begin{center}
  \AxiomC{$a \mapsto \memo\, w \in \Gamma$}
  \RightLabel{(recall)}
  \UnaryInfC{$\Gamma \sqsubseteq_{\force a,0,w} \Gamma$}
  \DisplayProof
\end{center}

\begin{lemma}
\label{lem:real-accessible-equiv}
$\Gamma \sqsubseteq^{\mathrm{R}}_{e,k,w} \Delta$ iff $\Gamma : e \Downarrow^{\mathrm{R}}_k \Delta : w$.
\end{lemma}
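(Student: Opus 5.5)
We prove the two directions separately, each by induction on the relevant derivation, relating big-step rules and $\sqsubseteq^{\mathrm{R}}_{e,k,w}$ rules one-for-one. The idea is that the heap-interacting big-step rules have direct counterparts: (value) corresponds to (refl); (let) to (trans); (fold), (lazy) and (memo) to (alloc); and (unfold), (force) and (recall) to the rules of the same name. The heap-independent rules (case), (split), (app) and (call) only change the expression and possibly the cost, so they are absorbed by (step).

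\emph{From right to left}, we induct on the derivation of $\Gamma : e \Downarrow^{\mathrm{R}}_k \Delta : w$ and case-split on the last rule.
\begin{itemize}
\item For (value), (unfold) and (recall), we use (refl), (unfold) and (recall) directly.
\item For (fold), (lazy) and (memo), we use (alloc), instantiating its wildcard expression with the allocating expression. The returned value is the fresh pointer $a$, which matches.
\item For (let), the induction hypotheses give $\Gamma \sqsubseteq_{e_1,k,v} \Delta$ and $\Delta \sqsubseteq_{e_2[v/x],l,w} \Theta$, and (trans) combines them.
\item For (force), the induction hypothesis on the premise $\Gamma : F\,v \Downarrow_k \Delta : w$ gives $\Gamma \sqsubseteq_{F\,v,k,w} \Delta$, and the (force) rule of $\sqsubseteq$ applies.
\item For (case), (split), (app) and (call), the induction hypothesis gives $\Gamma \sqsubseteq_{e',k',w} \Delta$ for the premise expression $e'$. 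We then use (step) to obtain the wildcard judgement and instantiate it at $e$.
\end{itemize}

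\emph{From left to right}, we induct on the derivation of $\Gamma \sqsubseteq_{e,k,w} \Delta$, and the same correspondence is read backwards. This direction is where the care lies, because of the (step) and (alloc) rules with their wildcards.

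The main obstacle is the reading of $\sqsubseteq_{\_,k,w}$: literally, (step) would let us relabel any $\Gamma \sqsubseteq_{e,k,w}\Delta$ to an arbitrary expression. That would make the left-to-right direction false. Under the convention stated just before the lemma ("we write $\sqsubseteq_{\_,k,w}$ if different $e$ yield the same relation"), I would therefore make the side condition explicit. (step) may only relabel $e'$ to $e$ when $e$ reduces to $e'$ by one of the heap-independent rules with the appropriate cost adjustment: $+1$ for (app) and (call), $0$ for (case) and (split). Likewise, (alloc) is instantiated only at $\fold\,v$, $\lazy_F\,v$ (with $F\in\mathcal{F}$) or $\memo\,v$, with the heap value matching. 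With this reading, each (step) case is exactly an application of the corresponding big-step rule to the induction hypothesis, and each (alloc) case is an instance of (fold), (lazy) or (memo).

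The remaining cases are immediate:
\begin{itemize}
\item (refl) gives (value).
\item (trans) gives (let).
\item (unfold) and (recall) are identical to their big-step counterparts.
\item (force) follows from the induction hypothesis and big-step (force).
\end{itemize}
Step counts match because costs are added in (trans) and (let) alike, and are unchanged by all heap rules.
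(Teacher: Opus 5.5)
Your proof is correct and is the same argument as the paper's, which consists solely of this rule-by-rule correspondence: (refl) with (value), (trans) with (let), (step) with (case), (split), (app), (call), (alloc) with (fold), (lazy), (memo), and the identically named rules; you have simply spelled it out as two inductions. Your explicit restriction of the wildcard in (step) and (alloc), including the $+1$ cost adjustment for (app) and (call) that the paper's printed (step) rule leaves out, is the reading the paper implicitly relies on, and without it the statement would not literally hold.
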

\begin{proof}
The rules of $\sqsubseteq_{e,k,w}$ relate to the big-step semantics as follows:
\begin{itemize}
  \bitem{(refl)} (value)
  \bitem{(trans)} (let)
  \bitem{(step)} (case), (split), (app), (call)
  \bitem{(alloc)} (fold), (lazy), (memo)
  \bitem{(unfold)} (unfold)
  \bitem{(force)} (force)
  \bitem{(recall)} (recall)
\end{itemize}
\end{proof}

\begin{lemma}
\label{lem:real-accessible-def}
$\Gamma \sqsubseteq^{\mathrm{R}} \Delta$ iff there exists $e,k,w$ such that $\Gamma \sqsubseteq^{\mathrm{R}}_{e,k,w} \Delta$.
\end{lemma}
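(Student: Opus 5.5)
We prove the two directions separately. Throughout, $\sqsubseteq^{\mathrm{R}}$ denotes the relation generated by [refl], [trans], [alloc] and [force], and $\sqsubseteq^{\mathrm{R}}_{e,k,w}$ the indexed relation just defined.

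\emph{The direction from right to left.} We proceed by induction on the derivation of $\Gamma \sqsubseteq^{\mathrm{R}}_{e,k,w} \Delta$ and map each indexed rule to an unindexed one:
\begin{itemize}
\item (refl), (unfold) and (recall) leave the heap unchanged, so they map to [refl].
\item (trans) maps to [trans], using the induction hypothesis on both premises.
\item (step) is immediate from its premise.
\item (alloc) maps to [alloc].
\item (force) maps to [force].
\end{itemize}
The only subtle case is (force). Here [force] needs two premises, $\Gamma \sqsubseteq^{\mathrm{R}} \Delta$ and $\Gamma : F\,v \Downarrow^{\mathrm{R}}_k \Delta : w$. The first comes from the induction hypothesis. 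The second comes from \cref{lem:real-accessible-equiv}, applied to the premise $\Gamma \sqsubseteq_{F\,v,k,w} \Delta$.

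\emph{The direction from left to right.} We proceed by induction on the derivation of $\Gamma \sqsubseteq^{\mathrm{R}} \Delta$, and in each case we construct an expression witnessing the step. By \cref{lem:real-accessible-equiv} it suffices to exhibit a big-step evaluation.
\begin{itemize}
\item For [refl], take $e = ()$, which evaluates via (value).
\item For [alloc] of $hv$, take $e = hv$ itself, using (fold), (lazy) or (memo) according to the shape of $hv$.
\item For [force] on $(\Gamma, a \mapsto \lazy_F\; v)$, take $e = \force a$. Its premise $\Gamma : F\,v \Downarrow_k \Delta : w$ gives the evaluation directly by the (force) rule. The auxiliary premise $\Gamma \sqsubseteq \Delta$ is not even needed here.
\item For [trans], the induction hypotheses give expressions $e_1$ and $e_2$ relating $\Gamma$ to $\Delta$ and $\Delta$ to $\Theta$. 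Take $\llet x = e_1 \lin e_2$ with $x$ fresh, so that $e_2[v/x] = e_2$, and combine the two evaluations by (let).
\end{itemize}

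\emph{Where the care is needed.} The main point is the [trans] case and the handling of pointers. The expression $e_2$ was built for the heap $\Delta$. It may mention pointers such as $a$ in $\force a$, which exist in $\Delta$; since those pointers are indeed present in $\Delta$ when $e_2$ runs, this is fine. We also rely on the paper's convention of deterministic, globally fresh allocation, which ensures that the allocations made by $e_1$ and $e_2$ land exactly on the pointers prescribed by the [alloc] steps.

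For [alloc], the value $hv$ may contain pointers into $\Gamma$, so the expression $hv$ is meaningful under $\Gamma$. The fresh pointer chosen must be the given $a$, and we take this for granted under the same naming convention, as discussed in the remark following \cref{lem:real-persistent}.
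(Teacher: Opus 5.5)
Your proof is correct and has the same overall shape as the paper's. The right-to-left direction is the same rule-by-rule translation, including combining the induction hypothesis with \cref{lem:real-accessible-equiv} in the (force) case. On the left-to-right direction, [refl], [alloc] and [force] are handled the same way.

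The one real difference is the [trans] case. The paper keeps an invariant that the witness value collects all newly allocated locations as nested pairs. It then builds $\llet x = e_1 \lin \ldots \llet y = e_2 \lin (x,y)$, splitting $v$ so the new pointers become bound variables, and returns $(v,w)$. This is apparently meant to stop $e_2$ from mentioning pointers absent from $\Gamma$. You instead take $\llet x = e_1 \lin e_2$ with $x$ fresh. Since values may contain pointer literals, and (let) only needs the intermediate heap to be exactly $\Delta$, your version works without any invariant. It also sidesteps a weak spot in the paper's argument: its invariant is not evidently preserved by [force], because the value returned by $F\,v$ need not contain every location allocated while evaluating it.

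Your appeals to the deterministic, globally fresh allocation convention are unnecessary, though. (fold), (lazy) and (memo) accept any $a \notin \mathrm{dom}(\Gamma)$, so in [alloc] you get the prescribed $a$ for free; alternatively, use the indexed (alloc) rule directly, as the paper does. In [trans], the induction hypothesis already supplies derivations ending in exactly $\Delta$ and $\Theta$, and (let) simply composes them. It is better not to invoke that convention at all: if ``globally fresh'' meant fresh with respect to the program text, $e_1$ would avoid the pointers that $e_2$ mentions, which would undermine your construction rather than support it.
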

\begin{proof}
Direction ($\Rightarrow$): By induction on $\Gamma \sqsubseteq \Delta$.
We maintain that $w$ consists of nested pairs that contain all heap locations in $\Delta\setminus\Gamma$.
\begin{itemize}
  \bitem{[refl]} Immediately by the (refl) rule using $e = w = ()$.
  \bitem{[trans]} Apply the induction hypothesis on both premises,
    yielding $\Gamma \sqsubseteq_{e_1,k,v} \Delta$ and $\Delta \sqsubseteq_{e_2,l,w} \Theta$.
    Let $e = \llet x = e_1 \lin \ldots \llet y = e_2 \lin (x, y)$, where $\ldots$ 
    splits $v$ into its components. We obtain $\Gamma \sqsubseteq_{e,k',(v,w)} \Theta$
    by the (trans) and (step) rules for some $k'$.
  \bitem{[alloc]} By the (alloc) rule.
  \bitem{[force]} Apply \cref{lem:real-accessible-equiv} on the premise and then use the (force) rule.
\end{itemize}

Direction ($\Leftarrow$): By induction on $\Gamma \sqsubseteq_{e,k,w} \Delta$.
\begin{itemize}
  \bitem{(refl), (alloc)} Immediately by the [refl]/[alloc] rule.
  \bitem{(trans)} Apply the induction hypothesis on both premises and then use the [trans] rule.
  \bitem{(step)} Apply the induction hypothesis on the premise.
  \bitem{(unfold), (recall)} Use [refl].
  \bitem{(force)} Apply \cref{lem:real-accessible-equiv} on the premise
    to obtain $\Gamma : F\, v \Downarrow^{\mathrm{R}}_k \Delta : w$.
    Use the induction hypothesis to obtain $\Gamma \sqsubseteq \Delta$.
    Then use the [force] rule.
\end{itemize}
\end{proof}

\begin{lemma}
\label{lem:real-accessible}
The real cost model $((\Downarrow^{\mathrm{R}}, \Phi^{\mathrm{R}}, \stripdot^{\mathrm{R}}), \sqsubseteq^{\mathrm{R}})$ is accessible.
\end{lemma}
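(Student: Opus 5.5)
The plan is to chain the two preceding lemmas. By the definition of an accessible cost model, we must show that $\Gamma \sqsubseteq^{\mathrm{R}} \Delta$ holds iff $\Gamma : e \Downarrow^{\mathrm{R}}_n \Delta : v$ for some $e,n,v$. We also need $\sqsubseteq^{\mathrm{R}}$ to be a preorder, but that is immediate from the [refl] and [trans] rules in its inductive definition.

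For the equivalence, we first invoke \cref{lem:real-accessible-def}. This gives that $\Gamma \sqsubseteq^{\mathrm{R}} \Delta$ iff there exist $e,k,w$ with $\Gamma \sqsubseteq^{\mathrm{R}}_{e,k,w} \Delta$. We then invoke \cref{lem:real-accessible-equiv}, which rewrites the latter as $\Gamma : e \Downarrow^{\mathrm{R}}_k \Delta : w$. Composing the two biconditionals under the existential quantifier over $e,k,w$ yields exactly the accessibility condition.

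The substantive work is in the two lemmas being composed, so the main point to check is that their quantifier structure matches the definition. In \cref{lem:real-accessible-equiv}, the indexed relation for a fixed $e$ must coincide with the big-step judgement. One subtlety is the (step) rule, which forgets the expression ($\sqsubseteq_{\_,k,w}$). I would read \cref{lem:real-accessible-equiv} as saying that $\Gamma \sqsubseteq_{e,k,w}\Delta$ for some expression in the "don't care" position iff some expression evaluates accordingly. Under that reading, the existential in \cref{lem:real-accessible-def} absorbs this, since any witness for the wildcard can be replaced by the concrete expression from the underlying derivation, e.g.\ a (call) or (case) expression.

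If that reading turns out to be too loose, I would fall back on a direct argument. For the forward direction ($\Rightarrow$), I would replay the construction of \cref{lem:real-accessible-def}: [alloc] uses $\fold$, [force] uses $\force a$, and [trans] uses a let-binding that returns a pair of the new locations. For the backward direction ($\Leftarrow$), I would do an induction on the big-step derivation, mapping each rule to [refl], [alloc], [trans] or [force].
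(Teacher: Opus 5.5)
Your proposal is correct and takes the same approach as the paper, whose proof simply composes \cref{lem:real-accessible-def} with \cref{lem:real-accessible-equiv} under the existential over $e,k,w$. Your side remarks are sound: the preorder property follows from [refl] and [trans], and the wildcard in (step) is absorbed by the existential. One small caveat concerns your fallback: there, [alloc] must be matched by (fold), (lazy) or (memo) according to the shape of $hv$, not by $\fold$ alone.
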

\begin{proof}
By \cref{lem:real-accessible-equiv} and \cref{lem:real-accessible-def}.
\end{proof}

For the Bankers cost model, we add the following two rules:

\begin{center}
  \AxiomC{\phantom{$\Gamma \sqsubseteq \Delta$}}
  \RightLabel{(save)}
  \UnaryInfC{$\Gamma, a \mapsto_n hv \sqsubseteq_{\save{m}{a}, m, a} \Gamma, a \mapsto_{n+m} hv$}
  \DisplayProof
\end{center}
\begin{center}
  \AxiomC{$\Gamma, a \mapsto_n hv \sqsubseteq_{e,k,w} \Delta$}
  \AxiomC{$n,m \geq 0$}
  \RightLabel{(spend)}
  \BinaryInfC{$\Gamma, a \mapsto_{n+m} hv \sqsubseteq_{\spend{m}{a}{e},\max(k - m,0),w} \Delta$}
  \DisplayProof
\end{center}

\begin{lemma}
\label{lem:bankers-accessible-equiv}
$\Gamma \sqsubseteq^{\mathrm{B}}_{e,k,w} \Delta$ iff $\Gamma : e \Downarrow^{\mathrm{B}}_k \Delta : w$.
\end{lemma}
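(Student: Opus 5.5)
The plan is to mirror the proof of \cref{lem:real-accessible-equiv}: we extend the correspondence between rules of $\sqsubseteq^{\mathrm{B}}_{e,k,w}$ and rules of $\Downarrow^{\mathrm{B}}$ by the two new rules. The Bankers system $\sqsubseteq^{\mathrm{B}}_{e,k,w}$ consists of the real rules (refl), (trans), (step), (alloc), (unfold), (force), (recall), now read on heaps with credit annotations, plus (save) and (spend). We prove both directions by induction on the respective derivation, handling each rule by its counterpart.

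For ($\Leftarrow$), induct on $\Gamma : e \Downarrow^{\mathrm{B}}_k \Delta : w$. The rules map as follows.
\begin{itemize}
\item (value) maps to (refl), and (let) maps to (trans), using the induction hypothesis on both premises.
\item (case), (split), (app) and (call) map to (step) applied to the induction hypothesis of their premise. Their costs agree: $k$ or $k+1$ in both systems. This requires (step) to allow the label expression to change. Since (step) is stated with an arbitrary $\_$, we read its conclusion as relating any expression that reduces to the premise's expression with the corresponding step count.
\item (fold), (lazy) and (memo) map to (alloc). A new cell carries $0$ credits, matching the Bankers convention.
\item (unfold) maps to (unfold), and (recall) maps to (recall).
\item (force) maps to (force) with the induction hypothesis.
\item The Bankers (save) rule maps directly to the new (save) rule.
\item (spend) maps to the new (spend) rule, applied to the induction hypothesis on its premise. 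The side conditions $n,m \ge 0$ and the cost $\max(k-m,0)$ are literally identical.
\end{itemize}

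For ($\Rightarrow$), induct on $\Gamma \sqsubseteq^{\mathrm{B}}_{e,k,w} \Delta$ and invert the same table. The only care point is (step) and (alloc), whose label is the wildcard $\_$. Here we must exhibit a concrete big-step rule with the right cost. For (alloc) we choose whichever of (fold), (lazy), (memo) allocates the given $hv$. For (step) we use the syntactic form of $e$ to pick (case), (split), (app) or (call), applying the induction hypothesis to the premise.

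The main obstacle is precisely this wildcard bookkeeping. The statement is phrased with a fixed $e$, while the (step) and (alloc) rules elide the expression. I would resolve it by making explicit that each $\_$ instance is indexed by the specific big-step rule it abbreviates, exactly as in the table in the proof of \cref{lem:real-accessible-equiv}. With that reading, every case is a one-to-one rule translation, and the new Bankers rules add nothing beyond syntactic matching. The full proof is therefore: ``as \cref{lem:real-accessible-equiv}, additionally matching (save) with (save) and (spend) with (spend).''
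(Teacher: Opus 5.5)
Your proposal is correct and takes the same approach as the paper, which also just extends the rule-by-rule correspondence of \cref{lem:real-accessible-equiv} by matching (save) with (save) and (spend) with (spend). Your explicit reading of the wildcard in (step) and (alloc) indexes each instance by the big-step rule it abbreviates, so that the extra step of (app) and (call) is counted. This is a sensible tightening of a point the paper leaves implicit, since its (step) rule as written keeps $k$ unchanged.
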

\begin{proof}
The rules of $\sqsubseteq_{e,k,w}$ relate to the big-step semantics as follows:
\begin{itemize}
  \bitem{(save)} (save)
  \bitem{(spend)} (spend)
\end{itemize}
\end{proof}

\begin{lemma}
\label{lem:bankers-accessible-def}
$\Gamma \sqsubseteq^{\mathrm{B}} \Delta$ iff there exists $e,k,w$ such that $\Gamma \sqsubseteq^{\mathrm{B}}_{e,k,w} \Delta$.
\end{lemma}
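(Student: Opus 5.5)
We mirror the proof of \cref{lem:real-accessible-def}, proving both directions by induction and treating only the two new rules; all other cases are exactly as before. Here $\sqsubseteq^{\mathrm{B}}$ consists of [refl], [trans], [alloc], [force] together with [save] and [spend]. The judgement $\sqsubseteq^{\mathrm{B}}_{e,k,w}$ consists of the real rules together with (save) and (spend). We also use \cref{lem:bankers-accessible-equiv} wherever the real proof used \cref{lem:real-accessible-equiv}. As before, in direction ($\Rightarrow$) we maintain the invariant that $w$ is a nest of pairs containing all locations in $\Delta\setminus\Gamma$, so that the [trans] case goes through unchanged by sequencing with $\llet$ and pairing.

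Direction ($\Rightarrow$): by induction on $\Gamma \sqsubseteq^{\mathrm{B}} \Delta$.
\begin{itemize}
\item For [save], with $0 \leq n \leq m$, we take $e = \save{(m-n)}{a}$. The (save) rule then gives $\Gamma, a \mapsto_n hv \sqsubseteq_{e, m-n, a} \Gamma, a \mapsto_m hv$. The returned value $a$ is not new, which is fine because no locations were added.
\item For [spend], with $n \geq m \geq 0$, we take $e = \spend{(n-m)}{a}{()}$. The premise $\Gamma, a \mapsto_m hv \sqsubseteq_{(),0,()} \Gamma, a \mapsto_m hv$ holds by (refl). The (spend) rule then yields $\Gamma, a\mapsto_n hv \sqsubseteq_{e,0,()} \Gamma, a \mapsto_m hv$.
\item The [force] case uses \cref{lem:bankers-accessible-equiv} on its big-step premise, exactly as in the real case.
\end{itemize}

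Direction ($\Leftarrow$): by induction on $\Gamma \sqsubseteq^{\mathrm{B}}_{e,k,w} \Delta$.
\begin{itemize}
\item The (save) case is immediate from [save], since $n \leq n+m$.
\item For (spend), the induction hypothesis gives $\Gamma, a\mapsto_n hv \sqsubseteq^{\mathrm{B}} \Delta$. We prepend $\Gamma, a \mapsto_{n+m} hv \sqsubseteq^{\mathrm{B}} \Gamma, a \mapsto_n hv$, which holds by [spend], and combine the two with [trans].
\item The (force) case again goes through \cref{lem:bankers-accessible-equiv} and the induction hypothesis.
\end{itemize}

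The main subtlety I expect is the bookkeeping in the (force)/[force] cases. There the premise is a Bankers big-step evaluation of $F\,v$, which may itself contain save and spend steps. Because the [force] rule of $\sqsubseteq^{\mathrm{B}}$ is stated with $\Downarrow^{\mathrm{B}}$ (the Bankers extension of the real rules), \cref{lem:bankers-accessible-equiv} lets us translate this premise in both directions. The remaining point is to check that the pairing invariant is not disturbed by the credit-only rules. This holds because those rules allocate nothing, so the invariant is trivially preserved.
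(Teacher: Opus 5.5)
Your proof is correct and follows the paper's argument: [save] and (save) correspond directly, and the (spend) case is closed by prepending a [spend] step to the induction hypothesis via [trans]. Your witness $e = \spend{(n-m)}{a}{()}$, with (refl) for the premise, is in fact the precise form of the paper's somewhat loosely stated $\spend{m}{a}{()}$.
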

\begin{proof}
Direction ($\Rightarrow$): By induction on $\Gamma \sqsubseteq \Delta$.
\begin{itemize}
  \bitem{[save]} Immediately by the (save) rule.
  \bitem{[spend]} Apply the (spend) rule using $e = \spend{m}{a}{()}$ and the [refl] rule.
\end{itemize}
Direction ($\Leftarrow$): By induction on $\Gamma \sqsubseteq^{\mathrm{B}}_{e,k,w} \Delta$.
\begin{itemize}
  \bitem{(save)} Immediately by the [save] rule.
  \bitem{(spend)} Apply the induction hypothesis on the premise, obtaining
    $\Gamma, a \mapsto_m hv \sqsubseteq \Delta$.
    Then use the [spend] rule and join the two derivations using [trans].
\end{itemize}
\end{proof}

\paragraph*{\cref{lem:bankers-accessible}}
\begin{proof}
By \cref{lem:bankers-accessible-equiv} and \cref{lem:bankers-accessible-def}.
\end{proof}

For the credit passing cost model, we add the following three rules:

\begin{center}
  \AxiomC{$hv = \lazy_F\; v$}
  \RightLabel{(save)}
  \UnaryInfC{$(\Gamma, a \mapsto_n hv) \sqsubseteq_{\save{m}{a},m, a} (\Gamma, a \mapsto_{n+m} hv)$}
  \DisplayProof
  \quad
  \AxiomC{$a \mapsto \memo\; v \in \Gamma$}
  \RightLabel{(waste)}
  \UnaryInfC{$\Gamma \sqsubseteq_{\save{m}{a},m, a} \Gamma$}
  \DisplayProof
\end{center}
\begin{center}
  \AxiomC{$\Gamma \sqsubseteq_{F\, v, k, w} \Delta$}
  \AxiomC{$k \leq n$}
  \RightLabel{(force)}
  \BinaryInfC{$(\Gamma, a \mapsto_n \lazy_F\; v) \sqsubseteq_{\force a,0,w} (\Delta, a \mapsto \memo\; w)$}
  \DisplayProof
\end{center}

\begin{lemma}
\label{lem:credit-passing-accessible-equiv}
$\Gamma \sqsubseteq^{\mathrm{C}}_{e,k,w} \Delta$ iff $\Gamma : e \Downarrow^{\mathrm{C}}_k \Delta : w$.
\end{lemma}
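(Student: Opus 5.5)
We prove the two directions separately, each by induction on the respective derivation, following \cref{lem:real-accessible-equiv} and \cref{lem:bankers-accessible-equiv}. The relation $\sqsubseteq^{\mathrm{C}}_{e,k,w}$ consists of the rules of the real relation, namely (refl), (trans), (step), (alloc), (unfold), (recall), except that (force) is replaced by the new credit-passing (force) rule, together with the (save) and (waste) rules. Each rule of $\sqsubseteq^{\mathrm{C}}_{e,k,w}$ should match exactly one group of big-step rules of $\Downarrow^{\mathrm{C}}$:
\begin{itemize}
  \bitem{(refl)} corresponds to (value);
  \bitem{(trans)} corresponds to (let);
  \bitem{(step)} corresponds to (case), (split), (app), (call);
  \bitem{(alloc)} corresponds to (fold), (memo) and the credit-passing (lazy), which allocates with annotation $0$;
  \bitem{(unfold), (recall)} correspond to the rules of the same name;
  \bitem{(save), (waste), (force)} correspond to the credit-passing (save), (waste) and (force) rules.
\end{itemize}

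For the direction from $\Downarrow^{\mathrm{C}}$ to $\sqsubseteq^{\mathrm{C}}_{e,k,w}$, we go by induction on the big-step derivation. For (let) and (force), we apply the induction hypothesis to the premises and then use (trans) or (force). In the (force) case, the side condition $k \leq n$ carries over unchanged, and the resulting cost is $0$ on both sides. The steps-only rules (case), (split), (app) and (call) are handled by (step) applied to the induction hypothesis. The cost bookkeeping is slightly subtle here: (app) and (call) add $1$ to the cost, so the (step) rule must be read as allowing the index $k$ in the conclusion to be that of the enclosing expression. We adopt the same convention already implicit in \cref{lem:real-accessible-equiv}, where $\_$ stands for the expression whose big-step rule is being simulated.

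For the converse direction, we go by induction on the $\sqsubseteq^{\mathrm{C}}_{e,k,w}$ derivation and invert each rule back to the corresponding big-step rule. For (alloc), we pick (fold), (memo) or (lazy) according to the shape of $hv$. The one point to check is that (alloc) in the credit model allocates lazy thunks with $0$ credits, matching the credit-passing (lazy) rule.

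The only real obstacle is this sloppiness in how (step) and (alloc) abstract over $e$. I expect to resolve it by the same convention used for the real and Bankers models, so the case analysis itself stays mechanical.
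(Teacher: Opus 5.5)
Your proposal is correct and takes the same approach as the paper: a rule-by-rule correspondence between the $\sqsubseteq^{\mathrm{C}}_{e,k,w}$ rules and the big-step rules of $\Downarrow^{\mathrm{C}}$. As in the paper, (save), (waste) and (force) match their namesakes, and the remaining rules are inherited from the real model. Your remarks about the $0$-credit annotation when (alloc) simulates the credit-passing (lazy) rule, and about the cost bookkeeping in (step), spell out conventions the paper leaves implicit.
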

\begin{proof}
The rules of $\sqsubseteq_{e,k,w}$ relate to the big-step semantics as follows:
\begin{itemize}
  \bitem{(save)} (save)
  \bitem{(waste)} (waste)
  \bitem{(force)} (force)
\end{itemize}
We continue using the (recall) and (lazy) rules from the real cost model.
\end{proof}

\begin{lemma}
\label{lem:credit-passing-accessible-def}
$\Gamma \sqsubseteq^{\mathrm{C}} \Delta$ iff there exists $e,k,w$ such that $\Gamma \sqsubseteq^{\mathrm{C}}_{e,k,w} \Delta$.
\end{lemma}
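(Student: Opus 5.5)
We prove both directions separately, following the structure of \cref{lem:real-accessible-def} and \cref{lem:bankers-accessible-def}. The accessibility relation $\sqsubseteq^{\mathrm{C}}$ consists of the rules [refl], [trans] and [alloc] of $\sqsubseteq^{\mathrm{R}}$, together with the new [save] and [force] rules. The indexed relation $\sqsubseteq^{\mathrm{C}}_{e,k,w}$ consists of the real rules, except that (force) is replaced by the credit version, plus the new (save) and (waste) rules.

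For the direction ($\Rightarrow$), we induct on the derivation of $\Gamma \sqsubseteq^{\mathrm{C}} \Delta$.
\begin{itemize}
\item For [refl], [alloc] and [trans] we reuse the constructions from \cref{lem:real-accessible-def}. For [trans], we maintain the invariant that $w$ is a nested pair containing all locations of $\Delta\setminus\Gamma$, and sequence the two witnesses with $\llet$ and (split).
\item For [save] with $hv = \lazy_F\; v$ and $n \leq m$, we choose $e = \save{(m-n)}{a}$ and apply the (save) rule of $\sqsubseteq^{\mathrm{C}}_{e,k,w}$. This yields the heap with $n + (m-n) = m$ credits, with $k = m-n$ and $w = a$.
\item For [force], the premise $\Gamma : F\,v \Downarrow^{\mathrm{C}}_k \Delta : w$ with $k \leq n$ is turned into $\Gamma \sqsubseteq^{\mathrm{C}}_{F\,v,k,w} \Delta$ by \cref{lem:credit-passing-accessible-equiv}. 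The (force) rule of the indexed relation then gives $(\Gamma, a \mapsto_n \lazy_F\; v) \sqsubseteq^{\mathrm{C}}_{\force a,0,w} (\Delta, a \mapsto \memo\; w)$.
\end{itemize}

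For the direction ($\Leftarrow$), we induct on the derivation of $\Gamma \sqsubseteq^{\mathrm{C}}_{e,k,w} \Delta$.
\begin{itemize}
\item The cases (refl), (alloc), (unfold), (recall), (step) and (trans) go exactly as in \cref{lem:real-accessible-def}. The (alloc) case now also covers (lazy), which allocates $a \mapsto_0 \lazy_F\; v$ and is handled by [alloc].
\item (save) follows from [save], since $n \leq n+m$ and the cell holds a lazy thunk.
\item (waste) leaves the heap unchanged and follows from [refl].
\item For (force), we take the premise $\Gamma \sqsubseteq^{\mathrm{C}}_{F\,v,k,w} \Delta$ and obtain two things from it: $\Gamma : F\,v \Downarrow^{\mathrm{C}}_k \Delta : w$ via \cref{lem:credit-passing-accessible-equiv}, and $\Gamma \sqsubseteq^{\mathrm{C}} \Delta$ via the induction hypothesis. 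Together with the side condition $k \leq n$, these are exactly the three premises of [force].
\end{itemize}

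The only step needing care is bookkeeping: in the [trans] case, the composed expression must remain a well-formed witness. We therefore keep the invariant that the witness value packages all newly allocated pointers, so that $e_2$ can refer to locations created by $e_1$, including lazy thunks on which it may need to save credits or which it may need to force. Since (save) and (force) only touch existing pointers, and those pointers are reachable either from the initial heap or from the packaged value, the invariant is preserved in every case.
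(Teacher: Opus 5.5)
Your proof is correct and follows the paper's argument case for case: [save] and [force] are witnessed by the (save) rule and by \cref{lem:credit-passing-accessible-equiv} followed by (force), and conversely (save), (waste) and (force) go to [save], [refl], and \cref{lem:credit-passing-accessible-equiv} plus the induction hypothesis plus [force]. The only inaccuracy is in your closing paragraph: forcing is not limited to existing pointers, since running $F\,v$ can allocate cells that are not packaged in $w$, so the packaging invariant (which the paper also states for \cref{lem:real-accessible-def}) does not literally survive the [force] case; the invariant is not needed anyway, because expressions may mention pointers directly and allocation is assumed deterministic and globally fresh.
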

\begin{proof}
Direction ($\Rightarrow$): By induction on $\Gamma \sqsubseteq \Delta$.
\begin{itemize}
  \bitem{[save]} Immediately by the (save) rule.
  \bitem{[force]} Apply \cref{lem:credit-passing-accessible-equiv} on the premise, obtaining
    $\Gamma \sqsubseteq_{F\, v, k, w} \Delta$.
    Then use the (force) rule.
\end{itemize}

Direction ($\Leftarrow$): By induction on $\Gamma \sqsubseteq^{\mathrm{C}}_{e,k,w} \Delta$.
\begin{itemize}
  \bitem{(save)} Immediately by the [save] rule.
  \bitem{(waste)} Immediately by the [refl] rule.
  \bitem{(force)} Apply \cref{lem:credit-passing-accessible-equiv} on the premise, obtaining
    $\Gamma : F\, v \Downarrow^{\mathrm{C}}_k \Delta : w$.
    Use the induction hypothesis to obtain $\Gamma \sqsubseteq \Delta$.
    Then use the [force] rule.
\end{itemize}
\end{proof}

\begin{lemma}
\label{lem:credit-passing-accessible}
The credit passing cost model $((\Downarrow^{\mathrm{C}}, \Phi^{\mathrm{C}}, \stripdot^{\mathrm{C}}), \sqsubseteq^{\mathrm{C}})$ is accessible.
\end{lemma}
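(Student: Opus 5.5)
The statement follows the same template as \cref{lem:real-accessible} and \cref{lem:bankers-accessible}. It is obtained by chaining \cref{lem:credit-passing-accessible-equiv} with \cref{lem:credit-passing-accessible-def}. Accessibility asks that $\Gamma \sqsubseteq^{\mathrm{C}} \Delta$ holds iff $\Gamma : e \Downarrow^{\mathrm{C}}_n \Delta : v$ for some $e,n,v$.

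For the forward direction, assume $\Gamma \sqsubseteq^{\mathrm{C}} \Delta$. \Cref{lem:credit-passing-accessible-def} yields $e,k,w$ with $\Gamma \sqsubseteq^{\mathrm{C}}_{e,k,w} \Delta$. \Cref{lem:credit-passing-accessible-equiv} then turns this into $\Gamma : e \Downarrow^{\mathrm{C}}_k \Delta : w$, which is what we need.

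For the converse, assume $\Gamma : e \Downarrow^{\mathrm{C}}_n \Delta : v$. We apply \cref{lem:credit-passing-accessible-equiv} in the other direction to get $\Gamma \sqsubseteq^{\mathrm{C}}_{e,n,v} \Delta$. The backward direction of \cref{lem:credit-passing-accessible-def} then gives $\Gamma \sqsubseteq^{\mathrm{C}} \Delta$.

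The only point needing care is that both auxiliary lemmas cover the whole of the credit passing relation, not just the three new rules. The rules of $\sqsubseteq^{\mathrm{C}}_{e,k,w}$ are those of the real model, minus the real (force) rule, plus (save), (waste) and the credit (force) rule. Likewise $\sqsubseteq^{\mathrm{C}}$ extends $\sqsubseteq^{\mathrm{R}}$ with [save], with its [force] replaced by the credit version. So in each inductive argument, the cases for (refl), (trans), (step), (alloc), (unfold) and (recall), and for [refl], [trans] and [alloc], are handled exactly as in the proofs of \cref{lem:real-accessible-equiv} and \cref{lem:real-accessible-def}.

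There is one adjustment in the (alloc) case. The credit (lazy) rule allocates $a \mapsto_0 \lazy_F\; v$, so the [alloc] rule must be read as permitting this zero-credit annotation. Apart from that, no step should pose an obstacle; the work was done in the two preceding lemmas.
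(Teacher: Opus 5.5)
Your proof is correct and matches the paper's own proof, which obtains this lemma by combining \cref{lem:credit-passing-accessible-equiv} and \cref{lem:credit-passing-accessible-def} in both directions. Your extra remarks, about the inherited real-model cases and about reading [alloc] as allowing the zero-credit lazy allocation, concern those two auxiliary lemmas rather than this statement; they are consistent with how the paper treats them.
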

\begin{proof}
By \cref{lem:credit-passing-accessible-equiv} and \cref{lem:credit-passing-accessible-def}.
\end{proof}

For the credit-inheritance cost model, we parameterise the $\sqsubseteq_{e,k,v}$ relation with a heir.

\begin{center}
  \AxiomC{$\Gamma \sqsubseteq^{\emptyset}_{\save{m}{h},m,v} \Delta$}
  \RightLabel{(pass)}
  \UnaryInfC{$\Gamma \sqsubseteq^{\{h\}}_{\pass{h},m,v} \Delta$}
  \DisplayProof
  \quad
  \AxiomC{$\Gamma \sqsubseteq^{\{h\}}_{F\, v, k, w} \Delta$}
  \AxiomC{$k = n$}
  \RightLabel{(force)}
  \BinaryInfC{$(\Gamma, a \mapsto_n \lazy_F\; v) \sqsubseteq^{\emptyset}_{\force a,0,w} (\Delta, a \mapsto_h \memo\; w)$}
  \DisplayProof
\end{center}
\begin{center}
  \AxiomC{$\Gamma \sqsubseteq^{\emptyset}_{\save{m}{h},m,v} \Delta$}
  \RightLabel{(inherit)}
  \UnaryInfC{$(\Gamma, a \mapsto_h \memo\; w) \sqsubseteq^{\emptyset}_{\save{m}{a},m,a} (\Delta, a \mapsto_h \memo\; w)$}
  \DisplayProof
\end{center}

\begin{lemma}
\label{lem:credit-inheritance-accessible-equiv}
$\Gamma \sqsubseteq^h_{e,k,w} \Delta$ iff $\Gamma : e \Downarrow^h_k \Delta : w$.
\end{lemma}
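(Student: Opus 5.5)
The plan is to follow the pattern of \cref{lem:real-accessible-equiv}, \cref{lem:bankers-accessible-equiv} and \cref{lem:credit-passing-accessible-equiv}. We prove both directions simultaneously by induction: the forward direction by induction on the derivation of $\Gamma \sqsubseteq^h_{e,k,w} \Delta$, and the backward direction by induction on the derivation of $\Gamma : e \Downarrow^h_k \Delta : w$. The core of the argument is a rule-by-rule correspondence in which every accessibility rule is matched with one or more big-step rules carrying the same heir index $h$.

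First we lift the real-model correspondence of \cref{lem:real-accessible-equiv} to the heir-indexed setting.
\begin{itemize}
\item The axiom rules (refl), (alloc), (unfold) and (recall) correspond to (value), to (fold)/(lazy)/(memo), to (unfold) and to (recall) respectively, all with heir $\emptyset$. This matches the convention that rules without a judgement premise use the empty heir.
\item The (step) rule propagates the heir of its single premise, as do (case), (split), (app) and (call).
\item (trans) must be given the same side-condition $|h_1 \cup h_2| \leq 1$ as the heir-indexed (let) rule, and yields index $h_1 \cup h_2$. With that in place it matches (let) directly.
\end{itemize}

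Next we treat the rules specific to credit inheritance.
\begin{itemize}
\item The (save) and (waste) rules of the credit passing model carry over with heir $\emptyset$ and correspond exactly to the big-step (save) and (waste).
\item For (pass) and (inherit), the premise is a judgement $\Gamma \sqsubseteq^{\emptyset}_{\save{m}{h},m,v} \Delta$. The induction hypothesis turns it into $\Gamma : \save{m}{h} \Downarrow^\emptyset_m \Delta : v$, and applying the corresponding big-step rule gives the conclusion. The backward direction is symmetric.
\item For (force), the premise $\Gamma \sqsubseteq^{\{h\}}_{F\,v,k,w} \Delta$ is converted by the induction hypothesis. The side-condition $k = n$ is identical in both systems, and both conclusions install the same heir annotation $a \mapsto_h \memo\; w$.
\end{itemize}

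The main obstacle I expect is bookkeeping rather than mathematics. The statement only holds if the accessibility rules carry exactly the same heir annotations and side-conditions as the big-step rules; the (let)/(trans) heir-union condition is the one most easily forgotten. So the first step will be to fix this convention explicitly by extending every inherited accessibility rule with a heir index, with the empty heir on axioms and propagation elsewhere.

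A second, minor point is the nested judgements inside (pass) and (inherit). They are sub-derivations, so structural induction covers them. Since $\save{m}{h}$ has no subexpressions that evaluate arbitrary code, these cases are in any case the base-like (save), (waste) and (inherit) cases, and the induction terminates.
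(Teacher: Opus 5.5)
Your proposal is correct and takes essentially the same approach as the paper. The paper's proof is just the rule-by-rule matching of (pass), (force) and (inherit) with their big-step counterparts, on top of the correspondences established for the earlier cost models. Your explicit heir bookkeeping on the inherited rules makes precise what the paper leaves implicit when it says it parameterises $\sqsubseteq_{e,k,w}$ with a heir: empty heir on the axioms, propagation through (step), and the $|h_1 \cup h_2| \leq 1$ side-condition on (trans) mirroring (let).
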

\begin{proof}
The rules of $\sqsubseteq_{e,k,w}$ relate to the big-step semantics as follows:
\begin{itemize}
  \bitem{(pass)} (pass)
  \bitem{(force)} (force)
  \bitem{(inherit)} (inherit)
\end{itemize}
\end{proof}

\begin{lemma}
\label{lem:credit-inheritance-accessible-def}
$\Gamma \sqsubseteq^{\mathrm{CI}} \Delta$ iff there exists $e,k,w$ such that $\Gamma \sqsubseteq^h_{e,k,w} \Delta$.
\end{lemma}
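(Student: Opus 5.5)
Following the pattern of \cref{lem:credit-passing-accessible-def}, I would prove both directions by induction, reusing the cases already handled for the real and credit passing cost models. The only rules that need fresh treatment are the heir-aware (pass), (force) and (inherit) rules of $\sqsubseteq^h_{e,k,w}$, together with the modified [force] rule of $\sqsubseteq^{\mathrm{CI}}$, which installs the heir annotation. Throughout I read the statement as quantifying over an heir $h$ as well. Since top-level derivations in the cost model use $\Downarrow^\emptyset$, for the ($\Rightarrow$) direction I aim to produce $h = \emptyset$.

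For direction ($\Rightarrow$), I induct on $\Gamma \sqsubseteq^{\mathrm{CI}} \Delta$.
\begin{itemize}
\item The cases [refl], [trans] and [alloc] go exactly as in \cref{lem:real-accessible-def}. For [trans], the (let) side condition $|h_1 \cup h_2| \leq 1$ is satisfied because both subderivations carry the empty heir.
\item For [save] on a lazy thunk, I use the (save) rule with heir $\emptyset$.
\item For [force], the premise supplies $\Gamma : F\,v \Downarrow^{\{h\}}_k \Delta : w$ with $k = n$. I convert it by \cref{lem:credit-inheritance-accessible-equiv} into $\Gamma \sqsubseteq^{\{h\}}_{F\,v,k,w} \Delta$ and then apply the (force) rule, obtaining $\sqsubseteq^\emptyset_{\force a,0,w}$ with the heir installed on the memoised thunk.
\end{itemize}
If [force] in $\sqsubseteq^{\mathrm{CI}}$ is stated with $\Downarrow^{\emptyset}$ and no heir, I would instead take $h$ to be the empty heir on the resulting memo cell. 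In that case the premise derivation needs no (pass) at all, and (force) still applies.

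For direction ($\Leftarrow$), I induct on $\Gamma \sqsubseteq^h_{e,k,w} \Delta$ for arbitrary $h$.
\begin{itemize}
\item The cases (refl), (alloc), (unfold), (recall), (step), (trans), (save) and (waste) are handled as before.
\item For (pass), the premise is $\Gamma \sqsubseteq^\emptyset_{\save{m}{h},m,v} \Delta$, so the induction hypothesis gives $\Gamma \sqsubseteq^{\mathrm{CI}} \Delta$ directly.
\item For (inherit), the induction hypothesis gives $\Gamma \sqsubseteq^{\mathrm{CI}} \Delta$. I then need to lift this to $(\Gamma, a \mapsto_h \memo\; w) \sqsubseteq^{\mathrm{CI}} (\Delta, a \mapsto_h \memo\; w)$.
\item For (force), I apply \cref{lem:credit-inheritance-accessible-equiv} to the premise to recover the big-step derivation, use the induction hypothesis to get $\Gamma \sqsubseteq \Delta$, and conclude with [force].
\end{itemize}

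The main obstacle is the frame step in (inherit). I have to show that $\sqsubseteq^{\mathrm{CI}}$ is closed under extending both heaps with the same untouched cell $a \mapsto_h \memo\; w$. I would prove a small auxiliary frame lemma by induction on $\sqsubseteq^{\mathrm{CI}}$: every rule acts only on a distinguished cell, and the premise $\Gamma : F\,v \Downarrow \Delta : w$ of [force] can itself be framed, because big-step evaluation never inspects cells it does not touch and allocation is globally fresh. A secondary subtlety is that (pass) only ever saves on $h$, never removes credits, so it fits into [save] and [refl] chains. Finally, the heir bookkeeping in (trans) needs care: whenever the induction produces a nonempty $h$, it must come from a single (pass), which is exactly the constraint $|h_1 \cup h_2| \leq 1$ already imposed by (let).
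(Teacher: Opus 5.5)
Your proposal is correct and follows the paper's proof. In the forward direction, you reuse the credit-passing argument and translate [force] into the heir-aware (force) rule via \cref{lem:credit-inheritance-accessible-equiv}. This step needs the $k = n$ reading of [force], which you make explicit and the paper leaves implicit. In the backward direction, you handle (pass) directly by the induction hypothesis, (force) via \cref{lem:credit-inheritance-accessible-equiv} and [force], and (inherit) via the induction hypothesis plus a frame step proved by induction on $\Gamma \sqsubseteq^{\mathrm{CI}} \Delta$, exactly as the paper does. Your fallback for an heir-less [force] rule is not needed, since the paper says $\sqsubseteq^{\mathrm{CI}}$ differs from $\sqsubseteq^{\mathrm{C}}$ precisely by the heir annotation in [force].
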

\begin{proof}
Direction ($\Rightarrow$): By \cref{lem:credit-passing-accessible-def}.
Direction ($\Leftarrow$):
\begin{itemize}
  \bitem{(pass)} Use the induction hypothesis on the premise to obtain $\Gamma \sqsubseteq^{\mathrm{CI}} \Delta$.
  \bitem{(force)} Apply \cref{lem:credit-inheritance-accessible-equiv} on the premise, obtaining
    $\Gamma : F\, v \Downarrow^{\{h\}}_k \Delta : w$.
    Use the induction hypothesis to obtain $\Gamma \sqsubseteq^{\mathrm{CI}} \Delta$.
    Then use the [force] rule.
  \bitem{(inherit)} Use the induction hypothesis on the premise to obtain $\Gamma \sqsubseteq^{\mathrm{CI}} \Delta$.
    Then show by induction on $\Gamma \sqsubseteq^{\mathrm{CI}} \Delta$ that
    $\Gamma, a \mapsto_h \memo\; w \sqsubseteq^{\mathrm{CI}} \Delta, a \mapsto_h \memo\; w$.
\end{itemize}
\end{proof}

\begin{lemma}
\label{lem:credit-inheritance-accessible}
The credit-inheritance cost model $((\Downarrow^{\mathrm{CI}}, \Phi^{\mathrm{CI}}, \stripdot^{\mathrm{CI}}), \sqsubseteq^{\mathrm{CI}})$ is accessible.
\end{lemma}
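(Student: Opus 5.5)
The plan follows the pattern already used for \cref{lem:real-accessible}, \cref{lem:bankers-accessible} and \cref{lem:credit-passing-accessible}: we combine \cref{lem:credit-inheritance-accessible-equiv} with \cref{lem:credit-inheritance-accessible-def}. By definition of an accessible cost model, we must show that $\Gamma \sqsubseteq^{\mathrm{CI}} \Delta$ holds iff $\Gamma : e \Downarrow^{\mathrm{CI}}_n \Delta : v$ for some $e,n,v$, where $\Downarrow^{\mathrm{CI}}_n$ abbreviates $\Downarrow^\emptyset_n$.

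For the direction from evaluation to accessibility, we start from $\Gamma : e \Downarrow^{\emptyset}_n \Delta : v$. \Cref{lem:credit-inheritance-accessible-equiv} turns this into $\Gamma \sqsubseteq^{\emptyset}_{e,n,v} \Delta$, and the ($\Leftarrow$) direction of \cref{lem:credit-inheritance-accessible-def} then gives $\Gamma \sqsubseteq^{\mathrm{CI}} \Delta$.

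For the converse, we start from $\Gamma \sqsubseteq^{\mathrm{CI}} \Delta$. The ($\Rightarrow$) direction of \cref{lem:credit-inheritance-accessible-def} yields a derivation $\Gamma \sqsubseteq^h_{e,k,w} \Delta$, and \cref{lem:credit-inheritance-accessible-equiv} converts it into $\Gamma : e \Downarrow^h_k \Delta : w$.

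The one real obstacle is that this $h$ must be $\emptyset$, since only $\Downarrow^\emptyset$ counts as $\Downarrow^{\mathrm{CI}}$. I would handle this by inspecting how the witness is built. It comes from the credit-passing construction in \cref{lem:credit-passing-accessible-def}, with [force] now recording an heir. None of the rules used there introduce an heir: (refl), (alloc), (save) and (trans) do not, and (force) always concludes with $\sqsubseteq^{\emptyset}$. The only rule producing $\{h\}$ is (pass), which the construction never uses at top level.

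If the heir were nevertheless nonempty, we would fix it locally. We can wrap the expression as $\llet x = e \lin x$, but the (let) rule propagates heirs, so this wrapping does not help by itself. Alternatively, we can observe that every [force] step in $\sqsubseteq^{\mathrm{CI}}$ is witnessed by a top-level $\force a$ with empty heir. So I would simply strengthen the ($\Rightarrow$) induction to produce $h = \emptyset$, checking each accessibility rule.
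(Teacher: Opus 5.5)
Your proposal is correct and matches the paper's proof, which simply combines \cref{lem:credit-inheritance-accessible-equiv} with \cref{lem:credit-inheritance-accessible-def}. You additionally check that the witness from the ($\Rightarrow$) direction carries the empty heir, which the paper leaves implicit since its lemma does not quantify $h$. Strengthening that induction to yield $h=\emptyset$ is the right fix, and it is also what guarantees the side condition $|h_1 \cup h_2| \leq 1$ of (let) in the [trans] case.
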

\begin{proof}
By \cref{lem:credit-inheritance-accessible-equiv} and \cref{lem:credit-inheritance-accessible-def}.
\end{proof}

For the debit cost model, we add the following four rules:

\begin{center}
  \AxiomC{$\Gamma \sqsubseteq_{F\, v, k, k', w} \Delta$}
  \AxiomC{$a \notin \mathrm{dom}(\Delta)$}
  \RightLabel{(lazy)}
  \BinaryInfC{$\Gamma \sqsubseteq_{\lazy_F\; v, 0, k', a} (\Delta, a \mapsto_k \memo^{\Delta\setminus\Gamma}_{F\,v}\; w)$}
  \DisplayProof
  \quad
  \AxiomC{$\memo\; w \in \Gamma$}
  \RightLabel{(waste)}
  \UnaryInfC{$\Gamma \sqsubseteq_{\save{m}{a}, 0, m, a} \Gamma$}
  \DisplayProof
\end{center}
\begin{center}
  \AxiomC{$hv = \memo^{\Delta}_{F\,v}\; w$}
  \RightLabel{(save)}
  \UnaryInfC{$(\Gamma, a \mapsto_n hv) \sqsubseteq_{\save{m}{a}, 0, m, a} (\Gamma, a \mapsto_{n - m} hv)$}
  \DisplayProof
\end{center}
\begin{center}
  \AxiomC{$n \leq 0$}
  \RightLabel{(force)}
  \UnaryInfC{$(\Gamma, a \mapsto_n \memo^{\Delta}_{F\,v}\; w) \sqsubseteq_{\force a, 0, 0, w} (\Gamma, a \mapsto \memo\; w)$}
  \DisplayProof
\end{center}

\begin{lemma}
\label{lem:debit-accessible-equiv}
$\Gamma \sqsubseteq^{\mathrm{D}}_{e,k,k',w} \Delta$ iff $\Gamma : e \Downarrow^{\mathrm{D}}_{k,k'} \Delta : w$.
\end{lemma}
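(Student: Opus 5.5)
The plan is to follow the pattern of \cref{lem:real-accessible-equiv}, \cref{lem:bankers-accessible-equiv} and \cref{lem:credit-passing-accessible-equiv}: match each rule of the indexed relation $\sqsubseteq^{\mathrm{D}}_{e,k,k',w}$ one-to-one with a big-step rule of the debit semantics $\Downarrow_{k,k'}$. Both directions go by induction on the given derivation, rule by rule. One side condition is that the inherited rules (refl), (trans), (step), (alloc), (unfold) and (recall) must first be lifted to the pair of costs $(k,k')$. Lift them exactly as the big-step rules were. Rules without a premise get $k' = 0$. (step) propagates $k'$ unchanged from its premise. (trans) adds the $k'$ of both premises, mirroring the extended (let) rule.

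For the direction from $\sqsubseteq$ to $\Downarrow$, the correspondence is as follows.
\begin{itemize}
\item (refl) maps to (value) and (trans) to (let).
\item (step) maps to (case), (split), (app) and (call), depending on the shape of $e$. The one-step increment of (app) and (call) goes into the first component $k$.
\item (alloc) maps to (fold) and (memo). It no longer covers (lazy), which now has its own rule.
\item The four new rules map to their namesakes: (lazy), (waste), (save) and (force).
\end{itemize}
In the (lazy) case, we apply the induction hypothesis to the premise $\Gamma \sqsubseteq_{F\,v,k,k',w} \Delta$ to get $\Gamma : F\,v \Downarrow_{k,k'} \Delta : w$, and the big-step (lazy) rule then gives the conclusion with cost $(0,k')$. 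The other three new rules have no judgement premise, so their side conditions ($n \le 0$, $hv = \memo^{\Delta}_{F\,v}\; w$, $a \mapsto \memo\; w \in \Gamma$) carry over verbatim.

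The converse direction is the same table read backwards, by induction on the big-step derivation. The only recursive premises are in (let), (case), (split), (app), (call) and (lazy), and the induction hypothesis handles each of them.

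The main obstacle I expect is bookkeeping in the (step) rule, whose index is the wildcard $\_$. For the correspondence to be exact, (step) must add the $+1$ for (app) and (call) to $k$ rather than to $k'$, and must leave $k'$ untouched. I would therefore split (step) into the concrete instances it abbreviates, one per heap-independent big-step rule, and check that each preserves the pair $(k,k')$ as the extended semantics prescribes. A second point to check is that the freshness side conditions agree. The debit (lazy) rule requires $a \notin \mathrm{dom}(\Delta)$ for the heap \emph{after} running $F\,v$, not the heap before. The relational (lazy) rule is stated with the same condition, so the cases line up.
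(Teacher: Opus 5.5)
Your proposal is correct and follows the same route as the paper. The paper's proof simply pairs the four new relational rules (lazy), (waste), (save) and (force) with their big-step namesakes, on top of the correspondence already established for the real cost model. Your extra bookkeeping is compatible with this: you lift the inherited rules to cost pairs $(k,k')$ and unfold (step) into its concrete instances so that the $+1$ of (app) and (call) lands in $k$, which only makes explicit what the paper leaves implicit.
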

\begin{proof}
The rules of $\sqsubseteq_{e,k,k',w}$ relate to the big-step semantics as follows:
\begin{itemize}
  \bitem{(lazy)} (lazy)
  \bitem{(waste)} (waste)
  \bitem{(save)} (save)
  \bitem{(force)} (force)
\end{itemize}
\end{proof}

\begin{lemma}
\label{lem:debit-accessible-def}
$\Gamma \sqsubseteq^{\mathrm{D}} \Delta$ iff there exists $e,k,k',w$ such that $\Gamma \sqsubseteq^{\mathrm{D}}_{e,k,k',w} \Delta$.
\end{lemma}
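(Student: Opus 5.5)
We prove the two directions separately, following the pattern of \cref{lem:real-accessible-def} and \cref{lem:credit-passing-accessible-def}. Throughout, $\sqsubseteq^{\mathrm{D}}$ is taken to include the structural rules [refl], [trans] and [alloc] of $\sqsubseteq^{\mathrm{R}}$, together with the [lazy], [save] and [force] rules of the debit semantics.

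\emph{Direction ($\Rightarrow$)} goes by induction on the derivation of $\Gamma \sqsubseteq^{\mathrm{D}} \Delta$. As in \cref{lem:real-accessible-def}, we maintain that the witnessing value $w$ is a nested pair containing every location of $\Delta \setminus \Gamma$; this is what lets later expressions refer to those locations.
\begin{itemize}
\item {[refl]} and {[alloc]} are handled exactly as in the real case.
\item {[trans]}: the two witnesses are combined with a $\llet$ and the pairing construction from \cref{lem:real-accessible-def}. The cost components $k$ and $k'$ are added, as in the (let) rule.
\item {[lazy]}: apply \cref{lem:debit-accessible-equiv} to the premise $\Gamma : F\,v \Downarrow_{k,k'} \Delta : w$ to get $\Gamma \sqsubseteq_{F\,v,k,k',w} \Delta$, then use the (lazy) rule with $e = \lazy_F\; v$.
\item {[save]} relates $(\Gamma, a \mapsto_n hv)$ to $(\Gamma, a \mapsto_m hv)$ with $n \geq m$. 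Use the (save) rule with $e = \save{(n-m)}{a}$, which yields exactly $a \mapsto_{n-(n-m)} hv$. Since $n-m \geq 0$, this is a valid save amount.
\item {[force]}: the side condition $n \leq 0$ is identical in both systems, so the (force) rule applies directly with $e = \force a$.
\end{itemize}

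\emph{Direction ($\Leftarrow$)} goes by induction on the derivation of $\Gamma \sqsubseteq^{\mathrm{D}}_{e,k,k',w} \Delta$. The heap-independent rules are handled as before: (refl), (unfold) and (recall) give [refl]; (trans) gives [trans] from the two induction hypotheses; (step) follows from its premise; (alloc) gives [alloc]. For the new rules:
\begin{itemize}
\item (waste) leaves the heap unchanged, so it gives [refl].
\item (save) decreases the debit from $n$ to $n-m$ with $m \geq 0$, so it is an instance of [save].
\item (force) has the same side condition $n \leq 0$, so it is an instance of [force].
\item (lazy): apply \cref{lem:debit-accessible-equiv} to the premise to recover $\Gamma : F\,v \Downarrow_{k,k'} \Delta : w$, use the induction hypothesis to get $\Gamma \sqsubseteq \Delta$, and conclude by [lazy].
\end{itemize}

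The main obstacle is the [lazy] case of ($\Rightarrow$), because its premise $\Gamma \sqsubseteq \Delta$ and its evaluation $\Gamma : F\,v \Downarrow_{k,k'} \Delta : w$ must fit together in the (lazy) rule. This works because the rule only needs the evaluation, which \cref{lem:debit-accessible-equiv} supplies. The $\Gamma \sqsubseteq \Delta$ premise is never needed in this direction, and in ($\Leftarrow$) it comes from the induction hypothesis.

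A second point of care is the [save] case. The explicit relation requires $n \geq m$ but allows debits to become negative, and the operational (save) rule has no lower bound either, so any $n - m \geq 0$ can be spent. The two rules match exactly only because neither imposes a lower bound on debits; if one did, this step would need an extra argument.
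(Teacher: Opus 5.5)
Your proposal is correct and follows the paper's proof. It uses the same induction in each direction, with \cref{lem:debit-accessible-equiv} bridging the premises of [lazy] and (lazy), and with (waste), (save), (force) corresponding to [refl], [save], [force]. You are slightly more careful than the paper in two places: you make the save amount $n-m$ explicit, and you use the induction hypothesis for the premise $\Gamma \sqsubseteq \Delta$ of [lazy] in the ($\Leftarrow$) direction, where it is actually needed, rather than in ($\Rightarrow$). One harmless inaccuracy, inherited from \cref{lem:real-accessible-def}: your [lazy] case does not preserve the nested-pair invariant, since $\lazy_F\; v$ returns only $a$ and not the locations of $\Delta\setminus\Gamma$. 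Nothing relies on that invariant, though, because in [trans] the two witnesses are closed expressions and compose directly as $\llet x = e_1 \lin e_2$.
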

\begin{proof}
Direction ($\Rightarrow$): By induction on $\Gamma \sqsubseteq \Delta$.
\begin{itemize}
  \bitem{[lazy]} Apply \cref{lem:debit-accessible-equiv} on the premise, obtaining
    $\Gamma \sqsubseteq_{F\, v, k, k', w} \Delta$.
    Use the induction hypothesis to obtain $\Gamma \sqsubseteq \Delta$.
    Then use the (lazy) rule.
  \bitem{[save]} Immediately by the (save) rule.
  \bitem{[force]} Immediately by the (force) rule.
\end{itemize}

Direction ($\Leftarrow$): By induction on $\Gamma \sqsubseteq^{\mathrm{D}}_{e,k,k',w} \Delta$.
\begin{itemize}
  \bitem{(lazy)} Apply \cref{lem:debit-accessible-equiv} on the premise, obtaining
    $\Gamma : F\, v \Downarrow^{\mathrm{D}}_{k,k'} \Delta : w$.
    Then use the [lazy] rule.
  \bitem{(waste)} Immediately by the [refl] rule.
  \bitem{(save)} Immediately by the [save] rule.
  \bitem{(force)} Immediately by the [force] rule.
\end{itemize}
\end{proof}

\begin{lemma}
\label{lem:debit-accessible}
The debit cost model $((\Downarrow^{\mathrm{D}}, \Phi^{\mathrm{D}}, \stripdot^{\mathrm{D}}), \sqsubseteq^{\mathrm{D}})$ is accessible.
\end{lemma}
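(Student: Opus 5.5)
The plan mirrors the accessibility proofs for the other cost models: combine \cref{lem:debit-accessible-equiv} with \cref{lem:debit-accessible-def}. Accessibility requires $\Gamma \sqsubseteq^{\mathrm{D}} \Delta$ iff $\Gamma : e \Downarrow^{\mathrm{D}}_{k,k'} \Delta : w$ for some $e,k,k',w$.

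\textbf{Step 1 (the two lemmas).} \Cref{lem:debit-accessible-def} turns the explicit relation into the indexed relation $\Gamma \sqsubseteq^{\mathrm{D}}_{e,k,k',w} \Delta$. \Cref{lem:debit-accessible-equiv} identifies the indexed relation with the big-step judgement. Chaining the two biconditionals gives the required characterisation. The total cost is $k+k'$, but since we only need some $n$ with $\Gamma : e \Downarrow^{\mathrm{D}}_n \Delta : w$, the split into $k$ and $k'$ is irrelevant.

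\textbf{Step 2 (the real-model rules).} The debit model also inherits the structural rules [refl], [trans], [alloc] and the rules for (fold), (unfold), (memo) and (recall) from the real model. So I must check that the inductions of \cref{lem:debit-accessible-def} also handle these cases, extended to carry the pair $k,k'$. They go through exactly as in \cref{lem:real-accessible-def}:
\begin{itemize}
\item \emph{[trans].} Sequence the two witnessing expressions with $\llet$, pairing their results so that all newly allocated locations stay reachable.
\item \emph{(unfold), (recall), (waste).} These leave the heap unchanged, so [refl] suffices.
\end{itemize}
The real-model [force] rule is absent from $\sqsubseteq^{\mathrm{D}}$. This causes no gap, because the debit semantics has no lazy thunks: (lazy) evaluates the body immediately.

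\textbf{Main obstacle: the [lazy] case of the ($\Rightarrow$) direction.} Its premise contains both an evaluation $\Gamma : F\,v \Downarrow_{k,k'} \Delta : w$ and $\Gamma \sqsubseteq \Delta$. The witness $\lazy_F\; v$ has to reproduce exactly the allocation annotation $\Delta\setminus\Gamma$ and the debit $k$. I would take the evaluation premise directly, apply the (lazy) big-step rule, and check that the resulting heap matches syntactically. This check relies on the deterministic, globally fresh allocation assumption, which gives $a \notin \mathrm{dom}(\Delta)$.
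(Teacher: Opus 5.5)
Your proposal is correct and follows the paper's proof, which consists exactly of chaining \cref{lem:debit-accessible-equiv} and \cref{lem:debit-accessible-def}; your Step 2 and your [lazy] discussion re-check the inductions behind those two lemmas in the same way the paper does. One small remark: in the [lazy] case you do not need the global freshness assumption, since $a \notin \mathrm{dom}(\Delta)$ is already a premise of [lazy] and is exactly the side condition of the (lazy) big-step rule.
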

\begin{proof}
By \cref{lem:debit-accessible-equiv} and \cref{lem:debit-accessible-def}.
\end{proof}

For the debit inheritance cost model, we change the (lazy) rule:

\begin{center}
  \AxiomC{$\Gamma \sqsubseteq_{F\, v, (k_1 + k_2), k', w} \Delta$}
  \AxiomC{$a \notin \mathrm{dom}(\Delta)$}
  \RightLabel{(lazy)}
  \BinaryInfC{$\Gamma \sqsubseteq_{\lazy_F\; v, k_1, k', a} (\Delta, a \mapsto_{k_2} \memo^{\Delta\setminus\Gamma}_{F\,v}\; w)$}
  \DisplayProof
\end{center}

\begin{lemma}
\label{lem:debit-inheritance-accessible}
The debit inheritance cost model $((\Downarrow^{\mathrm{DI}}, \Phi^{\mathrm{D}}, \stripdot^{\mathrm{D}}), \sqsubseteq^{\mathrm{DI}})$ is accessible.
\end{lemma}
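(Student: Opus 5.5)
The proof follows the pattern used for the debit cost model (\cref{lem:debit-accessible}). By definition, accessibility requires showing that $\Gamma \sqsubseteq^{\mathrm{DI}} \Delta$ holds iff $\Gamma : e \Downarrow^{\mathrm{DI}}_{k,k'} \Delta : w$ for some $e,k,k',w$. I split this into two auxiliary lemmas, analogous to \cref{lem:debit-accessible-equiv} and \cref{lem:debit-accessible-def}, and combine them.

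\textbf{Step 1: the indexed relation coincides with evaluation.} I show that $\Gamma \sqsubseteq^{\mathrm{DI}}_{e,k,k',w} \Delta$ iff $\Gamma : e \Downarrow^{\mathrm{DI}}_{k,k'} \Delta : w$. The indexed relation is the debit one with the (lazy) rule replaced by the inheritance version, which splits $k_1+k_2$. Both directions go by induction on the respective derivation, with a rule-by-rule correspondence:
\begin{itemize}
  \item (refl), (trans), (step), (alloc), (unfold) and (recall) correspond to their real counterparts, with the extra index $k'$ propagated exactly as in the big-step rules (summed in (let)/(trans));
  \item (waste), (save) and (force) correspond to their debit counterparts;
  \item the new (lazy) corresponds to the new (lazy), with identical premises and the same split of the unshared cost into $k_1$ and $k_2$.
\end{itemize}

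\textbf{Step 2: the explicit rules coincide with existence of an evaluation.} I show that $\Gamma \sqsubseteq^{\mathrm{DI}} \Delta$ iff there exist $e,k,k',w$ with $\Gamma \sqsubseteq^{\mathrm{DI}}_{e,k,k',w} \Delta$.

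For ($\Rightarrow$), I induct on $\Gamma \sqsubseteq^{\mathrm{DI}} \Delta$. The cases [refl], [trans], [alloc], [save] and [force] are handled as in \cref{lem:real-accessible-def} and \cref{lem:debit-accessible-def}. For [trans], I again package all new locations into a nested pair so that the composite expression exists. For [lazy], the premise is $\Gamma : F\,v \Downarrow_{(k_1+k_2),k'} \Delta : w$, and the conclusion places debit $k_2$. I apply Step 1 to the premise and then the indexed (lazy) rule with the same split. One consequence is that the indexed conclusion carries cost $k_1$ rather than $0$, but this is harmless because the existential quantifies over $k$.

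For ($\Leftarrow$), I induct on the indexed derivation. The cases (waste), (unfold) and (recall) go to [refl], (save) goes to [save], and (force) goes to [force]. For (lazy), I turn the premise into an evaluation via Step 1 and apply [lazy] with the same $k_2$.

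\textbf{Main obstacle.} The only delicate point is bookkeeping: the [lazy] rule of $\sqsubseteq^{\mathrm{DI}}$ must use exactly the same split $k_1,k_2$ as the operational (lazy) rule. The prose says the new debit is ``$k_1$'', whereas the operational rule places $k_2$. I will state the [lazy] rule as $\Gamma \sqsubseteq (\Delta, a \mapsto_{k_2} \memo^{\Delta\setminus\Gamma}_{F\,v}\; w)$ whenever $\Gamma : F\,v \Downarrow_{(k_1+k_2),k'} \Delta : w$, matching the operational rule, so that the debit annotations agree syntactically. With that in place, accessibility follows by combining Steps 1 and 2, exactly as in \cref{lem:debit-accessible}.
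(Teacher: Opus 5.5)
Your proposal is correct and follows the paper's route: the paper proves this lemma by reusing the debit proof, i.e.\ the analogues of \cref{lem:debit-accessible-equiv} and \cref{lem:debit-accessible-def} with only the (lazy) rule changed, which is what your Steps 1 and 2 do. One small addition: in the ($\Leftarrow$) case for (lazy), you also need the induction hypothesis on the premise to obtain the $\Gamma \sqsubseteq \Delta$ premise of [lazy]. Your ``main obstacle'' is harmless but not a real issue. The split $k_1+k_2$ is existentially quantified, so placing $k_1$ or $k_2$ as the debit gives the same relation $\sqsubseteq^{\mathrm{DI}}$: either way, any debit between $0$ and the unshared cost is reachable.
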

\begin{proof}
As in \cref{lem:debit-accessible}.
\end{proof}

\subsection{Persistence}

In the last section, we developed the $\Gamma \sqsubseteq_{e,k,w} \Delta$ relation
that holds if and only if there is a derivation $\Gamma : e \Downarrow_k \Delta : w$ of the big-step semantics,
but abstracts from the concrete expression $e$. Since the two notions are equivalent,
we prove persistence using this relation:

\begin{corollary}
\label{lem:accessibility-sqsubseteq}
An accessible cost model $((\Downarrow, \Phi, \stripdot), \sqsubseteq)$ is persistent iff
$\Gamma \sqsubseteq \Delta$ and $\Gamma \sqsubseteq_{e,n,w} \Gamma'$ implies $\Delta \sqsubseteq_{e,k,w} \Delta'$
for $k \leq n$, $\Gamma' \sqsubseteq \Delta'$.
\begin{center}
\begin{tikzcd}[row sep=large, column sep=5em, every label/.append style={scale=1.3}]
\Gamma \arrow[d, "\sqsubseteq"'] \arrow[r, "\sqsubseteq_{e,n,w}"] & \Gamma' \arrow[d, dashed, "\sqsubseteq"] \\
\Delta \arrow[r, dashed, "\sqsubseteq_{e,k,w}"] & \Delta'
\end{tikzcd}
\qquad
$(k \leq n)$
\end{center}
\end{corollary}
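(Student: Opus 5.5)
This corollary is a direct translation of the definition of persistence, using the equivalence between the big-step judgement and the $\sqsubseteq_{e,k,w}$ relation. For each cost model we have already shown that $\Gamma \sqsubseteq_{e,k,w} \Delta$ holds iff $\Gamma : e \Downarrow_k \Delta : w$; see \cref{lem:real-accessible-equiv}, \cref{lem:bankers-accessible-equiv}, \cref{lem:credit-passing-accessible-equiv}, \cref{lem:credit-inheritance-accessible-equiv} and \cref{lem:debit-accessible-equiv}. I would take this equivalence as the standing hypothesis that accompanies an accessible cost model in this appendix. That is, the statement is understood for those models where the $\sqsubseteq_{e,k,w}$ relation has been set up to coincide with $\Downarrow$.

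For the forward direction, assume the model is persistent, $\Gamma \sqsubseteq \Delta$ and $\Gamma \sqsubseteq_{e,n,w} \Gamma'$. The equivalence lemma gives $\Gamma : e \Downarrow_n \Gamma' : w$. Persistence then yields $\Delta', k$ with $\Delta : e \Downarrow_k \Delta' : w$, $k \leq n$ and $\Gamma' \sqsubseteq \Delta'$. Applying the equivalence lemma in the other direction turns the evaluation into $\Delta \sqsubseteq_{e,k,w} \Delta'$, which is exactly the diagram.

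The backward direction is symmetric. Given $\Gamma : e \Downarrow_n \Gamma' : v$ and $\Gamma \sqsubseteq \Delta$, convert the evaluation to $\Gamma \sqsubseteq_{e,n,v} \Gamma'$ and apply the hypothesis. This gives $\Delta \sqsubseteq_{e,k,v} \Delta'$ with $k \leq n$ and $\Gamma' \sqsubseteq \Delta'$, and converting back gives $\Delta : e \Downarrow_k \Delta' : v$. The same argument with $k = n$ shows that the uniform variant is characterised in the same way.

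I expect no real obstacle, since both directions are bookkeeping. The only point needing care is the treatment of the (step) rule and the wildcard index $\_$. When I invoke the equivalence lemma, the index $e$ must denote the concrete expression. A derivation ending in (step) with a wildcard is read as a family of relations, one for each heap-independent expression such as (case), (split), (app) or (call). The equivalence lemmas already account for this, so I would simply appeal to them. For the debit models, I would note that the indices $k,k'$ play the role of the single cost $k+k'$, with the same argument applied verbatim.
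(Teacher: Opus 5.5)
Your proposal is correct and matches the paper's argument: the paper only notes that $\Gamma \sqsubseteq_{e,k,w} \Delta$ holds exactly when $\Gamma : e \Downarrow_k \Delta : w$ (that is how the relation is introduced, and the per-model equivalence lemmas confirm it), so persistence can be stated via this relation, and your two directions spell out that transport explicitly. Your remarks on the wildcard index, the uniform variant, and reading the debit models' two indices as the summed cost $k+k'$ are harmless bookkeeping on top of that.
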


\begin{lemma}
\label{lem:real-weakening}
If $\Gamma \sqsubseteq^{\mathrm{R}}_{e,k,v} \Delta$ and $b \notin \mathrm{dom}(\Delta)$,
then $\Gamma, b \mapsto hv \sqsubseteq^{\mathrm{R}}_{e,k,v} \Delta, b \mapsto hv$.
\end{lemma}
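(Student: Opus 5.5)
We prove the weakening property by induction on the derivation of $\Gamma \sqsubseteq^{\mathrm{R}}_{e,k,v} \Delta$, using the rules (refl), (trans), (step), (alloc), (unfold), (force) and (recall) given above for the real cost model. The key invariant is the global freshness assumption on allocated pointers. Since $b \notin \mathrm{dom}(\Delta)$ and every rule only extends the heap or replaces a thunk at an existing location, we have $\mathrm{dom}(\Gamma) \subseteq \mathrm{dom}(\Delta)$. This is a small auxiliary fact, proven by the same induction. Hence $b$ is also fresh for every intermediate heap occurring in the derivation, which is exactly what the inductive hypothesis needs.

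The axiom-like cases are immediate. For (refl), the heap is unchanged and we reapply (refl) to $\Gamma, b \mapsto hv$. For (unfold) and (recall), the premise $a \mapsto \fold v \in \Gamma$, respectively $a \mapsto \memo\,w \in \Gamma$, remains true in the extended heap, since $a \neq b$ (because $a \in \mathrm{dom}(\Gamma) \subseteq \mathrm{dom}(\Delta)$). For (alloc), the conclusion is $\Gamma \sqsubseteq \Gamma, a \mapsto hv'$ with $a \notin \mathrm{dom}(\Gamma)$, and $b \neq a$ because $a \in \mathrm{dom}(\Delta)$. So $a \notin \mathrm{dom}(\Gamma, b \mapsto hv)$, and (alloc) gives $\Gamma, b\mapsto hv \sqsubseteq \Gamma, b \mapsto hv, a \mapsto hv'$. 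This heap equals $\Gamma, a\mapsto hv', b \mapsto hv$, since we treat heaps as finite maps, so the order of bindings is irrelevant. The (step) case follows directly from the inductive hypothesis.

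For (trans), we have $\Gamma \sqsubseteq_{e_1,k,v} \Theta_1$ and $\Theta_1 \sqsubseteq_{e_2[v/x],l,w} \Delta$. By the domain-monotonicity fact, $b \notin \mathrm{dom}(\Theta_1)$, so the inductive hypothesis applies to both premises, and we recombine them with (trans). For (force), the premise is $\Gamma_0 \sqsubseteq_{F\,v,k,w} \Delta_0$ with $\Gamma = \Gamma_0, a\mapsto \lazy_F\,v$ and $\Delta = \Delta_0, a \mapsto \memo\,w$. Since $b \notin \mathrm{dom}(\Delta_0)$, the inductive hypothesis yields $\Gamma_0, b\mapsto hv \sqsubseteq_{F\,v,k,w} \Delta_0, b \mapsto hv$. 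Reapplying (force) and reordering bindings then gives the claim.

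The only point requiring care, and the one I expect to be the main obstacle, is the (trans) case. There we must know that $b$ is fresh for the intermediate heap, not just for the final one. The domain-monotonicity lemma, $\Gamma \sqsubseteq_{e,k,v}\Delta \Rightarrow \mathrm{dom}(\Gamma)\subseteq\mathrm{dom}(\Delta)$, settles this. Its proof is a straightforward induction in which the (force) case uses that $a$ is retained in the output heap.
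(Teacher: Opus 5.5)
Your proof is correct and follows the paper's own argument: induction on the derivation of $\Gamma \sqsubseteq^{\mathrm{R}}_{e,k,v} \Delta$ with the same case analysis, using $b \notin \mathrm{dom}(\Delta)$ to get $a \neq b$ in the (alloc) and (force) cases. Your auxiliary fact $\mathrm{dom}(\Gamma)\subseteq\mathrm{dom}(\Delta)$ only makes explicit the freshness of $b$ for the intermediate heap in the (trans) case, which the paper assumes without comment when it applies the induction hypothesis to both premises.
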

\begin{proof}
By induction over the derivation of $\Gamma \sqsubseteq^{\mathrm{R}}_{e,k,v} \Delta$.
\begin{itemize}
\bitem{(refl)} Trivial
\bitem{(trans)} Use the induction hypothesis on both premises and apply (trans).
\bitem{(step)} Use the induction hypothesis on the premise and then apply (step).
\bitem{(alloc)} Trivial, as $a \neq b$ since $b \notin \mathrm{dom}(\Delta)$.
\bitem{(unfold), (recall)} Trivial
\bitem{(force)}
  Since $b \notin \mathrm{dom}(\Delta)$, $a \neq b$.
  Use the induction hypothesis on the premise and then apply (force).
\end{itemize}
\end{proof}

\paragraph*{\cref{lem:real-persistent}}
\begin{proof}
By induction over $\Gamma \sqsubseteq^{\mathrm{R}}_{e,n,w} \Gamma'$.
\begin{itemize}
\bitem{(refl)} Using (refl).
\bitem{(trans)}
  Assume that $\Gamma \sqsubseteq_{e_1,k,v} \Gamma''$ and $\Gamma'' \sqsubseteq_{e_2[v/x],l,w} \Gamma'$.
  Use the induction hypothesis on the first premise, yielding
  $\Delta \sqsubseteq_{e_1,k',v} \Delta''$
  with $k' \leq k$, $\Gamma'' \sqsubseteq \Delta''$.
  Then apply the induction hypothesis on the second premise, yielding
  $\Delta'' \sqsubseteq_{e_2[v/x],l',w} \Delta'$
  with $l' \leq l$, $\Gamma' \sqsubseteq \Delta'$.
  Then use (trans) where $k' + l' \leq k + l$ and $\Gamma' \sqsubseteq \Delta'$.
\begin{center}
\begin{tikzcd}[row sep=large, column sep=5em, every label/.append style={scale=1.3}]
\Gamma \arrow[rr, "\sqsubseteq_{\llet x = e_1 \lin e_2, k+l,w}"] & & \Gamma' \\[-2em]
\Gamma \arrow[d, "\sqsubseteq"'] \arrow[r, "\sqsubseteq_{e_1,k,v}"] & \Gamma'' \arrow[d, dashed, "\sqsubseteq"] \arrow[r, "\sqsubseteq_{e_2[v/x],l,w}"] & \Gamma' \arrow[d, dashed, "\sqsubseteq"] \\
\Delta \arrow[r, dashed, "\sqsubseteq_{e_1,k',v}"] & \Delta'' \arrow[r, dashed, "\sqsubseteq_{e_2[v/x],l',w}"] & \Delta' \\[-2em]
\Delta \arrow[rr, dashed, "\sqsubseteq_{ \llet x = e_1 \lin e_2, k'+l',w}"] & & \Delta'
\end{tikzcd}
\qquad
$(k' \leq k, l' \leq l)$
\end{center}

\bitem{(step)} Use the induction hypothesis on the premise and then apply (step).
\bitem{(alloc)} We assume w.l.o.g. that allocations never clash.
  Thus $a \not\in \mathrm{dom}(\Delta)$ and we use (alloc).
  Further, $\Gamma \sqsubseteq \Delta$ implies
  $\Gamma, a \mapsto hv \sqsubseteq \Delta, a \mapsto hv$
  by \cref{lem:real-weakening} and \cref{lem:real-accessible-def}.
\item {other cases}: By induction over $\Gamma \sqsubseteq \Delta$.
\begin{itemize}
\bitem{\_, [refl]} By assumption.
\bitem{\_, [trans]}
  Use the induction hypothesis on both premises, yielding
  $\Delta \sqsubseteq_{e,k,w} \Delta'$
  with $k \leq n$, $\Gamma' \sqsubseteq \Delta'$
  and $\Theta \sqsubseteq_{e,l,w} \Theta'$
  with $l \leq k$, $\Delta' \sqsubseteq \Theta'$.
  Then $l \leq n$ and $\Gamma' \sqsubseteq \Theta'$.
\begin{center}
\begin{tikzcd}[row sep=large, column sep=5em, every label/.append style={scale=1.3}]
\Gamma \arrow[d, "\sqsubseteq"'] \arrow[dd, bend right=45, "\sqsubseteq"'] \arrow[r, "\sqsubseteq_{e,n,w}"] & \Gamma' \arrow[d, dashed, "\sqsubseteq"] \arrow[dd, dashed, bend left=45, "\sqsubseteq"] \\
\Delta \arrow[d, "\sqsubseteq"'] \arrow[r, dashed, "\sqsubseteq_{e,k,w}"] & \Delta' \arrow[d, dashed, "\sqsubseteq"] \\
\Theta \arrow[r, dashed, "\sqsubseteq_{e,l,w}"] & \Theta'
\end{tikzcd}
\qquad
$(k \leq n, l \leq k)$
\end{center}
\bitem{\_, [alloc]}
  We assume w.l.o.g. that allocations never clash.
  Thus $a \not\in \mathrm{dom}(\Gamma')$ and we use [alloc].
  By \cref{lem:real-weakening}, $\Gamma \sqsubseteq_{e,k,v} \Gamma'$ implies
  $\Gamma, a \mapsto hv \sqsubseteq_{e,k,v} \Gamma', a \mapsto hv$.
  We have $\Gamma' \sqsubseteq \Gamma', a \mapsto hv$ by [alloc].
\bitem{(unfold), [force]} Let $\unfold a$ and $\force b$.
  We have that $a \neq b$, since $a \mapsto \fold v$
  and $b \mapsto \lazy_F\;v$.
  By the induction hypothesis,
  $\Delta \sqsubseteq_{\unfold a, 0, v} \Delta$.
  We apply \cref{lem:real-weakening} to obtain
  $\Delta, b \mapsto \memo\;w \sqsubseteq_{\unfold a, 0, v} \Delta, b \mapsto \memo\;w$.
\bitem{(recall), [force]} Let $\force a$ and $\force b$.
  We have that $a \neq b$, since $a \mapsto \memo\; w'$ and $b \mapsto \lazy_F\;v$.
  By the induction hypothesis,
  $\Delta \sqsubseteq_{\force a, 0, w'} \Delta'$.
  We apply \cref{lem:real-weakening} to obtain
  $\Delta, b \mapsto \memo\;w \sqsubseteq_{\force a, 0, w'} \Delta', b \mapsto \memo\;w$.
\bitem{(force), [force]} Let $\force a$ and $\force b$.
  If $a \neq b$, we proceed as in the (recall), [force] case.
  If $a = b$, then apply the (recall) rule to obtain
  $\Delta, b \mapsto \memo\; w \sqsubseteq_{\force a, 0, w} \Delta, b \mapsto \memo\; w$.
  We have $0 \leq k$ and $\Delta, b \mapsto \memo\; w \sqsubseteq \Delta, b \mapsto \memo\; w$ by [refl].
\begin{center}
\begin{tikzcd}[row sep=large, column sep=5em, every label/.append style={scale=1.3}]
\Gamma \arrow[d, "{\text{[force]}}"'] \arrow[r, "{\text{(force)}}"] & \Gamma' \arrow[d, dashed, "{\text{[refl]}}"] \\
\Delta \arrow[r, dashed, "{\text{(recall)}}"] & \Delta'
\end{tikzcd}
\end{center}
\end{itemize}
\end{itemize}
\end{proof}

\paragraph*{\cref{lem:credit-passing-persistent}}
\begin{proof}
We extend the proof of \cref{lem:real-persistent} for the new rules.
By induction over $\Gamma \sqsubseteq \Delta$.
\begin{itemize}
\bitem{(unfold), [save]} Let $\unfold a$ and $\save{m}{b}$.
  We have that $a \neq b$, since $a \mapsto \fold v$
  and $b \mapsto_n \lazy_F\; v$.
  Then $\Delta \sqsubseteq_{\unfold a, 0, v} \Delta$.
\bitem{(recall), [save]} Let $\force a$ and $\save{m}{b}$.
  We have that $a \neq b$, since $a \mapsto \memo\; w$
  and $b \mapsto_n \lazy_F\; v$.
  Then $\Delta \sqsubseteq_{\force a, 0, w} \Delta$.
\bitem{(save), [save]} Let $\save{m}{a}$ and $\save{m'}{b}$.
  If $a \neq b$, we proceed as in the previous cases.
  If $a = b$, then apply the [save] rule to obtain
  $\Delta, a \mapsto_{n + m'} \lazy_F\; v \sqsubseteq_{\save{m}{a}, m, a} \Delta, a \mapsto_{n + m' + m} \lazy_F\; v$.
\begin{center}
\begin{tikzcd}[row sep=large, column sep=5em, every label/.append style={scale=1.3}]
\Gamma \arrow[d, "{\text{[save]}}"'] \arrow[r, "{\text{(save)}}"] & \Gamma' \arrow[d, dashed, "{\text{[save]}}"] \\
\Delta \arrow[r, dashed, "{\text{(save)}}"] & \Delta'
\end{tikzcd}
\end{center}
\bitem{(waste), [save]} Let $\save{m}{a}$ and $\save{m'}{b}$.
  We have that $a \neq b$, since $a \mapsto_n \lazy_F\; v$
  and $b \mapsto \memo\; w$.
  Then $\Delta \sqsubseteq_{\save{m}{a}, m, a} \Delta$.
\bitem{(force), [save]} Let $\force a$ and $\save{m}{b}$.
  If $a \neq b$, we proceed as in the previous cases.
  If $a = b$, then apply the [force] rule to obtain
  $\Gamma, a \mapsto_{n + m} \lazy_F\; v \sqsubseteq_{\force a, 0, a} \Delta, a \mapsto \memo\; w$
  since $k \leq n$ implies $k \leq n + m$.
\begin{center}
\begin{tikzcd}[row sep=large, column sep=5em, every label/.append style={scale=1.3}]
\Gamma \arrow[d, "{\text{[save]}}"'] \arrow[r, "{\text{(force)}}"] & \Gamma' \arrow[d, dashed, "{\text{[refl]}}"] \\
\Delta \arrow[r, dashed, "{\text{(force)}}"] & \Delta'
\end{tikzcd}
\end{center}

\bitem{(save), [force]} Let $\save{m}{a}$ and $\force b$.
  If $a \neq b$, we proceed as in the previous cases.
  If $a = b$, then apply the (waste) rule to obtain
  $\Delta, a \mapsto \memo\; w \sqsubseteq_{\save{m}{a}, m, a} \Delta, a \mapsto \memo\; w$.
\begin{center}
\begin{tikzcd}[row sep=large, column sep=5em, every label/.append style={scale=1.3}]
\Gamma \arrow[d, "{\text{[force]}}"'] \arrow[r, "{\text{(save)}}"] & \Gamma' \arrow[d, dashed, "{\text{[force]}}"] \\
\Delta \arrow[r, dashed, "{\text{(waste)}}"] & \Delta'
\end{tikzcd}
\end{center}
\bitem{(waste), [force]} Let $\save{m}{a}$ and $\force b$.
  We have that $a \neq b$, since $a \mapsto \memo\; w$
  and $b \mapsto_n \lazy_F\; v$.
  Then $\Delta \sqsubseteq_{\force b, 0, w'} \Delta'$.
\bitem{((unfold), [force]), ((recall), [force]), ((force), [force])} As in \cref{lem:real-persistent}.
\end{itemize}
\end{proof}

In the credit inheritance semantics, the number of time steps is not a deterministic function of the initial heap and expression.
Instead, if there is an heir, we can choose the amount of steps non-deterministically:
$\Downarrow^{\{h\}} : (\mathrm{Heap} \times \mathrm{Expr} \times \mathbb{N}) \rightharpoonup \mathrm{Heap} \times \mathrm{Value}$.
However, we can show that if we use more steps than necessary,
this only results in an increase in the credit assigned to the heir.

\begin{lemma}
\label{lem:credit-inheritance-pass}
Let $\Gamma : e \Downarrow^{\{h\}}_n \Delta : w$
and $\Gamma : e \Downarrow^{\{h\}}_m \Delta' : w'$ for $n \leq m$.
Then $w = w'$ and $\Delta \sqsubseteq_{\save{m - n}{h},m - n,h} \Delta'$.
\end{lemma}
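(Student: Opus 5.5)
We argue by simultaneous induction on the two derivations $\Gamma : e \Downarrow^{\{h\}}_n \Delta : w$ and $\Gamma : e \Downarrow^{\{h\}}_m \Delta' : w'$. It is convenient to phrase this with the heir-annotated relation $\sqsubseteq^{\{h\}}_{e,k,w}$, which \cref{lem:credit-inheritance-accessible-equiv} shows equivalent to the big-step judgement. Since both derivations start from the same expression and heap, the syntax of $e$ fixes the last rule in both, so we can proceed by cases on $e$.

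The only genuine source of divergence is the (pass) rule. For $e = \pass{h}$, the two derivations end in $\Gamma : \save{n}{h} \Downarrow^\emptyset_n \Delta : h$ and $\Gamma : \save{m}{h} \Downarrow^\emptyset_m \Delta' : h$, so $w = w' = h$. We then need an auxiliary claim: saving $n$ and then $m-n$ credits on $h$ gives the same heap as saving $m$ at once. More precisely, $\save{m}{h}$ factors as $\save{n}{h}$ followed by $\save{m-n}{h}$, which yields $\Delta \sqsubseteq_{\save{m-n}{h},m-n,h} \Delta'$.

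We prove this claim by induction on the inheritance chain from $h$, mirroring the structure of \cref{lem:credit-inheritance-save-sound}:
\begin{itemize}
\item In the (save) case, $h$ is lazy with $k$ credits, so $\Delta$ has $k+n$ credits on $h$ and $\Delta'$ has $k+m$. Applying (save) with $m-n$ to $\Delta$ gives exactly $\Delta'$.
\item In the (waste) case, all heaps coincide, and (waste) applies again.
\item In the (inherit) case, we apply the induction hypothesis to the heir's heir and wrap the result with (inherit).
\end{itemize}
One subtlety is that save derivations are deterministic given the heap, because the chain of heirs is determined by $\Gamma$. This is what lets us compare the two derivations at all.

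For the structural rules that carry the heir unchanged from a single premise, namely (case), (split), (app) and (call), the result follows directly from the induction hypothesis.

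The main obstacle is the (let) rule. The heir set $h_1 \cup h_2 = \{h\}$ may come from $e_1$, from $e_2$, or from both, and the step split $n = n_1 + n_2$ versus $m = m_1 + m_2$ need not satisfy $n_i \le m_i$ componentwise. My plan is to first establish that any subderivation with empty heir is deterministic in its cost and resulting heap. This needs its own lemma: the (force) rule inside it pins the cost to the thunk's credits via $k = n$, and by the induction hypothesis the forced thunk's computation is determined up to what it passes to its own heir. So if only one of $e_1$, $e_2$ carries the heir, the other part agrees exactly, and we apply the induction hypothesis to the part that carries it.

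If both carry $h$, there are two sub-cases:
\begin{itemize}
\item If $n_1 \le m_1$, we apply the induction hypothesis to the first component. We then need to commute the extra $\save{m_1-n_1}{h}$ past the evaluation of $e_2$. For this I would use persistence: \cref{lem:credit-inheritance-persistent} in its uniform form transports the $e_2$ evaluation along $\sqsubseteq$ with equal cost and equal values. Combining this with the induction hypothesis on $e_2$ yields a total extra save of $m-n$, since consecutive saves on $h$ compose additively, as in the auxiliary claim.
\item If $n_1 > m_1$, we swap the roles of the two derivations for $e_1$ and argue symmetrically; the saves still total $m-n$ because $n \le m$ overall.
\end{itemize}
I expect this bookkeeping, and in particular the commutation and additivity of saves along the heir chain, to be the bulk of the work.

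Finally, a nested (force) with empty heir inside $e$ introduces no new divergence. The one exception is that the nested thunk's own passed amount is fixed by its credit count, and by determinism this is the same in both derivations.
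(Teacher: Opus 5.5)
You handle (pass) and the composition of saves along the heir chain correctly. The (let) case, which you rightly single out, does not close.

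\textbf{Where (let) fails.} When the heir comes from $e_1$, it is false that ``the other part agrees exactly''. After $e_1$ the two runs sit in heaps that differ by $m_1-n_1$ credits on $h$, so $e_2$ is evaluated from \emph{different} heaps. Your determinism lemma for empty-heir runs only compares runs from a single heap, so it says nothing here.

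This is also exactly the situation your closing remark on nested (force) dismisses. Take $\llet x = \pass{h} \lin \force h$. The second component forces $h$ with different credit counts in the two runs. By the side condition $k = n$ of (force), the two inner runs then have different costs and pass different amounts to $h$'s own heir $h'$. The final heaps are related by a save on $h$ only through (inherit) along $h \mapsto_{h'} \memo\; w$.

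Your fallback, transporting $e_2$ with \cref{lem:credit-inheritance-persistent}, has three problems:
\begin{itemize}
\item It is circular: the paper proves that lemma by invoking the very statement you are proving, in its (save)/[force], (force)/[save] and (force)/[force] cases.
\item It is too weak: it only yields the coarse accessibility relation $\Gamma' \sqsubseteq \Delta'$, not that the heaps differ by exactly a save of a given amount on $h$.
\item It is stated for $\Downarrow^{\emptyset}$, so it does not even apply to components that carry the heir.
\end{itemize}

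\textbf{The missing idea.} The paper strengthens the induction hypothesis so that the starting heaps may already differ by a pending save. That is, assume $\Gamma \sqsubseteq_{\save{d}{h},d,h} \Gamma'$ and conclude $\Delta \sqsubseteq_{\save{d+m-n}{h},d+m-n,h} \Delta'$. With this, the cases become:
\begin{itemize}
\item (let) is two applications of the hypothesis, threading the difference through.
\item (pass) just adds $m-n$ to the pending save, and saves on $h$ commute with it.
\item The only genuinely new case is forcing $h$ itself (more generally, the lazy thunk at the end of $h$'s heir chain, where saves on $h$ actually land). That thunk has $d$ more credits in $\Gamma'$, and the inner runs start from identical heaps with heir $h'$. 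So the hypothesis gives results differing by a save of $d$ on $h'$, which (inherit) turns back into a save of $d$ on $h$.
\end{itemize}

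Your worry that $n_i \le m_i$ need not hold componentwise is legitimate; the paper's ``follows directly from the induction hypothesis'' passes over it. Inside the strengthened statement it is handled by letting $d$ be signed, i.e.\ allowing the pending save in either direction. That replaces your role-swapping and the persistence-based commutation of saves.
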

\begin{proof}
We strengthen the lemma to allow the starting heap to differ:

Let $\Gamma : e \Downarrow^{\{h\}}_n \Delta : w$
and $\Gamma' : e \Downarrow^{\{h\}}_m \Delta' : w'$
for $n \leq m$ and $\Gamma \sqsubseteq_{\save{k' - k}{h},k' - k,h} \Gamma'$.
Then $w = w'$ and $\Delta \sqsubseteq_{\save{k' - k + m - n}{h}, k' - k + m - n,h} \Delta'$.

By induction over the derivations.
All cases follow directly from the induction hypothesis,
except for the (pass), (save) and (force) rules.
\begin{itemize}
\bitem{(pass)} We have:
\begin{align*}
  \Gamma &\sqsubseteq_{\save{k' - k}{h},k' - k,h} \Gamma' \\
  \Gamma' &\sqsubseteq^{\{h\}}_{\pass{h},m,h} \Delta' \\
  \Gamma &\sqsubseteq^{\{h\}}_{\pass{h},n,h} \Delta
\end{align*}
  The save rule does not change the heap except for adding credits.
  This implies that $\save{\cdot}{h}$ credits the same cell on all heaps.
  We can combine the first two facts to obtain $\Gamma \sqsubseteq_{\save{k' - k + m}{h},k' - k + m,h} \Delta'$.
  By the last fact, thus $\Delta \sqsubseteq_{\save{k' - k + m - n}{h},k' - k + m - n,h} \Delta'$.
\bitem{(save)} Let $\save{m}{a}$. Then $n = m$ and $w = w'$ by the induction hypothesis.
  If $a \neq h$, the claim follows by induction.
  Let $a = h$. We have $\Gamma \sqsubseteq_{\save{k' - k}{h},k' - k,h} \Gamma'$ by the premise.
  Again, we can combine this with the (save) rule to obtain
  $\Delta \sqsubseteq_{\save{k' - k + m}{h},k' - k + m,h} \Delta'$
\bitem{(force)} Let $\force a$. Then $n = m$ and $w = w'$ by the induction hypothesis.
  If $a \neq h$, the claim follows by induction.
  Let $a = h$. We have $\Gamma \sqsubseteq_{\save{k' - k}{h},k' - k,h} \Gamma'$ by the premise.
  Let $a \mapsto_{k} \lazy_F\; v \in \Gamma$ and $a \mapsto_{k'} \lazy_F\; v \in \Gamma'$.
  By the induction hypothesis, we get
  $\Gamma : F\, v \Downarrow^{\{h'\}}_k \Delta : w$ and
  $\Gamma' : F\, v \Downarrow^{\{h'\}}_{k'} \Delta' : w$
  and $\Delta \sqsubseteq_{\save{k' - k}{h'},k' - k,h} \Delta'$.
  In the resulting heaps $\Delta$ and $\Delta'$, we now have $h \mapsto_{h'} \memo\;w$.
  By the (inherit) rule, we get $\Delta \sqsubseteq_{\save{k' - k}{h},k' - k,h} \Delta'$.
\end{itemize}
\end{proof}

\paragraph*{\cref{lem:credit-inheritance-persistent}}
\begin{proof}
We extend the proof of \cref{lem:credit-passing-persistent} for the new rules.
\begin{itemize}
\bitem{(pass)} Follows directly from the induction hypothesis on the premise.
\bitem{(inherit)} By the induction hypothesis on the premise, we have
  $\Delta \sqsubseteq_{\save{m}{h}, m, h} \Delta'$ and $\Gamma' \sqsubseteq \Delta'$.
  We then get $\Delta, a \mapsto_h \memo\; w \sqsubseteq_{\save{m}{a}, m, a} \Delta', a \mapsto_h \memo\; w$
  by the (inherit) rule. We have $\Gamma', a \mapsto_h \memo\; w \sqsubseteq \Delta', a \mapsto_h \memo\; w$
  by \cref{lem:real-weakening}.
\bitem{other cases:} By induction over $\Gamma \sqsubseteq \Delta$.
\begin{itemize}
\bitem{(save), [force]} Let $\save{m}{a}$ and $\force b$.
  If $a \neq b$, we proceed as in the previous proof.
  If $a = b$, then $\Gamma = \Gamma_1, a \mapsto_n \lazy_F\;v$
  and $\Delta = \Delta_1, a \mapsto_h \memo\;w$.
  Apply the (inherit) rule to obtain $\Delta \sqsubseteq_{\save{m}{a}, m, a} \Delta'$.
  By [force], we have $\Gamma_1 : F\, v \Downarrow^{\{h\}}_n \Delta_1 : w$.
  By \cref{lem:credit-inheritance-pass}, we have $\Gamma_1 : F\, v \Downarrow^{\{h\}}_{n+m} \Delta_1' : w$
  for $\Delta_1 \sqsubseteq_{\save{m}{a}, m, a} \Delta_1'$.
  Apply \cref{lem:real-weakening} to obtain $\Delta_1, a \mapsto_h \memo\; w \sqsubseteq_{\save{m}{a}, m, a} \Delta_1', a \mapsto_h \memo\; w$.
\begin{center}
\begin{tikzcd}[row sep=large, column sep=5em, every label/.append style={scale=1.3}]
\Gamma \arrow[d, "{\text{[force]}}"'] \arrow[r, "{\text{(save)}}"] & \Gamma' \arrow[d, dashed, "{\text{[force]}}"] \\
\Delta \arrow[r, dashed, "{\text{(inherit)}}"] & \Delta'
\end{tikzcd}
\end{center}
\bitem{(force), [save]} Let $\force a$ and $\save{m}{b}$.
  If $a \neq b$, we proceed as in the previous proof.
  If $a = b$, then we have
  $\Gamma = \Gamma_1, a \mapsto_n \lazy_F\; v$
  and $\Gamma' = \Gamma_1', a \mapsto_h \memo\; w$
  where $\Gamma_1 : F\, v \Downarrow^{\{h\}}_n \Gamma'_1 : w$.
  Thus also $\Gamma_1 : F\, v \Downarrow^{\{h\}}_{n+m} \Gamma_1'' : w$ and
  $\Gamma_1, a \mapsto_{n+m} \lazy_F\; v \Downarrow_1 \Gamma_1'', a \mapsto_h \memo\; w$
  by the (force) rule.
  By \cref{lem:credit-inheritance-pass}, we have $\Gamma_1' \sqsubseteq_{\save{m}{a}, m, a} \Gamma_1''$.
  By \cref{lem:real-weakening}, we get
  $\Gamma_1', a \mapsto_h \memo\; w \sqsubseteq_{\save{m}{a}, m, a} \Gamma_1'', a \mapsto_h \memo\; w$.
\begin{center}
\begin{tikzcd}[row sep=large, column sep=5em, every label/.append style={scale=1.3}]
\Gamma \arrow[d, "{\text{[save]}}"'] \arrow[r, "{\text{(force)}}"] & \Gamma' \arrow[d, dashed, "{\text{[save]}}"] \\
\Delta \arrow[r, dashed, "{\text{(force)}}"] & \Delta'
\end{tikzcd}
\end{center}
\bitem{(force), [force]} As in the proof of \cref{lem:real-persistent} since the $\Downarrow^{\{h\}}$ relation
  is deterministic for fixed step count as shown by \cref{lem:credit-inheritance-pass}.
\begin{center}
\begin{tikzcd}[row sep=large, column sep=5em, every label/.append style={scale=1.3}]
\Gamma \arrow[d, "{\text{[force]}}"'] \arrow[r, "{\text{(force)}}"] & \Gamma' \arrow[d, dashed, "{\text{[refl]}}"] \\
\Delta \arrow[r, dashed, "{\text{(recall)}}"] & \Delta'
\end{tikzcd}
\end{center}
\end{itemize}
\end{itemize}
\end{proof}

\paragraph*{\cref{lem:debit-persistent}}
\begin{proof}
We extend the proof of \cref{lem:real-persistent} for the new rules.
\begin{itemize}
\bitem{(lazy)} By the induction hypothesis on the premise, we have
  $\Delta \sqsubseteq_{F\, v, k'', k', w} \Delta'$ for $k'' \leq k$ and $\Gamma' \sqsubseteq \Delta'$.
  We then get $\Delta \sqsubseteq_{\lazy_F\; v, 0, k', a} \Delta', a \mapsto_{k''} \memo^{\Delta\setminus\Gamma}_{F\,v}\; w$
  by the (lazy) rule. We have $\Gamma', a \mapsto_{k''} \memo^{\Delta\setminus\Gamma}_{F\,v}\; w \sqsubseteq \Delta', a \mapsto_{k''} \memo^{\Delta\setminus\Gamma}_{F\,v}\; w$
  by the [trans] of $\Gamma' \sqsubseteq \Delta'$ and [save].
\begin{center}
\begin{tikzcd}[row sep=large, column sep=5em, every label/.append style={scale=1.3}]
\Gamma \arrow[d, "\sqsubseteq"'] \arrow[r, "{\text{(lazy)}}"] & \Gamma' \arrow[d, dashed, "\sqsubseteq {\text{ [trans] [save]}}"] \\
\Delta \arrow[r, dashed, "{\text{(lazy)}}"] & \Delta'
\end{tikzcd}
\end{center}
\item {other cases:} By induction over $\Gamma \sqsubseteq \Delta$.
\begin{itemize}
\bitem{\_, [lazy]} By the induction hypothesis we have 
  $\Delta \sqsubseteq_{e, l, k', v} \Delta'$ for some $l \leq k$.
  Since $a \notin \mathrm{dom}(\Delta')$, this implies by \cref{lem:real-weakening} that
  $\Delta, a \mapsto_{k''} \memo^{\Delta\setminus\Gamma}_{F\,v}\; w \sqsubseteq_{e, l, k', v} \Delta', a \mapsto_{k''} \memo^{\Delta\setminus\Gamma}_{F\,v}\; w$.
  We get $\Gamma' \sqsubseteq \Delta', a \mapsto_{k''} \memo^{\Delta\setminus\Gamma}_{F\,v}\; w$ by [lazy].
\begin{center}
\begin{tikzcd}[row sep=large, column sep=5em, every label/.append style={scale=1.3}]
\Gamma \arrow[d, "{\text{[lazy]}}"'] \arrow[r, "\sqsubseteq_{e, k, k', v}"] & \Gamma' \arrow[d, dashed, "{\text{[lazy]}}"] \\
\Delta \arrow[r, dashed, "\sqsubseteq_{e, l, k', v}"] & \Delta'
\end{tikzcd}
\end{center}

\bitem{((unfold), [save]), ((recall), [save]), ((waste), [save]), ((waste), [force]), ((unfold), [force]), ((recall), [force])}
  In all of these rules, the addresses they act on are different.
  This allows us to permute the rules as in the previous proofs.
\bitem{(save), [save]} Let $\save{m}{a}$ and $\save{m'}{b}$.
  If $a \neq b$, we proceed as in the previous cases.
  If $a = b$, then apply the [save] rule to obtain
  $\Delta, a \mapsto_{n - m'} \lazy_F\; v \sqsubseteq_{\save{m}{a}, 0, m, a} \Delta, a \mapsto_{n - m' - m} \lazy_F\; v$.
\begin{center}
\begin{tikzcd}[row sep=large, column sep=5em, every label/.append style={scale=1.3}]
\Gamma \arrow[d, "{\text{[save]}}"'] \arrow[r, "{\text{(save)}}"] & \Gamma' \arrow[d, dashed, "{\text{[save]}}"] \\
\Delta \arrow[r, dashed, "{\text{(save)}}"] & \Delta'
\end{tikzcd}
\end{center}
\bitem{(force), [save]} Let $\force a$ and $\save{m}{b}$.
  If $a \neq b$, we proceed as in the previous cases.
  If $a = b$, then apply the (force) rule to obtain
  $\Gamma, a \mapsto_{n - m} \memo^{\Delta}_{F\,v}\; w \sqsubseteq_{\force a, 0, 0, a} \Delta, a \mapsto \memo\; w$
  since $n \leq 0$ implies $n - m \leq 0$.
\begin{center}
\begin{tikzcd}[row sep=large, column sep=5em, every label/.append style={scale=1.3}]
\Gamma \arrow[d, "{\text{[save]}}"'] \arrow[r, "{\text{(force)}}"] & \Gamma' \arrow[d, dashed, "{\text{[refl]}}"] \\
\Delta \arrow[r, dashed, "{\text{(force)}}"] & \Delta'
\end{tikzcd}
\end{center}

\bitem{(save), [force]} Let $\save{m}{a}$ and $\force b$.
  If $a \neq b$, we proceed as in the previous cases.
  If $a = b$, then apply the (waste) rule to obtain
  $\Delta, a \mapsto \memo\; w \sqsubseteq_{\save{m}{a}, 0, m, a} \Delta, a \mapsto \memo\; w$.
\begin{center}
\begin{tikzcd}[row sep=large, column sep=5em, every label/.append style={scale=1.3}]
\Gamma \arrow[d, "{\text{[force]}}"'] \arrow[r, "{\text{(save)}}"] & \Gamma' \arrow[d, dashed, "{\text{[force]}}"] \\
\Delta \arrow[r, dashed, "{\text{(waste)}}"] & \Delta'
\end{tikzcd}
\end{center}
\bitem{(force), [force]} Let $\force a$ and $\force b$.
  If $a \neq b$, we proceed as in the previous cases.
  If $a = b$, then apply the (recall) rule to obtain
  $\Delta, a \mapsto \memo\; w \sqsubseteq_{\force a, 0, 0, a} \Delta, a \mapsto \memo\; w$.
  We have $\Delta, a \mapsto \memo\; w \sqsubseteq \Delta, a \mapsto \memo\; w$ by [refl].
\begin{center}
\begin{tikzcd}[row sep=large, column sep=5em, every label/.append style={scale=1.3}]
\Gamma \arrow[d, "{\text{[force]}}"'] \arrow[r, "{\text{(force)}}"] & \Gamma' \arrow[d, dashed, "{\text{[refl]}}"] \\
\Delta \arrow[r, dashed, "{\text{(recall)}}"] & \Delta'
\end{tikzcd}
\end{center}
\end{itemize}
\end{itemize}
\end{proof}

\paragraph*{\cref{lem:debit-inheritance-persistent}}
\begin{proof}
We extend the proof of \cref{lem:debit-persistent} for the new (lazy) rule.
\begin{itemize}
\bitem{(lazy)} By the induction hypothesis on the premise, we have
  $\Delta \sqsubseteq_{F\, v, (k_1 + k_2), k', w} \Delta'$ for $\Gamma' \sqsubseteq \Delta'$.
  We then get $\Delta \sqsubseteq_{\lazy_F\; v, k_1, k', a} \Delta', a \mapsto_{k_2} \memo^{\Delta\setminus\Gamma}_{F\,v}\; w$
  by the (lazy) rule. We have $\Gamma', a \mapsto_{k_2} \memo^{\Delta\setminus\Gamma}_{F\,v}\; w \sqsubseteq \Delta', a \mapsto_{k_2} \memo^{\Delta\setminus\Gamma}_{F\,v}\; w$
  by $\Gamma' \sqsubseteq \Delta'$ and \cref{lem:real-weakening}.
\begin{center}
\begin{tikzcd}[row sep=large, column sep=5em, every label/.append style={scale=1.3}]
\Gamma \arrow[d, "\sqsubseteq"'] \arrow[r, "{\text{(lazy)}}"] & \Gamma' \arrow[d, dashed, "\sqsubseteq"] \\
\Delta \arrow[r, dashed, "{\text{(lazy)}}"] & \Delta'
\end{tikzcd}
\end{center}
\bitem{\_, [lazy]} By the induction hypothesis we have 
  $\Delta \sqsubseteq_{e, k, k', v} \Delta'$.
  Since $a \notin \mathrm{dom}(\Delta')$, this implies by \cref{lem:real-weakening} that
  $\Delta, a \mapsto_{k_2} \memo^{\Delta\setminus\Gamma}_{F\,v}\; w \sqsubseteq_{e, k, k', v} \Delta', a \mapsto_{k_2} \memo^{\Delta\setminus\Gamma}_{F\,v}\; w$.
  We get $\Gamma' \sqsubseteq \Delta', a \mapsto_{k_2} \memo^{\Delta\setminus\Gamma}_{F\,v}\; w$ by [lazy].
\begin{center}
\begin{tikzcd}[row sep=large, column sep=5em, every label/.append style={scale=1.3}]
\Gamma \arrow[d, "{\text{[lazy]}}"'] \arrow[r, "\sqsubseteq_{e, k, k', v}"] & \Gamma' \arrow[d, dashed, "{\text{[lazy]}}"] \\
\Delta \arrow[r, dashed, "\sqsubseteq_{e, k, k', v}"] & \Delta'
\end{tikzcd}
\end{center}
\end{itemize}
\end{proof}

\end{document}